\def\diag{\textrm{diag}}
\newcommand{\DM}{X}
\def\sellX{\DM}
\def\val{\mathtt{val}}
\def\cost{\mathtt{cost}}
\newcommand{\IG}{\mathtt{V}}
\let\hat\widehat
\def\given{{\,|\,}}
\newcommand{\cA}{\mathcal{A}}
\newcommand{\cB}{\mathcal{B}}
\newcommand{\cD}{g}
\newcommand{\cN}{\mathcal{N}}
\newcommand{\cS}{{\mathcal{S}}}
\newcommand{\cX}{\mathcal{X}}
\newcommand{\cY}{\mathcal{Y}}
\newcommand{\EE}{\mathbb{E}}
\newcommand{\PP}{\mathbb{P}}
\newcommand{\E}{{\mathbb E}}
\newcommand{\argmin}{\mathop{\mathrm{argmin}}}
\newcommand{\argmax}{\mathop{\mathrm{argmax}}}
\newcommand{\Tr}{\mathop{\text{tr}}\kern.2ex}
\DeclareMathOperator{\Var}{{\rm Var}}
\DeclareMathOperator{\ind}{\mathds{1}}  
\newcommand{\rbr}[1]{\left(#1\right)}
\newcommand{\sbr}[1]{\left[#1\right]}
\newcommand{\nbr}[1]{\left\|#1\right\|}
\newcommand{\abr}[1]{\left|#1\right|}
\newcommand{\RR}{\mathbb{R}}
\newtheorem{theorem}{Theorem} 
\newtheorem{corollary}{Corollary}
\newtheorem{lemma}{Lemma}
\newtheorem{proposition}{Proposition}
\newtheorem{definition}{Definition}
\newtheorem{remark}{Remark}
\newtheorem{fact}[theorem]{Fact}
\newtheorem{example}{Example}
\newcommand{\hf}[1]{{\color{blue}  [\text{Haifeng:} #1]}}
\title{An Instrumental Value for Data Production and its Application to Data Pricing}
\author{
Rui Ai\thanks{MIT. Email: \texttt{ruiai@mit.edu.}}
\and Boxiang Lyu \thanks{The University of Chicago. Email: 
\texttt{blyu@chicagobooth.edu.}}
\and Zhaoran Wang \thanks{Northwestern University. Email: \texttt{zhaoranwang@gmail.com.}}
\and Zhuoran Yang \thanks{Yale University. Email: \texttt{zhuoran.yang@yale.edu.}}
\and Haifeng Xu \thanks{The University of Chicago. Email: \texttt{haifengxu@uchicago.edu.}}
}
\begin{document}
\maketitle
\setlength{\epigraphwidth}{0.83\textwidth}
\epigraph{``The instrumental efficiency of the economic process is the criterion of judgment in terms of which, and only in terms of which, we may resolve economic problems.''~\citep{foster1981relation}}{--- J. Fagg Foster}

\begin{abstract}
How much value does a dataset or a data production process have to an agent who wishes to use the data to assist decision-making? This is a fundamental question towards understanding the value of data as well as further pricing of data.  This paper develops an approach for capturing the \emph{instrumental value} of data production processes, which takes two key factors into account: (a) the \emph{context} of the agent's decision-making problem; (b) prior data or information the agent already possesses. We ``micro-found''  our valuation concepts by showing how they connect to classic notions of information design and signals in information economics. When instantiated in the domain of Bayesian linear regression, our value naturally corresponds to information gain. 
Based on our designed data value, we then study a basic monopoly pricing setting with a buyer looking to purchase from a seller some labeled data of a certain feature direction in order to improve a Bayesian regression model. 
We show that when the seller has the ability to fully customize any data request, she can extract the first-best revenue (i.e., full surplus) from any population of buyers, i.e., achieving \emph{first-degree price discrimination}. If the seller can only sell data that are derived from an existing data pool,  this limits her ability to customize, and achieving first-best revenue becomes generally impossible.  However, we design a mechanism that achieves seller revenue at most $\log (\kappa)$ less than the first-best revenue, where $\kappa$ is the condition number associated with the data matrix. A corollary of this result is that the seller can extract the first-best revenue in the \emph{multi-armed bandits} special case.

\end{abstract}
\newpage 
\section{Introduction}
The trade of data forms an ever-growing segment in today's digital economy. Advances in machine learning motivate companies to apply data-driven approaches to solve a growing variety of problems, ranging from personalized marketing~\citep{goldsmith2004have} to data-hungry applications such as natural language processing~\citep{kang2020natural} and self-driving~\citep{ni2020survey}. These applications highlight the demand for high-quality labeled data, a demand that a growing number of data sellers wish to meet.
Start-up companies such as Scale AI, Parallel Domain, Dataloop, and V7 compete with established companies, such as 
Acxiom, Oracle, and Nielsen in this growing arena, underlying the importance of formal analyses of buying and selling data.

Like physical goods and systems, data has both instrumental value and intrinsic value at the same time. For example,  New York Times (NYT) as a news source  has its own intrinsic value containing major up-to-date news. However, serving as a Corpus for a particular machine learning task like retrieval augmented generation (RAG) \cite{lewis2020retrieval}, its value towards improving the performance of the task  is instrumental.  The fundamental difference between instrumental vs. intrinsic value is widely studied  in philosophy. Citing the widely accepted definitions from   \citet{beardsley1965intrinsic}:  \emph{$X$ has instrumental value} means \emph{$X$ is conducive to
something that has intrinsic value}. In the above example of NYT data, its instrumental value to a RAG task depends on the extra values it adds and, crucially, depends on what data the buyer already has. For instance, when the RAG system already  has the data from  other substitutable news sources (e.g., the Washington Post, Wall Street Journal, etc.),  NYT data's instrumental value  will become smaller, despite holding the same intrinsic value. This is also the reason that most economic activities, including the sale of data, are based on items' instrumental values, as  illustrated by the quote of \citep{foster1981relation}  at the beginning of this paper.  
To our knowledge, there is no notion of such instrumental data value in current literature; this is what we embark on studying in this work. 

Data sold by commercial companies fall into two situations, which we dub \emph{perfect customization} and \emph{limited customization}.  Perfect customization can offer data labels for any feature directions specified by the buyer, whereas limited customization can only curate data based on some pre-collected dataset hence has limited flexibility. For example, a company that wishes to develop a model for image classification can  either 
 
\begin{itemize}
    \item \textbf{[perfect customization]} submit its most desirable set of unlabelled images and buy labels for these images from the data seller, or
    \item \textbf{[limited customization]} purchase a labeled dataset of useful though possibly not the most desirable images.
\end{itemize}

\subsection{Our Contributions} 

{\noindent \bf A new notion for data's instrumental value. } We first introduce a general framework to quantify the instrumental value of data which closely relates to a generalized notion of Bregman Divergence, inspired by \citet{frankel2019quantifying}. Our notion rests on a deep duality relation between the measure of parameter uncertainty and the measure of data value.  We show that, when instantiated in Bayesian linear regression, our value corresponds precisely to the natural notion of information gain of the new data, relative to existing data. {This special form of data instrumental value offers a useful basis for practitioners since the Bayesian linear model is a    widely studied foundational model, and has been proven to be a good approximation when we do not have specific knowledge about the underlying model structure. Quoting \citet{besbes2015surprising}, \emph{linearity offers the greatest robustness under misspecification.} Following this spirit, we use this particular valuation form to demonstrate our insights.}  Notably, our instrumental data value crucially differs from Data Shapley \cite{ghorbani2019data} and its variants.   Data Shapley captures a data point's \emph{expected} contribution to an ML task, whereas our data value captures a data set's \emph{marginal} contribution beyond existing data, hence more naturally captures the data set's economic value. Another strong advanatage is that  our new data valuation can be computed with only one model retraining, hence is much more practical to compute in reality (see more detailed discussions in \Cref{subsec:shapley}).   


{

{\noindent \bf The power of perfect data customization.} With the introduced measure of data instrumental value, we next turn to its application to selling data. When the seller  can  customize data production for buyers, we show that there exists a pricing mechanism that  achieves  first-best revenue; that is, it extracts full welfare hence    leaves no surplus for   buyers. This result highlights a fundamental difference between selling data and selling physical goods. Specifically, when selling goods, the power of customization is limited, at most through randomized allocation, leadind to   buyers' utility linearity (in allocation probabilities). However, for selling data, there is significantly more power of customization by producing various (high-dimensional and non-linear) derivatives of data. Such rich ``data allocation'' space is the fundamental reason for the first-best revenue extraction. It also hints at the potential concerns of highly-screwed (towards seller) surplus distribution on data markets and calls for future government regulation. 

{\noindent \bf Approximate optimality for selling data under limited customization.} When the seller has limited customization ability and can only curate data derivatives from a pre-collected dataset, we exhibit a novel data-selling mechanism based on the SVD decomposition of the data matrix and show that it almost extracts full welfare, up to a small constant which is the condition number of the data matrix of the to-be-sold  data set. Therefore, under this mechanism,  buyers can only enjoy at most a constant amount of surplus.  

\subsection{Related Works}\label{app:related}

{\bf Valuating data and information: }The paper is related to existing works for valuating data and information. The existing literature has been extensively focused on Shapley value (or variants thereof) for valuing data \citep{schoch2022cs, ghorbani2019data, jia2019towards}, which is a principled way to ``fairly'' attribute \cite{roth1988introduction} the contribution of each data point/set to an ML task. This is why the Shapley value is an average contribution, averaged over all other possible training situations. Later works also realize that equal weight average may not be ideal for ML tasks hence proposed variants of Data Shapley that prioritize some training situation in the average (e.g., beta-Shapley \cite{kwon2021beta}). In contrast, our data value is instrumental and is to quantify a dataset's \emph{marginal} contribution. Hence the meaning of this value is not to average over many possible training situations but instead look at a marginal increase over the previous best training situation. 
 

{\bf Selling data and information: }For existing works on selling data, \citet{segura2021selling} studies selling data to a buyer who wishes to minimize the quadratic loss of an estimate. {The biggest difference that separates our work from this work is the buyer's private type. In our case, we assume the features of the observations that the buyer wishes to obtain labels for are unknown to the seller. On the other hand,~\citet{segura2021selling} assumes the buyer's features are known a priori, but the type of label the buyer wishes to estimate is the unknown private type. For instance, if a production company purchases data from Nielsen, our work corresponds to the setting that the TV show the company is producing is its private type and is unknown to Nielsen, whereas~\citet{segura2021selling} assumes the TV show itself is known, but Nielsen does not know if the production company wants to predict the show's ratings or the show's audience demographics. The difference in the information structure leads to fundamental differences between our work and theirs, both in terms of the proposed mechanisms and in terms of the theoretical analyses. }\citet{chen2022selling} studies the problem of selling data to a machine learner, using a costly signaling step to ensure that the buyer can accurately evaluate the quality of the data purchased. {Our work eschews the step but leverages the properties of instrumental value in the Bayesian linear regression setting.~\citet{chen2019towards} values the data based on the model that the buyer learns from the data.~\citet{agarwal2019marketplace} studies a two-sided data market with multiple data buyers and sellers, in which the buyers compete with one another by bidding for data. The latter two works, however, assume that the buyer's learned model is observable by the seller, which avoids the complication introduced by the private type considered in our work.}

Our work is also related to the long line of work on selling information~\citep{babaioff2012optimal, horner2016selling, admati1988selling, kastl2018selling, xiang2013buying, bergemann2015selling,liu2021optimal} and we refer interested readers to~\citet{bergemann2019markets} for a survey of these works. Of these works,~\citet{babaioff2012optimal} studies the optimal one-round revelation mechanisms for selling signals, assuming that the buyer's type space is discrete and finite, and~\citet{chen2020selling} extends the problem setting to one where the buyer also has finite budget.~\citet{bergemann2018design} and later work~\citet{bergemann2022selling} focuses on the optimal sales of Blackwell experiments under the assumption that the buyer's type space is finite. \citet{hartline2020benchmark,hartline2021lower} study optimal type-prior independent approximation multiplicative factor whereas we focus on corresponding addictive factor and pave the way for prior-independent mechanism design in an addictive manner.

\section{An Instrumental Value of  Data Production and its Micro-foundation}\label{sec:valueofdata}
In this section, we introduce a new way to quantify the instrumental value of a \emph{Data Production Process} (DPP), which captures a DPP's expected value to a certain task. Its formulation is fundamentally different from the widely studied \emph{Data Shapley}~\citep{roth1988introduction,ghorbani2019data,jia2019towards}, which quantifies the value of a realized dataset hence can be evaluated only by accessing the realized data. In contrast, our developed value for DPPs can be evaluated with just the knowledge about how the data are produced but \emph{without} the need of truly accessing the realized data.  
Additionally, the developed valuation function is \emph{relative} to prior information as well as the user's context, hence it is possible in our definition that high-quality data may have low value to certain user if the user already has sufficient prior knowledge or if the user cares about a context that is not reflected in the data. 

Our new data valuation function is rooted in a few classic subjects, including Bayesian regression, decision-making and the economic value of information, which we now elaborate on. 
\subsection{The Context:  Bayesian Regression and Data Production}
Our valuation function for DPP is based on the context of a fundamental \emph{statistical} learning problem  --- i.e., estimating parameters of a generic regression problem \[y = f_{\beta}(x) + \epsilon,\] where $f_{\beta}(\cdot)$ is a function parameterized by $\beta$. Here, $x \in \RR^d$ is the feature vector and $\epsilon$ is some zero-mean random noise,  capturing the measuring error. Let tuple $(x,y)$  denote a generic data point whereas $\{ (x_i, y_i)\}_{i=1}^n$ denote a dataset with $n$ points. 
A core challenge in studying the value of a data production procedure\footnote{Data production has the same meaning as data generation or curation. We choose to use the term ``production'' mainly to emphasize the \emph{active} generation of data, especially for the economic purpose of data sales. } is that the value needs to be well-defined not only for the process of simply producing data records $\{ (x_i, y_i)\}_{i=1}^n$, but also for any \emph{post-processing} of the data. For instance, instead of directly giving away all data records $\{ (x_i, y_i)\}_{i=1}^n$, the seller could also just produce a single averaged feature $\bar{x} = \frac{1}{n}  \sum_{i=1}^n  x_i$ and its corresponding label $\bar{y} = \frac{1}{n}  \sum_{i=1}^n y_i$. More general statistics could include weighted data combinations, some moments of the data, or even beyond, which all carry information about the underlying parameter $\beta$. 
This motivates us to consider the following general notion of a \emph{data production process}. 
\begin{definition}[Data Production Process (DPP)]
A data production process (DPP) is described by a data generating distribution $g_{\beta}(D)$, which produces realized data $D$ with probability $g_{\beta}(D)$ under parameter $\beta$. When $\beta$ is clear from context, we simply use   $\cD$ to denote this DPP. 
\end{definition}
\begin{example}[Examples of DPPs]\label{ex:process-data-value}
Consider a linear model $y=\langle x,\beta\rangle+\epsilon$ with uninformative Gaussian prior $\beta\sim\cN(0,I_2)$. 
A learner would like to purchase data to predict $\langle x,\beta\rangle$ for $x = ( 1, 1)$. A simple example of DPPs --- described by two feature directions $x_1 = ( 1, 3), x_2 = (3, 1)$ --- simply produces response variables via model $y=\langle x,\beta\rangle+\epsilon$ along feature direction $x_1, x_2$: for example, $y_1 =  -1$ and $  y_2 = 1$. In this case, $x_1,x_2$ specified a DPP whereas the realized $y_1,y_2$ are the data $D$ with data generating distribution $g_{\beta}(\{ y_1, y_2 \}) = \mathcal{E}(y_1 - \langle x_1,\beta\rangle ) \mathcal{E}(y_2 - \langle x_2,\beta\rangle )$, where $\mathcal{E}(\cdot)$ is the noise distribution.         

Besides directly producing data  $ y_1, y_2$ for directions $x_1, x_2$, one could also produce the data $\bar{y} $ for feature direction  $ \bar{x} =\frac{x_1 + x_2}{2}$ using DPP, which illustrates the rich space of DPPs. Perhaps more interestingly, we could also design a DPP that produces data  $ y^{\perp} =  y_1 - y_2 $ for direction $x^{\perp} =  x_1 - x_2$. This DPP is of no interest to the learner above since $\bar{x}  $ is orthogonal to the learner's interested direction $ x$; Nevertheless, the realized $y^{\perp}$ does carry information about $\beta$ and hence may be of interest to other users with a different context $x'$. Such careful curation of DPPs is a salient feature of data sale that is intrinsically different from classic pricing problems of physical goods \citep{myerson1981optimal}.  As we show later, it gives data sellers the power to tailor their data production to be useful only for one particular buyer but of little use to others, hence increasing sellers' power of \emph{price discrimination}. 
\end{example}
In principle, a DPP $\cD$ can be an arbitrary data production process and does not even need to be related to the underlying regression model $y = f_\beta(x) + \epsilon$ so long as it carries information about $\beta$ (e.g., a fully informative DPP could even directly reveal $\beta$). However, throughout this paper and similar to Example \ref{ex:process-data-value}, we will mostly think of $\cD$ as being curated from certain data records     $\{ (x_i, y_i)\}_{i=1}^n$, for example, either as the full or partial list of  $\{ (x_i, y_i)\}_{i=1}^n$ or as some statistics (e.g., mean, variance, moments, etc.) computed using these data records. This more realistically captures how data are generated and stored in most ML problems, and also is more intuitive to think about. In these situations,  the randomness of data production process $\cD$   inherits from the model's noise $\epsilon$ and possibly extra randomness the data owner may add (see Example \ref{app_ex:process-data-value} for more details). 
Our valuation will be able to capture the value of all these variants. Throughout the paper, we assume the data generating distribution $g_{\beta}(D)$ is publicly known (though $\beta$ is unknown); the DPPs in Example \ref{ex:process-data-value} are all described by varied feature directions or their functions.  

As in standard Bayesian regression, we assume the data buyer possesses a prior distribution $q$ over the parameter $\beta$ and aims to further improve his prior $q$  by purchasing additional data. 
In reality, $q$ can be a strong prior estimated from much data that the buyer already has, or can be a weak prior as an uninformative distribution. 
In either situation, additional data can help to refine the prior $q$, forming posterior $p$. 
Hence a data buyer can use the realized data $D$ produced by DPP $\cD$ to update his belief about $\beta$ as follows\footnote{This is also known as the convolution of the parameter distribution $q(\beta)$ and model $g_{\beta}$.  }
\begin{equation}\label{eq:data-dist}
\text{Distribution of data: }     \PP(D) = \int_{\beta}  q(\beta) \cdot g_{\beta}( D) d \beta = \cD \circ q  \\ 
\end{equation}
\quad\quad\  Posterior updates from realized data $D$:
\begin{equation}
   \label{eq:posterior-update}
 \quad p(\beta \given  q,   D ) \propto  q(\beta) \cdot g_{\beta}( D) \text{ for each } \beta.  
\end{equation}

For instance, how the data producing distribution $g_{\beta}(  D)$ and posterior update $p(\beta \given  q,  D )$ leads to variance reduction in \Cref{ex:process-data-value} can be calculated as follows. 
 \begin{example}[\Cref{ex:process-data-value} Continued] 
  If we produce data $\bar y$, the variance of the to-be-predicted variable $\langle x,\beta\rangle$ reduces from 2 to $\frac{2}{17}$, same as fully revealing original data records $\{ (x_1,y_1), (x_2, y_2) \}$.
However, if we only reveal $(x^\perp,y^\perp)$, the variance remains 2. Notably, these improvements can all be calculated without knowing the realized data. This will be useful for selling DPPs since it means the value of a DPP can be quantified without seeing the realized data. This resolves the concern that data will lose its value after being seen, say the seller can reveal $\bar x$ or $x^\perp$ for advertisement. 
 \end{example}

For brevity, we refer to the posterior as  $p$ when $q,  D $ are clear from the context. We sometimes consider the situation with null data production, i.e., $\cD = \emptyset $, where no Bayes update happens and $p(\cdot  \given q,  \emptyset ) = q$. 

Finally, for our theory, we need a natural generalization of Bregman divergence to \emph{concave functionals} over (continuous) distributions in order to accommodate the continuous parameter space for $\beta$~\citep{cilingir2020deep}.  
\begin{definition}[Concave Functionals and Generalized Bregman Divergence (cf. \Cref{app:def:breg})] 
    \label{defn:bregman-divergence}
  We say \emph{functional} $F(\cdot)$ is concave if for any $\lambda \in [0, 1]$,  we have 
  $F(\lambda p + (1 - \lambda) q) \geq \lambda F(p) + (1 - \lambda)F(q)$. 
  Moreover, we say $\nabla F(\cdot)$ is its (functional) superdifferential if \[
        F(q) + \langle \nabla F(q), (p - q) \rangle \geq F(p), 
\] for any probability distributions $p$ and $q$.
    The generalized Bregman divergence induced by functional $F$ is defined as
\[
        D_F(p, q) = F(q) - F(p) + \langle \nabla F(q),  (p - q) \rangle.
\]
\end{definition}

\subsection{The Value for a Data Production Process and its Micro-Foundation}  
Throughout, we assume the Bayesian regression model $y = f_\beta(x) + \epsilon$ and any designed DPP $\cD$ are publicly known. Our goal is to develop and characterize a class of valuation function that captures the instrumental value of  $\cD$ to a particular task. 

Our proposed valuation is rooted in a fundamental micro-economic problem of Bayesian decision-making (BDM). Recall that a standard BDM problem is described by a utility function $u: \cA \times \cY \to \mathbb{R}$ where $u(a, y)$ is the utility under \emph{action} $a \in \cA$ and a random \emph{state of the world} $y\in \cY$ that captures uncertainty in the decision making. As an example, in a simple production decision problem, $u(a,y)$ could be $  a y - a^2$ where $a$ is the number of products to be produced, $y$ is the price of the products (often a random variable that needs to be predicted), and $a^2$ captures the production costs for producing $a$ amount. Our following definition of a contextual variant of the above problem simply says that the random variable $y$ can be context-dependent.
\begin{definition}[Contextual Bayesian Decision Making (CBDM)]
A CBDM problem is a tuple $\langle u, y[x] = f_\beta(x) + \epsilon  \rangle$  where utility function $u: \cA \times \cY \to \mathbb{R}$ represents a standard BDM problem and $y[x]$ follows a distribution specified by \emph{context} $x$ under response model $y =f_\beta(x) + \epsilon$.   
\end{definition}
Note that CBDM simply enriches standard BDM by allowing the decision to depend on some particular context $x$, which then allows the decision maker to have more fine-grained distribution estimation $y[x]$ of the random state $y$. For instance, still in the above production decision problem, the context $x$ could be the demographic feature of a certain population, and $y[x]$ is the to-be-predicted price targeting this particular population. Similarly, we may predict $\E[y[x]]=f_\beta(x)$ for some other tasks in an average manner.

Knowing the context $x$, the data buyer can improve his decision-making by purchasing data produced by DPP $\cD$ to refine his estimation of $\beta$ (hence refine the prediction of $y$) from his prior belief $q(\beta)$ to posterior $p(\beta| D, q)$. In a fully Bayesian world, such data should always increase the decision maker's utility and this utility increase in expectation is a natural candidate for the value of the DPP $\cD$ for the given CBDM problem. This motivates us to study the following valuation function of data.

\begin{definition}[Valuation Functions of a DPP]\label{def:values}
Suppose the regression model $y = f_\beta(x)+\epsilon$ and data production process $g_\beta(D)$ are public. Then a valuation function for realized data $D$ has format $ \val( D; q, x) $ that depends on the buyer's prior belief $q$   about $\beta$ and the decision context $x$. 

Consequently, \[\IG(\cD; q, x) = \EE_{D \sim \cD \circ q} \val( D; q, x) \] is called the \emph{instrumental valuation function} for DPP $\cD$. Moreover, we call \[\cost(q, x) = \val(\vee; q, x)\] the \emph{cost of uncertainty} where $g = \vee$ denotes the fully informative DPP  that directly produces data $D = \beta$ (hence before seeing the data, the belief about the to-be-seen $D$'s distribution is the prior $q = g \circ q$).
\end{definition}
A few remarks are worthwhile to mention about \Cref{def:values}.  For simplicity, we wish to minimize the valuation functions $\val$'s dependence on various quantities. However, its dependence on $q, x$ is essential. First, $\val$'s dependence on prior $q$ is natural because what the buyer originally knows affects the value of newly acquired data.  More valuable data are those that significantly shifted the buyer's prior whereas the data that does not change it much is less valuable. Second, the dependence on the decision context $x$ is also natural since if the data isn't informative to the particular decision context $x$, it won't have much value for the buyer regardless of how informative it may be to estimate the directions of $\beta$ that is not useful for improving prediction of $f_\beta(x)$.\footnote{For instance, if $f_\beta(x) = \langle x,\beta\rangle$, then accurate estimation of $\beta$ alone any subspace orthogonal to $x$ will not be useful for the buyer's decision-making with context $x$. } Finally, the $\cost$ function is simply the value of the most informative DPP which helps the buyer to pin down parameter $\beta$ precisely so that the only uncertainty in his decision-making is the inevitable noise $\epsilon$ from nature. 


Now that we proposed a format of the valuation function  $\IG(\cD; q, x)$. However, obviously,  not all such functions should be considered ``valid'' valuation functions. For instance, negative valued ones or those which decrease as $\cD$ become more informative are clearly not reasonable candidates for data valuation. Hence the true scientific question is what kind of functions $\IG(\cD; q, x)$ could be considered as ``valid'' data valuation functions. To answer this question, we resort to Bayesian decision-making in order to ``microfound'' valid  $\IG(\cD; q, x)$s. 

\begin{definition}\label{defn:val-info-cost-uncertainty}[\textbf{Valid} Value Functions and Cost of Uncertainty]
A valuation function  $\IG(\cD; q, x) = \EE_{D \sim \cD \circ q} \val( D; q, x) $ is \emph{valid} if and only if its corresponding $\val(D; q, x)$ function can be expressed  as follows for some contextual Bayesian decision-making problem    $\langle u, y[x] = f_\beta(x) + \epsilon \rangle $:
  \begin{equation}\label{eq:def:val-data}
        \textstyle
        \val( D; q, x)  = \EE_{ \beta \sim p(\cdot| q, D) \to y \sim  f_{\beta}(x) + \epsilon  }    \bigg[  u(a^*(p);  y )  -  u(a^*(q); y) \bigg], 
    \end{equation}
where 
\begin{itemize}
    \item $\beta \sim p(\cdot| q, D) \to y \sim  f_{\beta}(x) + \epsilon$ denote a process of producing $y$ by first drawing $\beta \sim p$ and then drawing $y \sim f_{\beta}(x) + \epsilon $;
    \item $a^*(q)$ is the optimal action $a$ when the decision maker's distribution belief about parameter $\beta$ is  $q$, or formally 
 \emph{optimal action under parameter distribution $q$:}  
 \[a^*(q) = \arg \max_{a \in \cA} \EE_{ \beta \sim q \to y \sim  f_{\beta}(x) + \epsilon   } u(a, y).\]
\end{itemize}
Analogously, we can define coupled valid  \emph{cost of uncertainty} function and we postpone it to \Cref{app:value_res} for interested readers.


\end{definition}



In other words, a valuation function  $\IG(\cD; q, x)$ is \emph{valid} if there exists a CBDM problem   $\langle u, y[x] = f_\beta(x) + \epsilon \rangle $ that ``microfounds'' it in the sense that its corresponding $\val(D; q, x)$   can be written as the utility difference between the more informed  (by extra data $D$)  optimal action $a^*(p)$ and original uninformed action $a^*(q)$, in expectation over the randomness of the state $y = f_{\beta}(x) + \epsilon$ where parameter $\beta$ is generated by posterior $p$. In a fully Bayesian world, this is precisely how much the realized data $D$ helps to increase the decision maker's expected utility.





\Cref{defn:val-info-cost-uncertainty} captures valid data valuation functions from micro-economic perspective. However, these definitions are not very helpful for us to verify whether any given $\IG, \val, \cost$ functions are valid since it may be generally difficult to uncover the underlying CBDM problem  $\langle u, y[x] = f_\beta(x) + \epsilon \rangle $. Next, we offer some results that offer alternative yet \emph{equivalent} characterizations of these functions as well as their connections. These characterizations are properties of these functions themselves, hence are much easier to verify. We leave analogous proposition for $\cost(\cdot,\cdot)$ to \Cref{thm:cost-of-uncertainty} in \Cref{app:value_res}.

\begin{proposition}[Characterization of Valid Valuation Functions (see \Cref{app_thm:val-of-data} for details)]
    \label{thm:val-of-data}
A valuation function $\val$ is valid   if and only if it satisfies the following properties  simultaneously: 

(1)
\textbf{No value for null data}, 

(2) \textbf{Positivity} and 

(3) \textbf{Invariance to data acquisition orders}.\footnote{It means the total expected value of data is invariant to the order of data acquisition.}
\end{proposition}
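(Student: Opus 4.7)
The plan is to establish the two implications separately, with the main technical difficulty lying in the reverse direction.

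\emph{Forward direction (valid $\Rightarrow$ three properties).} I would unfold \eqref{eq:def:val-data} and introduce the value functional $V(q;x) := \EE_{\beta \sim q \to y \sim f_\beta(x)+\epsilon}[u(a^*(q), y)]$, the expected utility of the optimal action under belief $q$. A standard envelope argument gives $\val(D; q, x) = V(p; x) - \EE_{y \mid p}[u(a^*(q), y)]$, and taking expectation over $D \sim \cD \circ q$ combined with the tower property yields the clean form $\IG(\cD; q, x) = \EE_D[V(p; x)] - V(q; x)$. From this representation the three properties follow in short order: \textbf{No value for null data} because $D=\emptyset$ gives $p=q$; \textbf{Positivity} because $V$ is convex as a pointwise maximum of linear functionals, so Jensen gives $\EE_D[V(p)] \geq V(q)$; and \textbf{Invariance to data acquisition orders} because the telescoping $V(q) \to V(p_1) \to V(p_{12})$ depends only on the final (joint) posterior, which is order-independent by Bayes' rule.

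\emph{Reverse direction (three properties $\Rightarrow$ valid).} The plan is to reverse-engineer a CBDM from the given $\val$. First, use \textbf{Invariance to data acquisition orders} to define a candidate value functional $V(q;x)$ on the set of posterior beliefs reachable from the prior: up to an additive constant, set $V(q;x) := -\cost(q;x)$, i.e., (minus) the value of acquiring a fully informative DPP from $q$. Order invariance is precisely what makes this definition path-independent and hence well-defined. Second, apply \textbf{Positivity} to an arbitrary DPP decomposed via the tower property: this yields the Bayes-plausible Jensen inequality $\EE_D[V(p)] \geq V(q)$ for every DPP, which is equivalent to functional convexity of $V$ in the sense of \Cref{defn:bregman-divergence}. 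Finally, \textbf{No value for null data} pins down the additive normalization and guarantees that the identity action is optimal under the prior in the CBDM that we construct next.

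\emph{Constructing the CBDM from $V$.} Given a convex $V$, I would invoke the classical correspondence between convex functionals on distributions and strictly proper scoring rules: take $\cA$ to be the space of distributions over $y$ and let $u(a,y)$ be a proper scoring rule whose expected score satisfies $\max_{a \in \cA} \EE_{y \sim P}[u(a,y)] = V(P;x)$ with maximizer $a^*(P)=P$. Substituting this $u$ back into \eqref{eq:def:val-data} and applying the envelope identity recovers the given $\val$, establishing validity. The hardest step in this plan is extending the convexity of $V$ from the Bayes-reachable set of posteriors to \emph{all} distributions on $\beta$ as required by \Cref{defn:bregman-divergence}; I would handle this by a density/continuity argument exploiting the richness of DPPs, namely that for any prior $q$ with suitable support, the set of posteriors attainable by some DPP is dense in the ambient distribution space, so convexity extends to the closure.
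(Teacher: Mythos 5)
Your plan follows essentially the same route as the paper: the forward direction via the coupling identity $\IG(\cD;q,x)=\EE_D[V(p;x)]-V(q;x)$ and Bayes plausibility, and the reverse direction by defining the uncertainty functional through the fully informative DPP, extracting concavity from positivity, and realizing the CBDM with a proper scoring rule (the paper's $u(a;y)=H(a)-\langle\nabla H(a),a\rangle+\langle\nabla H(a),\delta_y\rangle$). Two small notes: positivity as stated is the pointwise claim $\val(D;q,x)\ge 0$ for each realized $D$, which follows directly from the optimality of $a^*(p)$ under $p$ rather than from Jensen; and the extension of concavity beyond Bayes-reachable posteriors needs no density argument, since any $q_1,q_2$ entering a convex combination $q$ are automatically absolutely continuous with respect to $q$ and hence inducible by a signal from $q$.
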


Our first main result establishes a connection between $\val$ (hence also $\IG$) and $\cost$.

\begin{theorem}
   \label{thm:bregman-divergence}
 A valid data valuation $\val(D; q, x)$ and cost of uncertainty $ \cost(q, x)$  are coupled if and only if $\val$ is a generalized Bregman divergence (\Cref{defn:bregman-divergence}) of $\cost$ in the following sense 
   \begin{equation*}
       \textstyle
       \val(D;   q, x) = \cost(q, x) - \cost(p, x) + \langle \nabla \cost(q, x), p - q\rangle,
   \end{equation*}
   where $p(\cdot | q, D)$ defined in Equation \eqref{eq:posterior-update}   is the posterior of $\beta$   and $\nabla \cost(\cdot)$ any superdifferential of $\cost$  defined in \Cref{defn:bregman-divergence}.
\end{theorem}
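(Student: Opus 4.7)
The plan is to identify the underlying Bayesian value functional of the CBDM problem that simultaneously microfounds $\cost$ and $\val$ (which is what ``coupled'' means in Definition \ref{defn:val-info-cost-uncertainty}), and then recognize $\val$ as a Bregman divergence of that value functional. Because $\cost$ differs from (minus) this value functional only by a term that is linear in the belief $q$, the same Bregman identity will transfer onto $\cost$, delivering exactly the concave generalized Bregman formula claimed in the theorem.

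First I would introduce the Bayesian value functional
$$
V(q, x) \;=\; \max_{a \in \cA}\; \EE_{\beta \sim q,\, y \sim f_\beta(x)+\epsilon}\big[u(a,y)\big],
$$
and let $\phi_q(\beta) = \EE_{y \sim f_\beta(x)+\epsilon}[u(a^*(q),y)]$ denote the expected utility of the prior-optimal action conditional on $\beta$. Then $V(q,x) = \langle \phi_q, q\rangle$ and, as a pointwise maximum of functionals that are linear in $q$, $V(\cdot,x)$ is convex in $q$. Moreover, for any distribution $p$, optimality of $a^*(q)$ at $q$ and feasibility of $a^*(q)$ at $p$ give $V(p,x) \geq \langle \phi_q, p\rangle$; combined with $V(q,x) = \langle \phi_q, q\rangle$, this certifies $\phi_q$ as a functional subgradient of $V$ at $q$.

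Next I would rewrite $\val$ and $\cost$ using $V$. Starting from Definition \ref{defn:val-info-cost-uncertainty} with $p = p(\cdot \given q, D)$,
$$
\val(D; q, x) \;=\; V(p,x) - \langle \phi_q, p\rangle \;=\; V(p,x) - V(q,x) - \langle \phi_q,\, p - q\rangle,
$$
which is precisely the Bregman divergence of the convex $V$ at $(p,q)$. For the cost of uncertainty, the fully informative DPP $\vee$ produces $D = \beta$ and collapses the posterior to a point mass, so
$$
\cost(q, x) \;=\; L(q,x) - V(q,x), \qquad L(q,x) \;=\; \int_\beta q(\beta)\, \max_{a \in \cA} \EE_{y}\!\left[u(a,y)\,\big|\,\beta, x\right] d\beta,
$$
and $L(\cdot, x)$ is linear in $q$.

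Finally I would combine these. Because $L$ is linear with (constant) functional gradient $\ell$, the superdifferential of the concave $\cost = L - V$ at $q$ equals $\nabla \cost(q,x) = \ell - \phi_q$. Substituting on the right-hand side of the claimed identity and using $L(p) - L(q) = \langle \ell, p - q\rangle$ to cancel the linear terms yields
$$
\cost(q,x) - \cost(p,x) + \langle \nabla \cost(q,x), p - q\rangle \;=\; V(p,x) - V(q,x) - \langle \phi_q, p - q\rangle \;=\; \val(D; q, x),
$$
proving the forward direction. The converse follows because any valid $\cost$ is microfounded by some CBDM, and the computation above shows that the valid $\val$ coupled to that CBDM is uniquely determined by this Bregman identity; hence any $\val$ satisfying the identity must coincide with the coupled one. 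The hard part will be rigor on the functional side---ensuring that the superdifferential of $\cost$ on the infinite-dimensional space of priors exists and that the identity is insensitive to the choice of superdifferential in directions $p - q$ lying in the tangent space of probability measures---but once these are handled within the generalized Bregman framework of~\citep{cilingir2020deep}, the theorem is an algebraic consequence of the observation that $V$ and $-\cost$ differ by a functional linear in $q$.
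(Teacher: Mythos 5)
Your proof is correct, and it takes a genuinely different route from the paper's. The paper first abstracts $\val$ and $\cost$ into a ``measure of information'' $d$ and a ``measure of uncertainty'' $H$, proves in \Cref{prop:bregman} that coupling in expectation together with null-information, positivity, and order-invariance forces $d(\mu,\nu)-H(\nu)+H(\mu)$ to be linear in $\mu$ (via a Bayesian-persuasion argument using carefully chosen two-point distributions over posteriors), and only then verifies in \Cref{prop:valandcost} that CBDM-derived $\val$ and $\cost$ satisfy those axioms; the converse is assembled from the uniqueness statement in \Cref{coro:Hcouple} and the explicit construction in \Cref{prop:arbitrary}. You instead exploit the microfoundation directly: writing $V(q,x)=\max_a \EE[u(a,y)]$ as a pointwise maximum of functionals linear in $q$ makes it convex with explicit subgradient $\phi_q$, the definition of $\val$ then literally \emph{is} the Bregman divergence of $V$, and the decomposition $\cost=L-V$ with $L$ linear means the Bregman divergences of $\cost$ and of $-V$ coincide, so the claimed identity follows by pure algebra. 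This yields the pointwise identity without ever invoking positivity or order-invariance and makes transparent \emph{why} it holds (a linear shift leaves a Bregman divergence unchanged); what the paper's detour buys is reusability, since \Cref{prop:bregman} is also the engine behind \Cref{thm:val-of-data} and \Cref{thm:cost-of-uncertainty}, where one starts from axioms rather than from a decision problem in hand. Both your converse and the paper's rest on uniqueness of the Bregman divergence given $\cost$, and both share the loose end you correctly flag: the theorem quantifies over ``any superdifferential,'' so one should check that different superdifferentials of $\cost$ agree on the directions $p-q$ spanned by achievable posteriors; the paper likewise fixes an arbitrary superdifferential in the reverse direction of \Cref{prop:bregman} and does not dwell on this point.
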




The theorem highlights the fundamentality of \emph{Bregman divergence}, a commonly used concept in statistical learning and online learning~\citep{banerjee2005clustering, gutmann2011bregman, raskutti2015information}, in the valuation of data and cost of uncertainty. That is, any valid candidate of data valuation functions must correspond to the Bregman divergence of some concave function that captures the cost of uncertainty. 
 Notably, since machine learning training typically involves minimizing some divergence, our results to some extent justify the choice of the loss function. As we will see below, the selection of entropy reduction is not only due to its ease of optimization but also because it corresponds to a certain preference.

\subsection{Comparisons to Data Shapley}\label{subsec:shapley}
{Originating from cooperative game theory, Data Shapley~\citep{ghorbani2019data} allocates value to each datum in an existing dataset. It computes an expectation over exponentially many (hence intractable generally) possible scenarios as follows $\phi_i\propto \sum_{S\subset D-\{i\}}\frac{\val(S\cup \{i\};q,x)-\val(S;q,x)}{\binom{|D|-1} { |S|}}$.
Our previous result characterizes what valuation function  $\val$ is ``valid''. Theorem \ref{thm:bregman-divergence} proves that any natural $\val$ is the Bregman divergence of some concave function, which also matches most choices in practice. More importantly, our valuation  
  only relies on $\val(D;q,x)$ and $\val(D \cup  \{i\};q,x)$ (since it is the marginal value) hence is much easier to compute in practice since it only requires to evaluate $\val(D \cup \{i\};q,x)$ by retraining the model one more time with the additional data $i$.  Moreover, our valuation method can be applied to any transformations of the dataset as well (so long as the DPP allows us to update parameters), so it could be applicable more broadly than Data Shapley. 

We now illustrate our motivation of considering marginal contribution, while not averaging over all possible data coalition. Suppose a company hopes to buy a new confidential dataset to fine-tune its large language model, it only needs to consider the status before and after fine-tuning. However, Data Shapley gives some weight to the value of a virtual model trained by part of the original data and the new dataset. Considering many of these hypothetical situations will usually overestimate the instrumental value of the new dataset and twist corresponding data pricing. This inconsistency arises from ignoring the timeline when each dataset is collected. The following example illustrates why Shapley Value usually overestimates.  In fact, in its extreme, suppose a seller resells a repetitive data/dataset $i$ which is already in the buyer's training data pool, this same dataset will nevertheless have a non-zero Shapley value, whereas it will have $0$ instrumental value in our setting as it's useless for updating the prior.  
\begin{example}[The comparison between $\IG(\cdot;\cdot,\cdot)$ and Data Shapley]\label{ex:shapley}
    Assuming $\beta$ has a priori $\cN(0,1)$ and data follows $\cN(\beta,1)$, we hope to formulate the entropy reduction associated with the second datum which is $\frac{1}{2}\log(\frac{3}{2})$. 
    Here, the first datum reduces the posterior variance from $1$ to $\frac{1}{2}$ and the second one reduces it from $\frac{1}{2}$ to $\frac{1}{3}$.
    By choosing $\val$ as negative entropy, $\IG=\frac{1}{2}\log(\frac{3}{2})$ but $\phi_2=\frac{\log 3}{4}$, showing Data Shapley overestimates valuation in this scenario.
\end{example}
Here, we use entropy reduction associated with KL divergence as a valid $\val(\cdot;\cdot,\cdot)$. KL divergence is commonly applied in training machine learning models, like VAE~\citep{kingma2013auto} and GAN~\citep{goodfellow2014generative}. Nonetheless, our instrumental value is also legitimate for some other widely-used loss functions like $L_2$ loss.

Data Shapley corresponds to a uniform distribution~\citep {ghorbani2019data} and some paper tries other distributions like Beta distribution~\citep{kwon2021beta} to capture various facets of data valuation. Our instrumental value can be regarded as a kind of limitation of Shapley value that we use a Dirac delta distribution merely focusing on $S=D-\{i\}$. It precisely describes the realistic gains from new data with computational tractability and has potential applications in the data market like valuing data used for fine-tuning.
}

\section{The Canonical Data Valuation Function for Linear Regression \& Entropy}\label{BLR}

 \Cref{sec:valueofdata} introduced a general characterization for the value of data and its associated cost of uncertainty. In this section, we instantiate this framework in a fundamental  Bayesian regression problem (i.e., linear regression) with perhaps the most widely used cost of uncertainty function (i.e., entropy), and develop \emph{closed-form} characterization for the DPP valuation function $\IG(\cdot;\cdot,\cdot)$ in this case. This focus on linear regression problem is due to multiple reasons. First, the linear model is arguably the most fundamental model and has been proven to be very useful under uncertainties and misspecification  \cite{besbes2015surprising}. Second, multiple recent works have shown that data's valuations are often ``transferable'' in the sense that when the same sets of data are evaluated under different learning models (e.g., linear regression or ML methods), their relative value order often does not change much though the absolute value may change \cite{schoch2022cs,jia2021scalability}. Given these, the valuation under linear regression model serves as a useful basis for any future analysis of data valuations, hence we term them the \emph{canonical valuation function}. 

Let us start by recalling the classic Bayesian  linear regression problem which assumes that label $y$ is produced by the following linear model
\begin{equation*}    \label{eq:linear_regression_problem}
\textstyle
y = \langle x, \beta \rangle + \epsilon, \text{  where $\epsilon \sim \cN(0, \sigma(x)^2)$ is a zero-mean Gaussian noise.}
\end{equation*}
The Bayesian linear regression framework shares a similar sentiment with Thompson sampling~\citep{agrawal2013thompson}. It not only holds high value in machine learning theory but also finds extensive practical applications in real life, like drug monitoring~\citep{ammad2017systematic,van2019limited}.

Notable, here we allow the variance $\sigma(x)$ to also depend on the context $x$. We also assume $\beta$ follows a Gaussian prior $\cN(\mu_q,\Sigma_q)$. One of the most widely adopted cost of uncertainty functions is the Shannon entropy \citep{shannon1948mathematical}, or differential entropy for continuous distributions. We conveniently refer to both as ``entropy'' and defer its standard definition to \Cref{app:res_linear}. It is easy to verify entropy is a valid cost of uncertainty over
$\langle x,\beta\rangle$. One natural question hence is, what is the associated data valuation function coupled with entropy? Our following result provides a closed-form characterization. 


With the structure of Bayesian linear regression, we may write out exactly $\IG(\cdot;\cdot,\cdot)$ in the Bayesian linear regression setting in the following theorem.
\begin{theorem}\label{lem:infogain}[\textbf{The Canonical Valuation of DPPs}]
 In classic Bayesian linear regression, the valuation function for any data production process $\cD$ coupled with Entropy is the following function: 
    \begin{equation}\label{eq:value-linear-dpp}
        \IG( \cD; q, x) =\E_{D\sim\cD \circ q}\left[\frac{1}{2}\log(x^T\Sigma_qx)-\frac{1}{2}\log\rbr{{x^T\Sigma_{p(\cdot|D,q)} x}}\right],
    \end{equation}
 where $\cD \circ q$ is the data distribution as in Equation \eqref{eq:data-dist},  $p$ is the posterior distribution over $\beta \in \mathbb{R}^d$ as in Equation~\eqref{eq:posterior-update}, and $\Sigma_p = \EE_{\beta \sim p} (\beta - \mu_p)(\beta - \mu_p)^T$ is its covariance matrix.
\end{theorem}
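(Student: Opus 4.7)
The plan is to invoke the Bregman-divergence characterization from \Cref{thm:bregman-divergence} and then specialize to the conjugate Gaussian setting. Writing $p = p(\cdot \given q, D)$ for the posterior, that theorem gives
\[
\val(D;q,x) \;=\; \cost(q,x) - \cost(p,x) + \langle \nabla \cost(q,x),\, p - q\rangle.
\]
Taking expectation over $D \sim \cD \circ q$, the key step is to kill the linear term. By Fubini and the definition of the prior-predictive $\PP(D) = \int q(\beta) g_\beta(D)\, d\beta$,
\[
\EE_{D\sim \cD \circ q}\big[p(\beta \given q, D)\big] \;=\; \int \frac{q(\beta) g_\beta(D)}{\PP(D)}\, \PP(D)\, dD \;=\; q(\beta),
\]
so by linearity of the functional pairing, $\EE_D\big[\langle \nabla \cost(q,x),\, p - q\rangle\big] = 0$. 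This leaves the clean identity $\IG(\cD;q,x) = \cost(q,x) - \EE_D[\cost(p,x)]$.

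Next, I would evaluate the two entropy terms. Under the Gaussian prior $\cN(\mu_q,\Sigma_q)$, the one-dimensional projection $\langle x,\beta\rangle$ is itself Gaussian with variance $x^T\Sigma_q x$, so $\cost(q,x) = \tfrac{1}{2}\log(2\pi e \cdot x^T\Sigma_q x)$. Gaussian--Gaussian conjugacy in classic Bayesian linear regression (linear-Gaussian data-generating model plus Gaussian prior) guarantees that the posterior is again Gaussian, $p = \cN(\mu_p,\Sigma_p)$ with closed-form $\Sigma_p$, so by the same formula $\cost(p,x) = \tfrac{1}{2}\log(2\pi e \cdot x^T\Sigma_p x)$. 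Subtracting cancels the $\tfrac{1}{2}\log(2\pi e)$ constant and produces exactly the integrand in~\eqref{eq:value-linear-dpp}; pushing the expectation over $D$ back in gives the stated formula.

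The main obstacle will be the careful treatment of the functional superdifferential and of the interchange that makes the linear term vanish. Strictly, $\nabla \cost(q,x)$ is an element of the dual of the space of signed measures on $\RR^d$, so moving the expectation inside $\langle \nabla\cost(q,x),\,\cdot\rangle$ requires integrability of $p - q$ in the appropriate sense and a Fubini-type argument. One also has to verify that the differential entropy of $\langle x,\beta\rangle$, viewed as a functional of $q$, is genuinely a valid cost-of-uncertainty function so that \Cref{thm:bregman-divergence} applies, and that its superdifferential is linear in its second slot; these are standard properties for concave, entropy-like functionals. Once this bookkeeping is in place, the theorem reduces to the two routine ingredients above: the prior-posterior martingale identity and the closed-form univariate Gaussian entropy.
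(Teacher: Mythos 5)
Your proposal is correct, and in substance it follows the same two-stage plan as the paper: first reduce $\IG(\cD;q,x)$ to the expected entropy difference $\EE_{D}[h(f_q(x)) - h(f_p(x))]$, then evaluate both terms with the univariate Gaussian entropy formula so that the $\tfrac{1}{2}\log(2\pi e)$ constants cancel and only $\tfrac12\log(x^T\Sigma_q x)-\tfrac12\log(x^T\Sigma_p x)$ survives. Where you differ is the entry point for the first stage. You start from $\cost=$ entropy, invoke \Cref{thm:bregman-divergence} to write $\val$ as its generalized Bregman divergence, and kill the linear term via Bayes plausibility $\EE_D[p]=q$. The paper instead starts from the measure of information $d=D_{\rm KL}$, invokes \Cref{coro:Hcouple} to construct the unique coupled measure of uncertainty $H(g_q)=\int q(\beta)D_{\rm KL}(g_\beta\,\|\,g_q)\,d\beta$, and shows by direct expansion that this equals $h(f_q(x))$ plus a $\beta$-independent term that cancels between prior and posterior. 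These are two sides of the same equivalence (\Cref{prop:bregman}): your ``martingale kills the linear term'' step is exactly the coupling identity $\EE_D[d]=\EE_D[H(q)-H(p)]$, so nothing is circular. Your route is somewhat leaner --- you never need to identify the coupled divergence as KL, and the only constant you cancel is the finite $\tfrac12\log(2\pi e)$ rather than the paper's conditional-entropy term $\EE_{\beta}[\int l\log l]$ --- while the paper's route supplies the uniqueness statement of \Cref{coro:Hcouple}, which is what really licenses the phrase ``coupled with Entropy'' in the theorem. The one caveat worth making explicit (the paper is equally informal here) is the verification that the differential entropy of $\langle x,\beta\rangle$ is a valid cost of uncertainty in the sense of \Cref{thm:cost-of-uncertainty}: concavity in $q$ holds because the law of $\langle x,\beta\rangle$ depends affinely on $q$ and entropy is concave, but ``zero cost under certainty'' holds only up to an additive (infinite) constant, which is harmless precisely because it cancels in the difference you compute.
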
 

\Cref{lem:infogain} offers a closed-form expression (coinciding with information gain) for the valuation of any data production process $\cD$, associated with entropy in linear regression --- due to their fundamentality, we call the $\IG( \cD; q, x)$ function in Equation \eqref{eq:value-linear-dpp} the \emph{canonical valuation function} for DPPs.    

So far we have been working with the most general data production process $\cD$. Next, we're going to instantiate our characterization in \Cref{lem:infogain} with a concrete, perhaps the most straightforward, data production process --- that is, a collection of \emph{data records} $\{({x}_i, {y}_i )\}_{i=1}^n$ produced by the model $ y = \langle x,\beta\rangle  + \epsilon$. In this case, the data production procedure is completely determined by the data matrix ${X} \in \mathbb{R}^{n \times d}$ with its $i$'th row as ${x}_i^T$, hence denoted as $\cD^X$. The realized data is the realized response variable values $Y=({y}_1, \dots, {y}_n)^T$.   This concrete DPP allows us to derive a closed-form valuation function for these data records. 
\begin{corollary}\label{lem:valueBLR}[\textbf{Canonical Valuation of Producing Data Records}] 
For any data records  $\{({x}_i, {y}_i) \}_{i=1}^n$ generated by the linear regression model $ y = \langle x,\beta\rangle  + \epsilon$, let ${X} \in \mathbb{R}^{n \times d}$  denote the design matrix. The canonical valuation of this DPP procedure, denoted as $g^X$,  is characterized in closed-form as follows: 
\[
\IG(g^X; q, x) =\frac{1}{2}\log(x^T\Sigma_qx)-\frac{1}{2}\log(x^T(\Sigma_q^{-1}+X^T\Sigma^{-1}X)^{-1}x),
\] where $\Sigma=\diag\{\sigma(x_1)^2,...,\sigma(x_n)^2\}$.
\end{corollary}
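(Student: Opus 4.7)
The plan is to directly apply \Cref{lem:infogain} to the specific DPP $g^X$ generated by data records with design matrix $X$, and then exploit the well-known Gaussian conjugacy in Bayesian linear regression to evaluate the posterior covariance $\Sigma_{p(\cdot|D,q)}$ in closed form. The key observation that makes the corollary particularly clean is that the posterior covariance under Gaussian prior and Gaussian likelihood \emph{does not depend on the realized data} $Y = (y_1,\dots,y_n)^T$, so the outer expectation $\E_{D \sim \cD \circ q}[\cdot]$ in Equation \eqref{eq:value-linear-dpp} collapses.

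First I would explicitly write down the joint model. Given $\beta \sim \cN(\mu_q, \Sigma_q)$ and $Y \mid \beta \sim \cN(X\beta, \Sigma)$ with $\Sigma = \diag\{\sigma(x_1)^2,\dots,\sigma(x_n)^2\}$, one combines the log-prior and log-likelihood and completes the square in $\beta$. This is a standard Bayesian linear regression calculation and yields a Gaussian posterior $p(\beta \mid q, Y) = \cN(\mu_p, \Sigma_p)$ with
\begin{equation*}
    \Sigma_p = \bigl(\Sigma_q^{-1} + X^T \Sigma^{-1} X\bigr)^{-1}, \qquad \mu_p = \Sigma_p\bigl(\Sigma_q^{-1}\mu_q + X^T \Sigma^{-1} Y\bigr).
\end{equation*}
The crucial feature is that $\Sigma_p$ depends only on $\Sigma_q$, $X$, and $\Sigma$, and is \emph{independent} of the realized response vector $Y$; only the posterior mean $\mu_p$ depends on $Y$.

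Next, I would substitute this into the canonical valuation formula from \Cref{lem:infogain}. Since both $\frac{1}{2}\log(x^T \Sigma_q x)$ and $\frac{1}{2}\log\bigl(x^T \Sigma_{p(\cdot|D,q)} x\bigr) = \frac{1}{2}\log\bigl(x^T (\Sigma_q^{-1} + X^T \Sigma^{-1} X)^{-1} x\bigr)$ are deterministic functions of $X$ (not of the realized labels $Y$), the expectation $\E_{D \sim \cD \circ q}[\cdot]$ acts trivially, yielding exactly the stated expression.

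The main obstacle, if any, is bookkeeping around the posterior covariance derivation—particularly verifying that the precision matrices add as claimed (i.e., confirming the Gaussian conjugacy step), and confirming that $\Sigma_p$ is indeed $Y$-independent so that the expectation disappears. Both are standard but worth stating explicitly. No further tools beyond \Cref{lem:infogain} and Gaussian conjugacy are needed.
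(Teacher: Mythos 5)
Your proposal is correct and follows essentially the same route as the paper's proof: apply Theorem \ref{lem:infogain}, compute the Gaussian posterior covariance $(\Sigma_q^{-1}+X^T\Sigma^{-1}X)^{-1}$ via conjugacy, and observe that it is independent of the realized labels $Y$ so the outer expectation collapses. Your explicit note that only the posterior mean depends on $Y$ is a slightly more careful statement of the same key observation.
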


\section{Optimal Pricing of Data Production using Canonical Valuations} \label{sec:mechanism}
In this section, we use the developed closed-form characterization for canonical valuations of DPPs in Section \ref{BLR} to study a natural optimal pricing problem for selling DPPs. While the choice of valuation functions generally may be domain-dependant, the goal of this section is to showcase a natural economic application of the newly developed valuation functions of DPPs. We study how to sell data production processes (DPPs) to a machine learner (henceforth, the \emph{buyer}). 

We consider the following standard mechanism design framework, albeit being instantiated in our DPP pricing setup with fundamentally different utility functions and allocation rules. Specifically, the buyer would like to estimate $\E [y[x]] = \langle x, \beta \rangle$.  The $\beta$ here is the to-be-estimated parameter by the buyer. For simplicity, we assume the buyer initially has no knowledge about the model parameter $\beta$; formally, the buyer's initial belief $q = \cN(0,I_d)$ about $\beta$ is a standard $d$-dimensional Gaussian.\footnote{Our results can be generalized to general Gaussian prior (same for every buyer), but with more complex notations.} The buyer would like to buy data from the seller to refine his Bayesian posterior belief about $\beta$. As a salient realistic feature of data pricing, we always assume the seller can only generate data of form $\hat y=\langle \hat x, \beta\rangle+\epsilon$   where $\epsilon\sim\cN(0,\sigma(\hat x)^2)$ --- that is, data are noisy responses for examined feature direction $\hat{x}$. Moreover, as a natural statistical assumption, the larger the length $\|x\|$ of $x$ is, the larger the noise's magnitude is. Hence we assume the noise's variance scales with  $\|x\|$, or formally, $\sigma(x)=\sigma(\frac{x}{\|x\|})\|x\|$ as widely adopted in statistical literature \citep{linnet1990estimation,kumar1994proportional,bland1996measurement,wang2006proportional}.\footnote{Otherwise, we only need to consider the length corresponding to the minimum relative noise in the direction, and this direction collapses to such a point.}  
Certainly, data are limited (which is why it has value). To capture its scarcity, we assume the seller is limited to producing at most $n$ data points of the above form and the production has no cost.\footnote{Another interpretation is that the data scarcity, i.e., at most $n$ data points, reflects the production cost.}

\vspace{2mm}
\textbf{Prior-free mechanism design and performance benchmark. } While $q$ is public knowledge, the decision context $x \in \mathbb{R}^d$ is the buyer's private information, also conventionally referred to as the buyer \emph{type}, which is unknown to the seller.  Unlike Bayesian mechanism design, we study \emph{prior-free} mechanism design which is known to be notoriously more challenging~\citep{hartline2020benchmark,hartline2021lower} but much more practical due to being free of assumption on seller's prior knowledge and on how buyer type is generated \citep{goldberg2001competitive,devanur2009limited}. In this case, the performance benchmark is the ``first-best'' revenue, i.e., the revenue when perfectly knowing buyer type $x$ and producing $n$ data points based on $x$. 

It turns out solutions to this mechanism design question crucially hinge on the seller's power of customizing data production. We consider the following two natural settings.
\begin{itemize}
    \item \textbf{Perfect Customization: } here the seller has the flexibility to produce data for any direction $x$. Given such flexibility, the benchmark of the first-best is to produce $n$ responses for the buyer's type $x$ (assumed to be known in the benchmark) and then charge the buyer his value for this DPP. We denote this particular DPP as $\cD^n[x]$ which produces responses $ y_i =\langle   x, \beta\rangle+\epsilon_i$ for $i = 1, \cdots, n$. Hence the benchmark revenue here is simply the buyer type $x$'s value for $\cD^n[x]$, which is $\IG(\cD^n[x]; q, x)$. 

    \item \textbf{Limited Customization: } here the seller has an existing set of  data records $\{( x_i, y_i )\}_{i=1}^n$. Regardless of which buyer type $x$ is, the highest possible buyer value the seller can produce is to give all the data to the buyer. We denote this particular DPP as $\cD^X$ which is determined by their data matrix $X$ and simply produces all their labels $\{y_i \}_{i=1}^n$. Hence the benchmark revenue for buyer type $x$ is  $\IG(\cD^X; q, x)$. 
\end{itemize}

\textbf{Convenient Notations. } If any mechanism achieves expected revenue that is \emph{additively} within $\alpha$ of the benchmark described above for every buyer type $x$, we say this is an $\alpha$-regret mechanism.\footnote{In online algorithm design, additive loss from the optimal in hindsight is often denoted as ``regret'' whereas multiplicative loss is called ``competitive ratio''. } A $0$-regret mechanism is simply the mechanism that achieves the in-hindsight optimality.  In this section, we always have $q= \cN(0,I_d)$. For notational convenience, we will instead write the above value function as $\IG(\cD; x)$ (only) in this section by ignoring $q$. Moreover,  only the buyer knows his own type $x$ whereas the seller has no information at all about $x$. Hence naturally,  the buyer may \emph{misreport} his type as some $\hat x$, not necessarily equal $x$, whenever this is more beneficial. We will always use $\cD[\hat x]$ to denote the seller-designed DPP for a buyer report $\hat x$. Like classic mechanism design, we say a mechanism is incentive-compatible (IC) if each buyer with type $x$ finds it's optimal to report $x$ truly. A mechanism is individually rational (IR) if no buyer type has a negative expected utility.

\subsection{The Power of Perfect Data Customization: Achieving the First Best}\label{sec:optimal}


 Our main result under perfect data customization is a surprisingly positive result showing that it is possible to have a $0$-regret mechanism for selling DPPs that achieves the first-best in hindsight. Notably, such a full surplus charge has been shown to be impossible in classic item allocation problems~\citep{myerson1981optimal,hartline2020benchmark,hartline2021lower}. This is fundamentally due to the different valuation functions in the two settings, as a function of allocated data. Our result demonstrates the surprising power of price discrimination brought by the special valuation of data as well as the flexibility of data customization.

 \begin{figure}[!htbp]
    \centering
    \includegraphics[width=\linewidth]{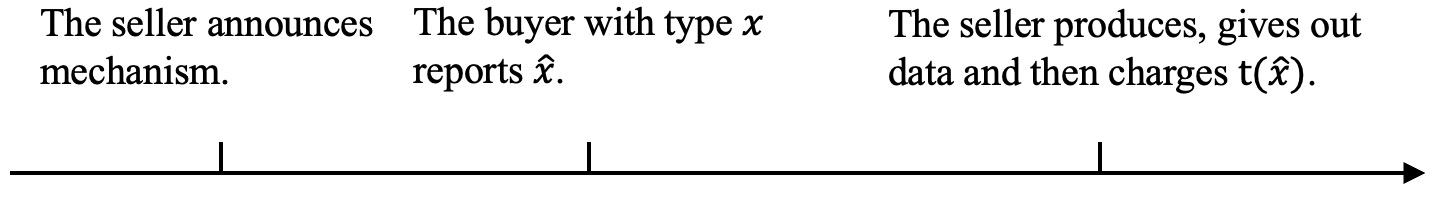}
    \caption{Timing line for perfect data customization. {The first step is to announce the mechanism to be used. Then, the buyer who is equipped with type $x$ gives a report $\hat x$ to the seller. The seller will base on $\hat x$ to produce a dataset and give it to the buyer. Finally, the buyer will give the seller a preset fee $t(\hat x)$.}}
    \label{fig:perfect}
\end{figure}

 \begin{theorem}\label{thm:optimalmech} With perfect data customization, the following Mechanism \ref{algo:opt} with DPP \[\cD [\hat x] = \cD^n[\hat x]\] and payment \[ t( \hat x)=\frac{1}{2}\log(\frac{\sigma(\hat x)^2+n}{\sigma(\hat x)^2})  \] for any buyer reported type $\hat x$ is IC, IR and $0$-regret. 
\end{theorem}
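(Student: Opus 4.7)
The plan is to reduce all three properties (IC, IR, $0$-regret) to a single closed-form computation of $\IG(\cD^n[\hat x]; q, x)$ via Corollary~\ref{lem:valueBLR}, followed by one application of the Cauchy--Schwarz inequality.

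First, I would instantiate Corollary~\ref{lem:valueBLR} with $\Sigma_q = I_d$, with the design matrix $X$ whose $n$ rows are all equal to $\hat x^T$, and with noise covariance $\Sigma = \sigma(\hat x)^2 I_n$. This makes $X^T\Sigma^{-1}X = \tfrac{n}{\sigma(\hat x)^2}\hat x \hat x^T$ a rank-one update to $I_d$, so Sherman--Morrison immediately yields
\[
\IG(\cD^n[\hat x]; q, x) \;=\; \tfrac{1}{2}\log\frac{\|x\|^2}{\|x\|^2 \;-\; \tfrac{n (x^T \hat x)^2}{\sigma(\hat x)^2 + n\|\hat x\|^2}}.
\]
Specializing to $\hat x = x$ and using the noise scaling $\sigma(x)^2 = \sigma(x/\|x\|)^2 \|x\|^2$ to simplify the resulting ratio, one obtains $\IG(\cD^n[x]; q, x) = t(x)$. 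Hence truthful reporting yields buyer utility exactly zero, which simultaneously certifies IR (non-negativity holds with equality) and $0$-regret (the revenue collected from each type $x$ matches the first-best benchmark $\IG(\cD^n[x]; q, x)$ pointwise).

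For IC, I need $\IG(\cD^n[\hat x]; q, x) - t(\hat x) \leq 0$ for every pair $(x, \hat x)$. Since $t(\hat x) = \IG(\cD^n[\hat x]; q, \hat x)$ by the same truthful identity applied to $\hat x$ in place of $x$, this reduces to $\IG(\cD^n[\hat x]; q, x) \leq \IG(\cD^n[\hat x]; q, \hat x)$, i.e., a DPP tailored to direction $\hat x$ is most valuable to a buyer whose real direction coincides with $\hat x$. Inspecting the closed form, the expression is strictly monotone increasing in the scalar $(x^T \hat x)^2/\|x\|^2$, and Cauchy--Schwarz gives $(x^T \hat x)^2/\|x\|^2 \leq \|\hat x\|^2$ with equality when $x$ is parallel to $\hat x$ (in particular at $x = \hat x$), which delivers the desired inequality and completes IC.

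The main obstacle I expect is the IC step: one has to set up the $\IG$ expression so that its dependence on the buyer's true type $x$ (for a fixed seller response $\cD^n[\hat x]$) is manifestly monotone in a single Cauchy--Schwarz-controlled quantity, and then verify that the equality case exactly matches the on-path identity $t(\hat x) = \IG(\cD^n[\hat x]; q, \hat x)$. The Sherman--Morrison algebra and the cancellations under the noise scaling are routine but must be tracked carefully so that the $\|x\|^2$ factors disappear correctly on the truthful path.
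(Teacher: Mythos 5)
Your proposal is correct, but it takes a genuinely different route from the paper. The paper's proof is a Myerson-style derivation: it first shows (Lemma~\ref{lem:paymentgrad}) that any IC payment rule must satisfy the gradient condition $\nabla t(x)=\nabla_y \IG(x,y)\given_{y=x}$, integrates this along a path on the unit sphere after verifying the field is conservative, and then pins down the additive constant via the IR constraint and shows that adding artificial noise $\delta(\cdot)$ is suboptimal. Your approach skips the envelope machinery entirely and verifies the three properties directly from the closed form: the Sherman--Morrison computation of $\IG(\cD^n[\hat x];q,x)$ matches the paper's Lemma~\ref{lem:concrete_value} (with $\delta=0$), the on-path identity $t(x)=\IG(\cD^n[x];q,x)$ gives IR with equality and hence $0$-regret against the first-best benchmark, and IC reduces to the single inequality $\IG(\cD^n[\hat x];q,x)\le \IG(\cD^n[\hat x];q,\hat x)$, which your monotonicity-plus-Cauchy--Schwarz argument establishes cleanly. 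This is arguably a more self-contained certification of IC than the paper's, since the gradient condition in Lemma~\ref{lem:paymentgrad} is only a necessary condition for IC; what your approach does not deliver is the paper's stronger structural conclusion that this payment rule is essentially the unique candidate among all IC mechanisms (including noise-adding ones), which the paper uses to frame the result as solving an optimization over mechanisms rather than just exhibiting one. One small point to make explicit: the identity $t(\hat x)=\IG(\cD^n[\hat x];q,\hat x)$ holds only after normalizing reports to the unit sphere (since $t$ as written is not scale-invariant while $\IG$ is); you invoke the scaling $\sigma(x)=\sigma(x/\|x\|)\|x\|$ but should state, as the paper does, that all types and reports are identified with their unit-norm representatives before the on-path and off-path comparisons are made.
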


\begin{algorithm}[htbp]
   \caption{An Optimal Mechanism under Perfect Data Customization}
        \label{algo:opt}
\begin{algorithmic}
   \STATE {\bfseries Input:} buyer's report $\hat x$
   \STATE Seller charges buyer $t(\hat x)=\frac{1}{2}\log(\frac{\sigma(\hat x)^2+n}{\sigma(\hat x)^2})$
   \STATE  Seller produces $n$ responses for the same $\hat x$:
     $\hat y_i=\hat x^T\beta+\epsilon_i$, and reveals $\{\hat y_i \}_{i=1}^n$ to the buyer.
\end{algorithmic}
\end{algorithm}

The key challenge is to find the optimal payment rule. As type $x$ is high-dimension, traditional methods~\citep{myerson1981optimal} have become agnostic. We conclude that the payment rule for allocation $g[\cdot]$ must have the following form that
\[
    t(x)=\int_{x_0}^x\nabla_y \IG(g[y];s)\given_{y=s}\cdot ds+t(x_0).
\]
Compared with the well-known results of single-dimension in \citet{myerson1981optimal}, there are two main differences considering a multi-dimensional type. First of all, we need to generalize the derivation of a one-dimensional function to the gradient computation in the case of higher dimensions. Second, when calculating the payment rule $t(\cdot)$, the integral result is required to be independent of the initial starting point and the integral path we choose. In our situation, it means that no matter what $x_0$ and the path from it to the final $x$ we choose, the result of the integral should be the same. These conditions constrain the application scenarios of the extended method for multi-dimensional mechanism design as we only need monotonicity for the single-dimensional situation. Nevertheless, the proof will elaborate that our definition of instrumental value satisfies all these conditional endogenously, highlighting the importance of our framework of instrumental value and suggesting its broad range of applications.



\subsection{Selling Existing Data Records with Limited Customization
}\label{sec:subopt}

We now turn to the situation where the seller has $n$ data points at hand $\{( x_i, y_i )\}_{i=1}^n$ and can only process this existing dataset and sell it to the buyer, where $\Sigma(X)=\diag\{\sigma(x_1)^2,...,\sigma(x_n)^2\}$ and $Y=(y_1,...,y_n)^T$. We first note that while our benchmark here is the buyer's maximum possible value $\IG(g^X; x)$, this benchmark is impossible to obtain due to not knowing the buyer's type, since if the seller indeed reveals all these data records to every buyer, then every buyer type would want to misreport his type to be some $\hat x$ that has the smallest $\IG(g^X; \hat x)$. Hence, some data processing based on $\{ (x_i, y_i )\}_{i=1}^n$  is necessary to achieve price discrimination and get the mechanism close to the benchmark $\IG(g^X;   x)$ for \emph{every} $x$. Indeed, it turns out that in this case, perhaps as expected,  there is no  $0$-regret mechanism. However, surprisingly, we nevertheless show that there is an IC and IR mechanism that has small constant regret.  

\begin{figure}[!htbp]
    \centering
    \includegraphics[width=\linewidth]{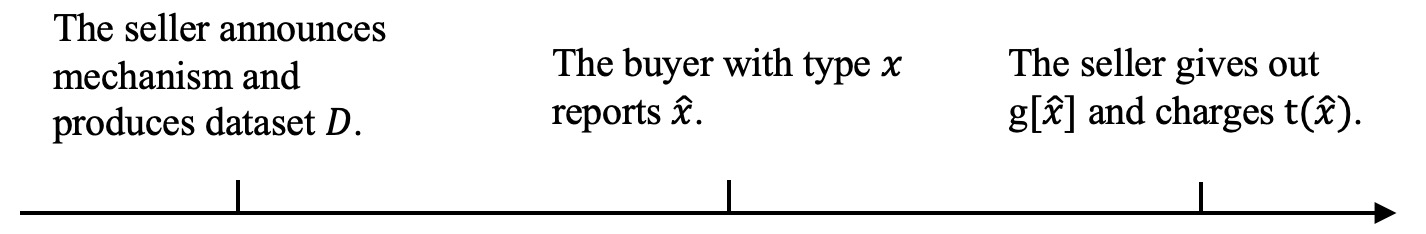}
    \caption{Timing line for limited data customization. The first step is that the seller announces the upcoming mechanism and shows the data records possibly only the design matrix she owns. In the second step, the buyer with private type $x$ decides to report that his type is $\hat x$. Based on the report $\hat x$, the seller will process the original data and finally give out $\cD[\hat x]$ with a charge $t(\hat x)$.}
    \label{fig:limited}
\end{figure}

\begin{theorem}\label{thm:imposs}
When the type space is continuous, there is no $0$-regret mechanism under limited customization.  
\end{theorem}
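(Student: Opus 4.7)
\begin{proofsketchof}{\Cref{thm:imposs}}

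The plan is to argue by contradiction: assuming a $0$-regret, IC and IR mechanism $(\cD[\cdot], t(\cdot))$ exists, I will translate the three conditions into geometric constraints on the posterior covariance $\Sigma_{p[\hat x]}$ induced by $\cD[\hat x]$, and then exhibit a two-dimensional instance and a continuous family of types for which these constraints are mutually inconsistent. Continuity of the type space is essential here, since it guarantees that the ``bad'' type lies within the family.

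The first step is to reduce the mechanism design conditions to matrix inequalities. Any DPP $\cD[\hat x]$ produced under limited customization from the fixed labels $Y$ is a processing of $Y$, so since the value is monotone in information, $\IG(\cD[\hat x]; x) \leq \IG(\cD^X; x)$ for every type $x$. Combining this with $0$-regret (which forces $t(\hat x) = \IG(\cD^X; \hat x)$) and IR, I conclude $\IG(\cD[x]; x) = \IG(\cD^X; x)$, i.e., every buyer's surplus is exactly zero. Substituting into IC then yields $\IG(\cD[\hat x]; x) \leq \IG(\cD^X; \hat x)$ for every pair $(x, \hat x)$. Applying the canonical formula of \Cref{lem:valueBLR} with $\Sigma_q = I$ and restricting to unit vectors, the conditions become
\[
\Sigma_{p[\hat x]} \succeq \Sigma^{\star}, \qquad \hat x^\top \Sigma_{p[\hat x]} \hat x = \hat x^\top \Sigma^{\star} \hat x, \qquad \lambda_{\min}(\Sigma_{p[\hat x]}) \geq \hat x^\top \Sigma^{\star} \hat x,
\]
where $\Sigma^{\star} := (I + X^\top \Sigma(X)^{-1} X)^{-1}$ is the posterior covariance under $\cD^X$. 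The last two relations together force $\hat x$ to be a minimum-eigenvalue eigenvector of $\Sigma_{p[\hat x]}$ with eigenvalue $\hat x^\top \Sigma^{\star} \hat x$.

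The second step derives the contradiction on a concrete instance. Take $d = 2$ and pick $X$ so that $\Sigma^{\star} = \diag(\mu_1, \mu_2)$ with $0 < \mu_1 < \mu_2$, and consider the unit type $\hat x_\theta = (\sin\theta, \cos\theta)^\top$ for any $\theta \in (0, \pi/2)$, which is available by continuity of the type space. The structural constraints above force
\[
\Sigma_{p[\hat x_\theta]} = \lambda_1 \, \hat x_\theta \hat x_\theta^\top + \lambda_2 \, \hat x_\theta^\perp (\hat x_\theta^\perp)^\top, \qquad \lambda_1 = \mu_1 \sin^2\theta + \mu_2 \cos^2\theta, \qquad \lambda_2 \geq \lambda_1,
\]
where $\hat x_\theta^\perp = (\cos\theta, -\sin\theta)^\top$ and only $\lambda_2$ is free. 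A direct expansion of the $2 \times 2$ determinant of $\Sigma_{p[\hat x_\theta]} - \Sigma^{\star}$ reveals that all terms involving $\lambda_2$ cancel exactly, leaving $-\sin^2\theta\cos^2\theta\,(\mu_2 - \mu_1)^2$, which is strictly negative. This contradicts $\Sigma_{p[\hat x_\theta]} - \Sigma^{\star} \succeq 0$.

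The crux --- and the main obstacle --- is precisely this determinant computation. At first glance, the free parameter $\lambda_2$ appears to offer enough slack to satisfy all three constraints, but the cancellation exposes a rigidity driven by the mismatch between the eigenbasis of $\Sigma^{\star}$ (the standard axes) and the eigenbasis that $\Sigma_{p[\hat x_\theta]}$ is forced to have ($\hat x_\theta, \hat x_\theta^\perp$). This also explains why the theorem requires continuity: on a discrete type space consisting only of the two eigendirections $\{e_1, e_2\}$, one may set $\cD[e_i]$ to reveal only the information in direction $e_i$, in which case $\Sigma_{p[e_i]}$ has its eigenbasis aligned with that of $\Sigma^{\star}$ and all three constraints become jointly satisfiable.

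\end{proofsketchof}
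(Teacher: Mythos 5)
Your proposal is correct, and it takes a genuinely more careful route than the paper's own argument. The paper's proof essentially forces the allocation to be full disclosure of the dataset for every report (``the seller must disclose all data''), computes the deviation surplus under that assumption, and concludes that IC would require $t(\hat x)$ to be constant, hence $X^T\Sigma^{-1}X \propto I$, contradicting a non-isotropic choice of $X$. You do not assume full disclosure: you keep the allocation $\cD[\hat x]$ general and distill IR, IC and $0$-regret into three constraints on the induced posterior covariance --- domination $\Sigma_{p[\hat x]}\succeq \Sigma^{\star}$ from data processing, exact matching $\hat x^\top\Sigma_{p[\hat x]}\hat x=\hat x^\top\Sigma^{\star}\hat x$ from zero surplus, and $\lambda_{\min}(\Sigma_{p[\hat x]})\ge \hat x^\top\Sigma^{\star}\hat x$ from IC --- and then kill them simultaneously with the determinant identity $\det(\Sigma_{p[\hat x_\theta]}-\Sigma^{\star})=-\sin^2\theta\cos^2\theta\,(\mu_2-\mu_1)^2<0$, which I verified. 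What your approach buys is robustness to non-trivial allocations (the seller could in principle give different partial disclosures to different reports, a case the paper's proof glosses over), plus a cleaner explanation of why continuity matters via the eigenbasis mismatch; what the paper's approach buys is brevity. The one shared caveat is that you (like the paper) treat $\Sigma_{p[\hat x]}$ as a single deterministic matrix, which is valid for the linear/noising processings of Gaussian data the paper restricts attention to, but would require carrying expectations over realized data for fully general garblings; this is a minor point and does not affect the conclusion.
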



\begin{theorem}\label{thm:surplusucb}
The regret of the following SVD mechanism under limited data customization is at most $\log(\kappa((\sqrt{\Sigma(X)})^{-1}X))$,  where $\kappa((\sqrt{\Sigma(X)})^{-1}X)$ is the square root of the ratio between the largest and the smallest singular values of $X^T\Sigma(X)^{-1}X$.

\end{theorem}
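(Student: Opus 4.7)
The plan is to follow the Myersonian template from the proof of \Cref{thm:optimalmech}, adapted to the constraint that the seller can only reveal linear functions of $Y$. Let $A := X^T\Sigma(X)^{-1}X$ with eigendecomposition $U\Lambda U^T$. For each reported $\hat x$, the SVD mechanism constructs the minimum-variance linear estimator of $\hat x^T\beta$ from $Y$ (i.e., the GLS combination), yielding a single synthetic observation $z=\hat x^T\beta+\eta$ with $\eta\sim\cN(0,\hat x^T A^{-1}\hat x)$; this defines the allocation $\cD[\hat x]$. \Cref{lem:valueBLR} together with Sherman--Morrison then gives the closed form
\begin{equation*}
V(x,\hat x) \;:=\; \IG(\cD[\hat x]; x) \;=\; \tfrac12\log\!\left(\frac{\|x\|^2}{\|x\|^2 - (x^T\hat x)^2/\bigl(\|\hat x\|^2+\hat x^T A^{-1}\hat x\bigr)}\right).
\end{equation*}

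The payment rule is $t(\hat x):=V(\hat x,\hat x)$. This choice emerges from the multi-dimensional envelope construction used in the proof of \Cref{thm:optimalmech}: direct differentiation of the closed form above shows $\nabla_x V(x,\hat x)|_{\hat x=x}\equiv 0$, so the information rent is identically zero, the Myersonian payment integral is trivially path-independent, and IR holds with equality at every type. IC reduces to the Cauchy--Schwarz fact $(x^T\hat x)^2/\|x\|^2\le \|\hat x\|^2$ (with equality iff $x\parallel\hat x$), which, plugged into the closed form, yields $V(x,\hat x)\le V(\hat x,\hat x)=t(\hat x)$; so misreporting never beats zero truthful utility.

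The main analytical step is the per-type regret estimate. Combining the benchmark $\IG(g^X;x)$ from \Cref{lem:valueBLR} with the truthful revenue $t(x)=V(x,x)$, and passing to eigencoordinates $v:=U^T x$ with the shorthand $S_0:=\sum_i v_i^2$, $S_1:=\sum_i v_i^2/(1+\lambda_i)$, $S_2:=\sum_i v_i^2/\lambda_i$, algebra gives
\begin{equation*}
\IG(g^X;x)-t(x) \;=\; \tfrac12\log\!\left(\tfrac{S_0 S_2}{S_1(S_0+S_2)}\right) \;=\; -\tfrac12\log\!\left(\tfrac{S_1}{S_0}+\tfrac{S_1}{S_2}\right).
\end{equation*}
It therefore suffices to prove $S_1/S_0+S_1/S_2\ge \lambda_{\min}/\lambda_{\max}$. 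Reading $S_1/S_0$ and $S_1/S_2$ as weighted averages of $1/(1+\lambda_i)$ and $\lambda_i/(1+\lambda_i)$ respectively yields the pointwise bounds $S_1/S_0\ge 1/(1+\lambda_{\max})$ and $S_1/S_2\ge \lambda_{\min}/(1+\lambda_{\min})$; clearing denominators reduces the target inequality to $(\lambda_{\max}-\lambda_{\min})\bigl(1+\lambda_{\min}+\lambda_{\min}\lambda_{\max}\bigr)\ge 0$, which is immediate. This delivers the announced regret bound $\log\kappa((\sqrt{\Sigma(X)})^{-1}X)=\tfrac12\log(\lambda_{\max}/\lambda_{\min})$.

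The main obstacle, and the step I would expect to take the most care on, is verifying that the envelope derivative $\nabla_x V(x,\hat x)|_{\hat x=x}$ vanishes identically and recognizing that this collapses the multi-dimensional screening problem into the one-dimensional spectral inequality above. Without this collapse, the integrated information rent along paths in the type space would need to be estimated separately, and naive bounds on it can scale with $\|x\|$ rather than with the spectrum of $A$, destroying the spectrum-only bound.
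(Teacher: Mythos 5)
Your proposal is correct and follows essentially the same route as the paper: the same GLS/left-inverse allocation $b(\hat x)\propto L\hat x$, the payment $t(\hat x)=V(\hat x,\hat x)$ that makes IR bind, IC reducing to the Cauchy--Schwarz inequality $(x^T\hat x)^2\le\|x\|^2\|\hat x\|^2$, and the regret collapsing to a one-dimensional spectral inequality in the eigenbasis of $X^T\Sigma(X)^{-1}X$. The only immaterial differences are that the paper verifies IC via a first-order condition of the strongly convex quadratic restricted to $\|x\|=\|\hat x\|$ rather than your direct global bound, and controls the final ratio with a single Cauchy--Schwarz step on the product of sums rather than your two separate weighted-average estimates.
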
 

 \begin{algorithm}[htbp]
	\caption{The SVD Mechanism}
    \label{algo:suboptimal}
\begin{algorithmic}
   \STATE {\bfseries Input:} buyer's report $\hat x$, seller's design matrix $\DM$
   \STATE Normalize data $(X,Y)\leftarrow ((\sqrt{\Sigma(X)})^{-1}X,(\sqrt{\Sigma(X)})^{-1}Y)$ and announce normalization;
   \STATE  Perform SVD over $\sellX$, obtaining $\sellX=U\begin{bmatrix}S \\ 0_{(n-d)\times d}\end{bmatrix}V$ where $S=\diag\{\lambda_1,\ldots,\lambda_d\}$ \label{algo:svd};
   \STATE Let $L$ denote the left inverse of $\sellX$ where $L=U\begin{bmatrix}S^{-1} \\ 0\end{bmatrix}V$ \label{algo:inverse};
   \STATE  Define the mapping $b(\cdot): \RR^d\rightarrow \RR^n$ where $b(\hat x)=\frac{L\hat x}{\|L\hat x\| }$ \label{algo:projection};
   \STATE {\bfseries Output:} $\cD[\hat x]=(b(\hat x)^T\sellX,b(\hat x)^TY)$, $t(\hat{x}) = \frac{1}{2}\log(\hat x^T\hat x)-\frac{1}{2}\log(\hat x^T(I+\sellX^Tb(\hat x)b(\hat x)^T\sellX)^{-1}\hat x)$.\label{algo:allocation}\label{algo:payment}
\end{algorithmic}
\end{algorithm}



\Cref{thm:surplusucb} gives an illustration of the ability of price discrimination under limited customization. Unlike perfect customization, the seller cannot achieve the first-best revenue in all cases. In other words, the worst-case distance between optimal revenue and the first-best revenue is strictly positive which infers that limited customization limits the level of market unfairness. However, it also shows that though the buyer could get a non-zero consumer surplus, it's indeed little and he still suffers inequality against the seller.

The following results further show that for certain $\DM$, the seller can achieve the first-best revenue and 0-regret, highlighting the potential unfairness in the data market even under limited customization.

\begin{corollary}\label{thm:isotropic}
            When the seller has isotropic data, i.e. all singular values of $\DM^T\Sigma(\DM)^{-1}\DM$ are the same, there exists a mechanism that satisfies both IC and IR, and achieves 0-regret.
\end{corollary}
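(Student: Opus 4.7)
The plan is to obtain \Cref{thm:isotropic} directly from \Cref{thm:surplusucb}, whose regret bound degenerates precisely under the isotropy hypothesis. \Cref{thm:surplusucb} guarantees that the SVD Mechanism (\Cref{algo:suboptimal}) is IC, IR, and incurs regret at most $\log\kappa((\sqrt{\Sigma(X)})^{-1}X)$, where $\kappa$ is the square root of the ratio between the largest and smallest singular values of $X^T\Sigma(X)^{-1}X$. I would then plug in the isotropy assumption: all such singular values are equal to some common $\lambda^2$, so $\kappa = 1$ and $\log\kappa = 0$, making the regret at most $0$. Since regret is non-negative (the first-best $\IG(g^X;x)$ is by construction an upper bound on the value any IR mechanism can extract from a buyer of type $x$), the SVD Mechanism achieves exactly $0$-regret while retaining IC and IR from \Cref{thm:surplusucb}.

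To make the zero-regret conclusion mechanically transparent, I would also verify directly that the allocation in \Cref{algo:suboptimal} realizes the benchmark $\IG(g^X;x)$ under isotropy. After the normalization step, $X^TX = \lambda^2 I_d$, so $X$ admits an SVD $X = \lambda U_d V^T$ with $V \in \RR^{d\times d}$ orthogonal and $U_d \in \RR^{n\times d}$ semi-orthogonal. The map $b(\hat x) = L\hat x/\|L\hat x\|$ then reduces to $U_d V^T \hat x/\|\hat x\|$, giving $X^T b(\hat x) = \lambda \hat x/\|\hat x\|$ and hence $X^T b(\hat x) b(\hat x)^T X = \lambda^2 \hat x \hat x^T /(\hat x^T \hat x)$. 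A Sherman--Morrison step then yields $\hat x^T(I + X^T b b^T X)^{-1}\hat x = \hat x^T \hat x/(1+\lambda^2)$, matching the full-data posterior variance in direction $\hat x$, so that the realized value on a truthful report coincides with the benchmark $\tfrac{1}{2}\log(1+\lambda^2)$. This geometric check also pins down that the payment rule in \Cref{algo:payment} charges exactly this value under $\hat x = x$, which is what delivers full surplus extraction.

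There is no serious obstacle here; the only conceptual subtlety is to reconcile the result with the impossibility in \Cref{thm:imposs}. That impossibility is stated uniformly over all data matrices, whereas the isotropic subclass is precisely the rotationally symmetric special case in which a single rank-one, type-aligned data derivative suffices to carry all the information the full dataset holds about $\langle x, \beta\rangle$. Once one recognizes that the SVD allocation $b(\cdot)$ exactly implements this alignment when $X^T\Sigma(X)^{-1}X$ is a scalar multiple of the identity, the corollary is immediate.
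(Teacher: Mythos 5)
Your proposal is correct and follows essentially the same route as the paper: the paper's proof is exactly your first paragraph, namely that isotropy forces $\kappa((\sqrt{\Sigma(\DM)})^{-1}\DM)=1$, so the $\log(\kappa)$ regret bound of \Cref{thm:surplusucb} collapses to $0$ while IC and IR carry over from the SVD mechanism. Your additional direct verification that $b(\hat x)$ recovers the full-data posterior variance under isotropy is a correct (and reassuring) sanity check, but it is not needed and does not constitute a different argument.
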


At the same time, if the private types of the buyer which are public knowledge have certain structures, it's also possible for the seller to achieve the first-best revenue. For example, we find a new information structure called \emph{multi-armed bandits setting} (cf. \Cref{app:res_part}) which also leads to 0-regret for the seller. Therefore, we know that whether 0-regret is achievable depends on the information leakage. In \Cref{thm:isotropic}, since every direction confers equal value, it discloses the buyer's willingness to pay. In the MAB setting, the intuition is that other directions provide no information, discouraging the buyer from submitting untruthful reports and naturally leading to price discrimination.
\section{Conclusion and Discussion}
{Compared to a dataset's intrinsic value, the instrumental value better captures its economic value on a market. This paper first introduces a principled way to quantify a dataset's instrumental value, and then studies how to apply this valuation to a data pricing problem.}
The designed mechanism illustrates the surprising power of data customization, which may help data sellers to do first-degree price discrimination, leaving zero surplus to buyers. Even under limited data customization, the flexibility of selling any derivatives or statistics from an existing dataset already gives the seller significant power for discriminating buyers. These results hint at a potential need for regulations in order to foster a sustainable data market. On the technical side, our data valuation's connection to the value of information  \citep{frankel2019quantifying} may be of independent interest. Moreover, our designed mechanism for multi-dimensional buyer preference, particularly the mechanisms based on the singular values decomposition, may be of both theoretical and practical interest. 



Several open questions are worth future investigation. Is it feasible to explicitly resolve the mechanism design challenge under dissimilar information flows? Or data valuations are derived from linear regression; how transferable is it to data valuations under other learning models? Previous works have shown that Data Shapley often has a similar order when computed using different ML models~\citep{jia2021scalability,schoch2022cs}. 
Moreover, given that we have demonstrated the potentially high-skewed surplus allocation between the seller and buyer, how should regulatory authorities step in \citep{mas1995chapter,jehle2001advanced} to sustain market efficiency? Furthermore, what would be the implications of multiple sellers operating in the market or if it were a perfectly competitive market? We leave these intriguing follow-up questions as potential avenues for future research.

\bibliographystyle{plainnat}
\bibliography{sample-base}

\newpage
\appendix
\newpage
\section*{Appendix for ``An Instrumental Value for Data Production and its Application to
Data Pricing''}
\section{Omitted Details in \Cref{sec:valueofdata}}\label{app:value_res}
We now give omitted algebra details of \Cref{ex:process-data-value}.
\begin{example}[Examples of DPPs]\label{app_ex:process-data-value}
Consider a linear model $y=\langle x,\beta\rangle+\epsilon$.
A learner would like purchase data to predict $y = \langle x,\beta\rangle$ for $x = ( 1, 1)$. A simple example of DPPs --- described by two feature directions $x_1 = ( 1, 3), x_2 = (3, 1)$ --- simply produces response variables corresponding to the given $x_1, x_2$:   $y_1 =  \langle x_1,\beta\rangle+\epsilon_1  = -1$ and $  y_2 = \langle x_2,\beta\rangle+\epsilon_2 = 1$. In this case,   $D = \{ y_1, y_2 \}$ is the realized data which is produced by data production process $g_\beta(D) =  \PP_{\epsilon}(y_1 - \langle x_1,\beta\rangle ) \cdot  \PP_{\epsilon}(y_2 - \langle x_2,\beta\rangle )$. Hence the DPP here is fully described by two feature directions $x_1, x_2$. The realized data $D = \{ y_1, y_2 \}$, together with the knowledge of  DPP $\cD$, can help the learner to refine the estimation of $\beta$.       

Instead of directly producing data  $ y_1, y_2$ for directions $x_1, x_2$, one could also produce the data $D = \bar{y} = \frac{y_1 + y_2}{2}$ for feature direction  $ \bar{x} = \frac{x_1 + x_2}{2}$ using DPP with $g_\beta(\bar{y}) = \PP_{\epsilon_1, \epsilon_2}(\bar{y} - \langle \bar{x},\beta \rangle )$ where the probability density is over the randomness of both noise terms $\epsilon_1, \epsilon_2$. Notably, this DPP is subtly different from directly generating the response $y = \langle \bar{x},\beta\rangle + \epsilon$ because $\bar{y}$ has a smaller variance than $y$, despite having the same mean, hence is strictly more informative. Indeed, to  produce  data  $D = y = \langle \bar{x},\beta\rangle + \epsilon$, the data owner would need to add extra noise to $\bar{y} = \frac{y_1 + y_2}{2}$ in order to match its  distribution. This illustrates the rich space of DPPs.

Perhaps more interestingly, we could also design a DPP that produces data  $ y^{\perp} =  y_1 - y_2 =-2$ for direction $x^{\perp} =  x_1 - x_2=(-2,2)$. This DPP is of no interest to the buyer above since $\bar{x}  \perp x$, but nevertheless its realized data $D = y^{\perp} $ carries information about $\beta$ and hence may be of interest to other buyers. 
  
 Suppose the learner has an uninformative Gaussian prior $\beta \sim \cN(0, I)$ and the noise in response variable follows the standard Gaussian $\cN(0,1)$. 
 Then the data generating function $g_\beta(y)$ for a $DPP(x)$ that produces data $y = \langle x, \beta \rangle + \epsilon$ is $g_\beta(y) = \frac{1}{\sqrt{2\pi}}\exp(-(y-\langle x,\beta\rangle)^2/2)$.  
  If we produce data $\bar y = \frac{y_1 + y_2}{2}$ as the average response variable for $x_1, x_2$, the posterior of $\beta$ prescribed in Equation \eqref{eq:posterior-update} can be calculated as  
 $\cN(0,\left[\begin{array}{cc}
9/17 & -8/17 \\
-8/17& 9/17
\end{array}\right])$ and the variance of the to-be-predicted variable $\langle x,\beta\rangle$ reduces from 2 to $\frac{2}{17}$, same as fully revealing original data records $\{ (x_1,y_1), (x_2, y_2) \}$.
However, if we only reveal $(x^\perp,y^\perp)$, the posterior is $\cN((2/5,-2/5),\left[\begin{array}{cc}
3/5 & 2/5 \\
2/5& 3/5
\end{array}\right])$ and the variance remains 2. Notably, these improvements can all be calculated without knowing the realized data. This will be useful for selling DPPs since it means the value of a DPP can be quantified without seeing the realized data. This resolves the concern that data will lose its value after being seen. 
 \end{example}
We here give a rigorous definition of concave functionals and generalized Bregman divergence in \Cref{defn:bregman-divergence}.
\begin{definition}\label{app:def:breg}
Let $\cB$ denote the domain of parameter $\beta$ and consider \emph{functional} $F: \Delta(\cB) \to \mathbb{R}$ that maps the space of probability distributions over  $\cB$ to a real value.  We say \emph{functional} $F(\cdot)$ is concave if for any pair of distributions $p , q  \in \Delta(\cB)$  and any $\lambda \in [0, 1]$,  we have \[F(\lambda p + (1 - \lambda) q) \geq \lambda F(p) + (1 - \lambda)F(q).\] Moreover, we say $\nabla F(\cdot)$ is its (functional) superdifferential if for any  $p , q  \in \Delta(\cB)$ we have 
    \begin{equation*}
        \textstyle
        F(q) + \langle \nabla F(q), (p - q) \rangle \geq F(p). 
    \end{equation*}
    The generalized Bregman divergence induced by functional $F$ is defined as
    \begin{equation*}
        \textstyle
        D_F(p, q) = F(q) - F(p) + \langle \nabla F(q),  (p - q) \rangle.
    \end{equation*}
\end{definition}

Next, we are ready to give more details on the characteristics of $\val(\cdot;\cdot,\cdot)$ and $\cost(\cdot,\cdot)$.
\begin{definition}[\textbf{Valid} Cost of Uncertainty (\Cref{defn:val-info-cost-uncertainty} Continued)]
Relatedly, the corresponding $\cost( q, x) = \val( \vee; q, x)  $  is called a valid  \emph{cost of uncertainty} function. Moreover, in this case, we say the   $\val( \vee; q, x)$ and   $\cost( q, x)$ are \emph{coupled} (since they correspond to the same  CBDM).
\end{definition}

\begin{proposition}[Characterization of Valid Valuation Functions]
    \label{app_thm:val-of-data}
A valuation function $\val$ is valid   if and only if it satisfies the following properties  simultaneously
    \begin{enumerate}
    \item  \textbf{No value for null data}:  $\val(\emptyset; q, x)  = 0$ for any prior $q$;
    \item \textbf{Positivity}:  $\val(D; q, x)  \geq  0$  for any realized data   $D$ and any prior $q$ over parameter $\beta$;
    \item \textbf{Invariance to data acquisition orders}: The expected value of data is invariant to the order of data acquisition. Formally, given any prior $q$ over $\beta$ and any two data production process  $\cD_1$ and $\cD_2$,  let $p_1 = p(\cdot| q, D_1)$ and  $p_2 = p(\cdot| q, D_2)$  be the posteriors updated by realized data   $D_1$ and $D_2$ respectively. Then, we have
          \begin{align*}
         \EE_{D_1, D_2}[\val(D_1; q, x) + \val(D_2; p_1, x)]  = 
           \EE_{ D_1, D_2 }[ \val(D_2; q, x)  + \val ( D_1 ; p_2, x)].
          \end{align*}
    \end{enumerate}
\end{proposition}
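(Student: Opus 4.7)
The plan is to prove both directions of the if-and-only-if, with the forward implication (validity implies the three axioms) being a direct computation from~\eqref{eq:def:val-data} and the reverse implication being the main technical content. Throughout I treat the context $x$ as fixed and suppress it in the notation.

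For the forward direction, I would unpack~\eqref{eq:def:val-data} and check each property in turn. Property~(1) is immediate: when $D=\emptyset$ the Bayes update gives $p(\cdot\mid q,\emptyset)=q$, so $a^*(p)=a^*(q)$ and the expectation in~\eqref{eq:def:val-data} vanishes. Property~(2) follows from the variational characterization $a^*(p)=\arg\max_{a\in\cA}\EE_{\beta\sim p\to y}[u(a,y)]$: replacing $a^*(p)$ by $a^*(q)$ inside that same expectation cannot increase it, so the difference is non-negative. Property~(3) is a telescoping argument: writing $p_1=p(\cdot\mid q,D_1)$, $p_2=p(\cdot\mid q,D_2)$, and using order-independence of iterated Bayes updates so that $p_{12}=p(\cdot\mid p_1,D_2)=p(\cdot\mid p_2,D_1)$ almost surely under the joint law of $(D_1,D_2)$, both sides of property~(3) collapse in expectation to $\EE[u(a^*(p_{12});y)-u(a^*(q);y)]$ after the intermediate terms cancel.

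For the reverse direction, I would assume properties~(1)--(3) and exhibit a CBDM whose value function matches $\val$. The argument proceeds in three steps. First, use property~(3) together with the Bayesian martingale identity $\EE_D[p(\cdot\mid q,D)]=q$ to show that the expected cumulative value along any data-acquisition path depends only on the prior and the law of the final posterior, yielding a functional $F:\Delta(\cB)\to\RR$ with $\EE_D[\val(D;q)]=F(q)-\EE_D[F(p_D)]$. Property~(2), applied across all Bayesian splittings $q=\EE_D[p_D]$ (which can realize any finite mixture of posteriors by suitable choice of likelihood), forces $F$ to be concave via Jensen's inequality, and the pointwise Bregman representation $\val(D;q)=F(q)-F(p_D)+\langle\nabla F(q),p_D-q\rangle$ then follows along the lines of~\Cref{thm:bregman-divergence}. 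Second, given this concave $F$, I would construct a utility $u$ on a suitable action space $\cA$ of predictive reports via a Savage-type proper-scoring-rule representation, so that the Bayes risk under belief $q$ equals $-F(q)$ up to an additive constant and the Bayes-optimal action $a^*(q)$ coincides with reporting $q$ itself. Third, plugging this $u$ back into~\eqref{eq:def:val-data} should recover exactly the Bregman divergence from Step~1.

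The hardest step is the second one: realizing the concave functional $F$ on $\Delta(\cB)$ as the Bayes risk of a decision problem whose observable is $y=f_\beta(x)+\epsilon$ rather than $\beta$ itself. When $F$ factors through the induced law of $y$ (as it does in the canonical examples, including the entropy functional of~\Cref{lem:infogain}) the standard scoring-rule construction applies directly; in general one may need to enlarge $\cA$ and let $u$ depend on $y$ through any measurable statistic, which is permitted because~\Cref{def:values} imposes no restriction on $u$ or $\cA$. Technical regularity on the infinite-dimensional simplex $\Delta(\cB)$---measurability, existence of a superdifferential for $F$, and smoothness along the data-generated paths---is also required so that the generalized Bregman divergence of~\Cref{defn:bregman-divergence} is well defined throughout the argument.
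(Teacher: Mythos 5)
Your proposal is correct and follows essentially the same route as the paper: the forward direction is the same direct verification (with order-invariance following from the Bayesian martingale identity $\EE_{D}[p_D]=q$), and the reverse direction mirrors the paper's construction of a coupled concave cost-of-uncertainty functional, the Bregman representation of $\val$, and a proper-scoring-rule decision problem $u(a;y)=H(a)-\langle\nabla H(a),a\rangle+\langle\nabla H(a),\delta_y\rangle$ whose value of information recovers $\val$. The only presentational difference is that the paper pins down the coupled functional explicitly as $H(g_q)=\int q(\beta)\,d(g_\beta,g_q)\,d\beta$ (via the fully informative signal) and works on $\Delta(\cY)$ from the outset, which absorbs the ``$F$ must factor through the induced law of $y$'' caveat you correctly flag at the end.
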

The first two properties are both natural. The third property indicates that the order of obtaining data from two data production processes should not change the total expected value of these two DPP. This is also a natural property for any valid valuation function. Interestingly, it turns out that these three properties suffice to guarantee the validity of $\val$ in the sense of \Cref{defn:val-info-cost-uncertainty} and \Cref{eq:def:val-data}. 

The following proposition offers a characterization for the validity of a cost of uncertainty function $\cost(q,x)$.
\begin{proposition}
    \label{thm:cost-of-uncertainty}
A cost of uncertainty function $\cost(q, x)$ is valid if  and only if it satisfies the following properties simultaneously
    \begin{enumerate}
    \item  \textbf{0 cost under certainty:} when the prior $q$ degenerates to a Dirac delta  distribution over $\beta$, the cost of uncertainty is $0$, i.e., $\cost(\delta_\beta, x) = 0$ for any $x$.
    \item \textbf{Concavity: } $ \cost(q, x)$  is concave in  $q$ for any $x$. 
    \end{enumerate}
\end{proposition}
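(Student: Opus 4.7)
The plan is to prove the two directions of the characterization separately. The forward direction (validity implies both properties) follows by specializing the definition of $\val$ to the fully informative DPP; the backward direction (the properties imply validity) requires constructing a CBDM that realizes the given $\cost$, and this construction is the main step.

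For the forward direction, I would start from \Cref{eq:def:val-data} applied to $g=\vee$, where the realized data $D = \beta$ is drawn from $q$ and the induced posterior is $\delta_\beta$, so that $a^*(p) = a^*(\delta_\beta)$. Writing $h(a,\beta) := \EE_{y \sim f_\beta(x)+\epsilon}[u(a,y)]$ for the $\beta$-conditional expected utility, the definition expands to
\[
    \cost(q,x) = \EE_{\beta \sim q}\!\left[\max_{a} h(a,\beta)\right] - \max_{a} \EE_{\beta \sim q}[h(a,\beta)].
\]
The first term is linear in $q$ and the second is a pointwise maximum of affine functionals, hence convex in $q$, so their difference is concave in $q$. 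Plugging $q = \delta_\beta$ makes both terms equal to $\max_a h(a,\beta)$, giving $\cost(\delta_\beta, x) = 0$.

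For the backward direction, concavity together with vanishing on the extreme points of the distribution simplex (the Dirac measures) first forces $\cost(q,x) \geq 0$ everywhere, since any $q$ is a mixture of Dirac measures. I then construct a CBDM by taking $\cA = \Delta(\cB)$ and, for each $a \in \cA$, using a supergradient (\Cref{app:def:breg}) of $\cost$ at $a$ to form a tangent affine upper bound $L_a(q) = \cost(a,x) + \langle \nabla\cost(a,x), q - a\rangle$, so that $\cost(q,x) = \inf_{a} L_a(q)$. Defining the loss $\ell(a,\beta) := L_a(\delta_\beta)$ yields a non-negative function (since $L_a \geq \cost$ and $\cost(\delta_\beta,x) = 0$) with $\min_a \ell(a,\beta) = 0$ attained at $a = \delta_\beta$. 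Linearity in $q$ gives $\EE_{\beta \sim q}[\ell(a,\beta)] = L_a(q)$, hence $\min_a \EE_{\beta \sim q}[\ell(a,\beta)] = \cost(q,x)$. Choosing a utility $u$ with $\EE_{y \sim f_\beta(x)+\epsilon}[u(a,y)] = -\ell(a,\beta)$ and plugging back into the forward formula then recovers $\cost(q,x)$ exactly, microfounding the given cost of uncertainty.

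The main obstacle is the final step: realizing the prescribed $\beta$-conditional expectation $-\ell(a,\cdot)$ as $\EE[u(a,y)\mid\beta]$ for some utility $u(a,y)$. Because $y$ depends on $\beta$ only through $f_\beta(x)$ plus noise, the target must factor through $f_\beta(x)$; this factorization holds automatically under the paper's convention that $\cost(q,x)$ is sensitive only to the information in $q$ about $f_\beta(x)$ (as made concrete in the linear-regression instantiation of \Cref{BLR}). Once the factorization is in hand, the conditional-expectation equation can be solved by standard arguments (e.g., deconvolving against the noise kernel, or approximating $u(a,\cdot)$ by simple functions of $y$), completing the construction and closing the argument.
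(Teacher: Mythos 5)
Your overall strategy coincides with the paper's. The forward direction is exactly the paper's argument in \Cref{prop:valandcost}: writing $\cost(q,x)$ as a linear functional of $q$ (the expected utility of the clairvoyant actions) minus a pointwise maximum of affine functionals (the expected utility of the single best uninformed action), so that concavity and vanishing at Dirac measures are immediate. The backward direction is also the same idea as the paper's \Cref{prop:arbitrary}: a Savage/proper-scoring-rule construction in which the action space is a space of beliefs and the utility is built from the supporting affine functionals (supergradients) of the concave cost, so that the truthful belief is optimal and the resulting value/cost pair reproduces the given functional. Both you and the paper also lean on the same unproven step, namely that a cost function satisfying the two axioms factors through the induced outcome distribution, i.e.\ $\cost(q,x)=H(g_q)$ for some functional $H$ on $\Delta(\cY)$ (the paper simply asserts ``$\cost$ can be written as some measure of uncertainty $H$''); without this the backward direction is false, e.g.\ when $f_\beta(x)$ does not depend on $\beta$ every valid cost is identically zero yet many nonzero concave functionals vanish on Dirac measures. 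You at least flag this explicitly, which is to your credit.

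The one place where your execution diverges from the paper's, and where it would run into trouble, is the last step. You take the supporting hyperplanes of $\cost$ on $\Delta(\cB)$ and then try to realize the resulting $\beta$-indexed loss $\ell(a,\beta)$ as $-\EE_{y\sim f_\beta(x)+\epsilon}[u(a,y)]$ by ``deconvolving against the noise kernel.'' That operator is not surjective (with Gaussian noise its range consists of very smooth functions), so for an arbitrary supergradient of $\cost$ on $\Delta(\cB)$ the equation need not be solvable even after the factorization through $f_\beta(x)$ is granted. The paper sidesteps this entirely by taking the supporting hyperplanes of $H$ on $\Delta(\cY)$ instead: the utility is defined directly as $u(a;y)=H(a)-\langle\nabla H(a),a\rangle+\langle\nabla H(a),\delta_y\rangle$, which is affine in the point mass $\delta_y$, so its conditional expectation under $y\sim g_\beta$ is automatically $\langle\nabla H(a),g_\beta\rangle$ plus a constant --- no inversion of the noise kernel is ever needed. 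Your construction is salvageable by the same move (choose the supergradient of $\cost$ to be the pullback of a supergradient of $H$ along the affine map $q\mapsto g_q$), but as written the deconvolution step is a genuine weak point rather than a routine one.
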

The first property is obvious from the definition. To see the second property intuitively, let $q = \lambda q_{1} + (1 - \lambda) q_{2}$ for some $\lambda \in [0, 1]$ and beliefs $q_{1}, q_{2}$, and consider the following situation. Suppose there is some data informing the buyer that $\beta$ is  drawn either from distribution $q_{1}$ (with probability $\lambda$) or from $q_{2}$ (with probability $1 - \lambda$). If the buyer must make a decision prior to observing the signal, his cost of uncertainty is $\cost(\lambda q_{1} + (1 - \lambda) q_{2})$. If the buyer is allowed to make a decision after observing the signal, his expected cost of uncertainty would be $\lambda \cost(q_{1}) + (1 - \lambda)\cost(q_{2})$, which by concavity is no greater than that for when he is not allowed to observe the signal. Thus, concavity is a natural formalization of the intuition that having more information on the parameter distribution will not increase the cost of uncertainty (in this case, the additional information is the signal pinning down whether $\beta$ is drawn from $q_{1}$ or $q_{2}$). What is surprising, however, is that these two natural properties suffice to pin down any valid functions for the cost of uncertainty. 
\section{Omitted Details in \Cref{BLR}}\label{app:res_linear}
We would like to give the concrete definition of ``entropy'' here.
 \begin{definition}[Differential Entropy~\citep{thomas2006elements}]
    \label{defn:differential_entropy}
    The differential entropy $h(Z)$ of a continuous random variable $Z$ with density $\mu(z)$ is $h(Z) = -\int_{z \in S} \mu(z)\log (\mu(z)) dz$,    where $S$ is the support of the random variable $Z$.
\end{definition}
\section{Omitted Details in \Cref{sec:mechanism}}
\subsection{Omitted Details in \Cref{sec:optimal}}
 
For simplicity, we use $\IG(x,\hat x)$ to represent $\IG(g[\hat x];x)$ when easy to infer from context. Let's consider a wider range of DPPs that the seller can add man-made noise $\epsilon'\sim\cN(0,\delta(x)^2)$ when revealing $y$. That is to say, the seller will reveal $(x,y+\epsilon')$ rather than $(x,y)$. After that, we are going to show that adding no noise, i.e., $\delta(x)=0$, can lead to 0-regret and it's definitely what the seller will choose.

Concerning the definition of $\IG(\cdot;\cdot,\cdot)$ in \Cref{eq:value-linear-dpp}, we can establish a lemma exhibiting the
concrete instrumental value of data in detail.
\begin{lemma}\label{lem:concrete_value}
    If the seller gives the buyer $n$ data points, it holds that the instrumental value of data
for the buyer is
\[
\IG(x, \widehat{x})=\IG( \cD^n[\widehat{x}];\cN(0,I_d),x)=\frac{1}{2} \log \left(x^T x\right)-\frac{1}{2} \log \left(x^T\left(I+\frac{n \widehat{x} \widehat{x}^T}{\sigma(\widehat{x})^2+\delta(\widehat{x})^2}\right)^{-1} x\right).
\]
\end{lemma}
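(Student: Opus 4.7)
The plan is to derive the claimed closed-form by direct substitution into Corollary~\ref{lem:valueBLR} (the canonical valuation of producing data records), specializing to the particular DPP described before the lemma: the seller produces $n$ noisy responses along the single feature direction $\widehat{x}$, optionally corrupted by an additional Gaussian $\epsilon' \sim \cN(0,\delta(\widehat{x})^2)$ independent of the model noise.

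First I would set up the design matrix and noise covariance for this DPP. The buyer's prior is $q = \cN(0, I_d)$, so $\Sigma_q = I_d$ and the first term of Corollary~\ref{lem:valueBLR} immediately becomes $\tfrac{1}{2}\log(x^T x)$. The DPP $\cD^n[\widehat{x}]$ is captured by a design matrix $X \in \mathbb{R}^{n\times d}$ whose rows are all equal to $\widehat{x}^T$; the $i$-th response is
\[
y_i = \langle \widehat{x}, \beta\rangle + \epsilon_i + \epsilon'_i,
\]
where $\epsilon_i \sim \cN(0,\sigma(\widehat{x})^2)$ is the model noise and $\epsilon'_i \sim \cN(0,\delta(\widehat{x})^2)$ the man-made noise, all independent. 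Since independent Gaussians add, the effective per-sample noise variance is $\sigma(\widehat{x})^2 + \delta(\widehat{x})^2$, giving the noise covariance $\Sigma = (\sigma(\widehat{x})^2 + \delta(\widehat{x})^2) I_n$.

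Next I would compute the posterior-precision update $X^T \Sigma^{-1} X$ appearing in Corollary~\ref{lem:valueBLR}. Because every row of $X$ equals $\widehat{x}^T$,
\[
X^T X = \sum_{i=1}^n \widehat{x}\widehat{x}^T = n\, \widehat{x}\widehat{x}^T,
\]
so
\[
X^T \Sigma^{-1} X = \frac{n\,\widehat{x}\widehat{x}^T}{\sigma(\widehat{x})^2 + \delta(\widehat{x})^2}.
\]
Plugging this and $\Sigma_q^{-1} = I$ into the expression from Corollary~\ref{lem:valueBLR} yields
\[
\IG(\cD^n[\widehat{x}]; q, x) = \tfrac{1}{2}\log(x^T x) - \tfrac{1}{2}\log\!\left(x^T\!\left(I + \tfrac{n\,\widehat{x}\widehat{x}^T}{\sigma(\widehat{x})^2 + \delta(\widehat{x})^2}\right)^{-1}\! x\right),
\]
which is the claimed formula.

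There is no real obstacle here: the only subtlety is to justify that the man-made noise step keeps the DPP within the scope of Corollary~\ref{lem:valueBLR}, i.e., that adding an independent Gaussian to each response is equivalent to enlarging the noise variance along the same feature direction. This follows because the joint distribution of $(X, Y)$ under the expanded noise is still of the linear-Gaussian form required by Corollary~\ref{lem:valueBLR}, with diagonal noise covariance $\Sigma = \diag\{\sigma(\widehat{x})^2 + \delta(\widehat{x})^2,\ldots,\sigma(\widehat{x})^2+\delta(\widehat{x})^2\}$. Once this identification is made, the lemma reduces to the algebraic substitution above.
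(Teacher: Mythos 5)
Your proposal is correct and essentially matches the paper's proof: the paper computes the posterior precision $I_d+\frac{n\hat x\hat x^T}{\sigma(\hat x)^2+\delta(\hat x)^2}$ directly via Bayes' rule (noting $\Var(\epsilon+\epsilon')=\sigma^2+\delta^2$) and plugs it into Theorem~\ref{lem:infogain}, which is the same computation you package by invoking Corollary~\ref{lem:valueBLR} with the enlarged noise covariance. The justification you give for folding the man-made noise into the per-sample variance is exactly the one the paper uses.
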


By revelation principle \citep{gibbard1973manipulation,holmstrom1978incentives,myerson1979incentive}, {we know that any social choice function implemented by an arbitrary mechanism can be implemented by an incentive-compatible-direct-mechanism with the same equilibrium outcome.} Therefore, we focus on mechanisms that satisfy incentive-compatibility (IC) constraint and {at least one optimal mechanism belongs to this subclass. In the context of our model, incentive-compatibility constraint means that the buyer has the motivation to report his own private type truthfully considering the allocation rule $\cD[\cdot]$ and payment rule $t(\cdot)$.} To ensure the buyer is motivated to purchase data, we further impose individual rationality (IR) constraints. {It means that a rational buyer only participates in a transaction when his expected revenue is no less than zero. }Formally, the IC and IR constraints for a mechanism $(\cD[\cdot], t(\cdot))$ are
\begin{align*}
    \label{eq:defn_ir}
    &\textrm{IR}  &\IG(x, x) &\geq t(x)\textrm{ for all }x,\\
    &\textrm{IC}  &\IG(x, x) - t({x}) &\geq \IG(x, \hat x) - t(\hat x)\textrm{ for all }x, \hat{x} .
\end{align*}
{In other words, IC constraint means that truthful reporting is a dominant strategy for the buyer and he has no incentive to report a wrong type as there are no benefits by doing so. IR constraint gives the buyer a reason to join the market and motivates the transaction as his expected utility is non-negative.}

Since we assume $\sigma(x)=\sigma(\frac{x}{\|x\|})\|x\|$, it is evident that we could solely consider the direction of the type, rather than its length. This can be achieved by substituting 
$x$ and $\hat x$ 
  with 
$\frac{x}{\|x\|}$ and $\frac{\hat x}{\|\hat x\|}$, respectively. We then have
\[
\IG( x,\hat x)=\IG(\frac{x}{\| x\|},\frac{\hat x}{\|\hat x\|}).
\]
We thus restrict our attention to types with a unit norm, i.e. 
$\cS^{d-1}=\{x:\|x\|=1\}$, {which is equivalent to considering a general space in $\RR^{d-1}$}. 
Inspired by the method in \citet{myerson1981optimal}, we have the following lemma when $x$ has multiple dimensions.
\begin{lemma}\label{lem:paymentgrad}
    For any payment rule satisfying the IC constraint, it holds that
\[
\nabla t(x)=\nabla_y \IG(x,y)\given_{y=x}.
\]
Therefore, it leads to the following form of payment rule that
\[
t(x)=\int_{x_0}^x \nabla_y \IG(s,y)\given_{y=s}\cdot ds+t(x_0)=\frac{1}{2}\log(\frac{\sigma(x)^2+\delta(x)^2+n}{\sigma(x)^2+\delta(x)^2})+C,
\]
and it's independent of the selection of $x_0$ and the path of integration.
\end{lemma}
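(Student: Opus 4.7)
The plan is to adapt Myerson's one-dimensional envelope argument to the multi-dimensional type space $\cS^{d-1}$, then exploit the specific structure of $\IG$ to establish path-independence and recover the closed form.

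First I would establish the gradient identity $\nabla t(x)=\nabla_y \IG(x,y)\given_{y=x}$ from the IC constraint. The IC condition says $\hat x=x$ maximizes $\IG(x,\hat x)-t(\hat x)$ over reports, so the first-order necessary condition at the optimum yields
\[
\nabla_{\hat x}\bigl[\IG(x,\hat x)-t(\hat x)\bigr]\Big|_{\hat x=x}=0,
\]
i.e., $\nabla t(x)=\nabla_y \IG(x,y)\given_{y=x}$. Equivalently, writing the truth-telling utility $u(x):=\IG(x,x)-t(x)$ as the maximum over reports and applying the envelope theorem gives $\nabla u(x)=\nabla_x \IG(x,\hat x)\given_{\hat x=x}$, which, combined with the chain-rule identity $\nabla u(x)=\nabla_x \IG(x,x)+\nabla_y \IG(x,y)\given_{y=x}-\nabla t(x)$, yields the same conclusion.

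Next I would upgrade this pointwise identity into the line-integral representation. The main obstacle, and the crucial new difficulty compared to Myerson's one-dimensional setting where monotonicity alone suffices, is path-independence: for $t(x)=t(x_0)+\int_{x_0}^x F(s)\cdot ds$ to be well-defined along any path, the vector field $F(s):=\nabla_y \IG(s,y)\given_{y=s}$ must be conservative on $\cS^{d-1}$. My strategy is to exhibit a scalar potential, and the natural candidate is $\Phi(x):=\IG(x,x)$. By the chain rule $\nabla \Phi(x)=\nabla_x \IG(x,y)\given_{y=x}+F(x)$, so it suffices to show that $\nabla_x \IG(x,y)\given_{y=x}=0$ on the unit sphere.

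This last step is a direct Sherman--Morrison calculation using the closed form from \Cref{lem:concrete_value}. Expanding $\IG(x,y)$ as $\tfrac{1}{2}\log(x^Tx)-\tfrac{1}{2}\log\!\bigl(x^Tx-\tfrac{n(x^Ty)^2}{\sigma(y)^2+\delta(y)^2+n\|y\|^2}\bigr)$ and computing the ambient gradient in $x$, the two terms each contribute $x$ when evaluated at $y=x$ with $\|x\|=1$, so they cancel exactly and $\nabla_x \IG(x,y)\given_{y=x}=0$. Hence $\nabla \Phi=F$ in ambient space, so $F$ is the intrinsic gradient of $\Phi$ on the sphere, the line integral is path-independent, and its value along any path from $x_0$ to $x$ equals $\Phi(x)-\Phi(x_0)$. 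A final Sherman--Morrison evaluation gives $\IG(x,x)=\tfrac{1}{2}\log\!\bigl(\tfrac{\sigma(x)^2+\delta(x)^2+n}{\sigma(x)^2+\delta(x)^2}\bigr)$, and setting $C:=t(x_0)-\IG(x_0,x_0)$ yields the advertised closed form; the constant $C$ is manifestly independent of $x_0$ and the chosen path.
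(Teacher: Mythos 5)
Your proposal is correct, and while the first step (deriving $\nabla t(x)=\nabla_y \IG(x,y)\given_{y=x}$ from IC) matches the paper's sandwich/first-order-condition argument, your treatment of path-independence and the closed form is a genuinely different and arguably cleaner route. The paper computes the vector field $F(s)=\nabla_y\IG(s,y)\given_{y=s}$ explicitly via Sherman--Morrison, observes that the radial component is killed by $x\cdot dx=0$ on the sphere, integrates the remaining $\sigma,\delta$ terms directly, and then separately asserts conservativeness by appeal to the criterion of \citet{jehiel1999multidimensional}. You instead exhibit an explicit potential $\Phi(x)=\IG(x,x)$ and reduce everything to the single identity $\nabla_x\IG(x,y)\given_{y=x}=0$ on the unit sphere, which your Sherman--Morrison computation correctly verifies (the two terms contribute $+x$ and $-x$ respectively, so they cancel; your phrasing ``each contribute $x$'' is slightly loose but the cancellation is right). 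This makes conservativeness manifest rather than asserted, delivers the closed form in one stroke as $\Phi(x)-\Phi(x_0)+t(x_0)$, and has a nice economic reading the paper leaves implicit: $\nabla_x\IG(x,y)\given_{y=x}=0$ says the truthful buyer's surplus $\IG(x,x)-t(x)$ is constant across types, which is exactly the full-surplus-extraction structure exploited later in \Cref{thm:optimalmech}. The only caveats are cosmetic: like the paper, you implicitly assume $t$ is differentiable when invoking the first-order condition, and you should note that the identity $\nabla\Phi=F$ is only needed (and only holds) at points of the sphere, which suffices since the integration path lies on $\cS^{d-1}$.
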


Therefore, for any $\delta(x)$, we have the corresponding optimization problem for optimal mechanism, that is
\begin{equation*}\label{optimization_problem}
      \max_{C,\delta(\cdot)} \EE_x t(x)=\EE_x [C+\frac{1}{2}\log(\frac{\sigma(x)^2+\delta(x)^2+n}{\sigma(x)^2+\delta(x)^2})]\ind({\rm IR\ holds}),
\end{equation*}
where $$\ind({\rm IR\ holds})=
  \ind(C+\frac{1}{2}\log(\frac{\sigma(x)^2+\delta(x)^2+n}{\sigma(x)^2+\delta(x)^2})+\frac{1}{2}\log(x^T(I+\frac{nxx^T}{\sigma(x)^2+\delta(x)^2})^{-1}x)\le 0).$$  

{Though information gain is a widely used metric, our analysis method can be extended to a wider family of instrumental values as long as keeping the concavity property as what $\log(\cdot)$ does. We add the following remark, inspired by $f$-divergence~\citep{renyi1961measures}, to show how to generalize our results.} 
\begin{remark}
The method can be easily extended to any $\IG(\cdot,\cdot)$ function following the form
\[
\IG(x,\hat x)=f(\frac{x^Tx}{x^T(I+\frac{\hat x\hat x^T}{(\sigma(\hat x)^2+\delta(\hat x)^2)/n})^{-1}x}),
\]
where $f(\cdot)$ is a concave function, and the results remain valid when $t(x)$ is of the form
\[
t(x)=C+f(\frac{\sigma(x)^2+\delta(x)^2+n}{\sigma(x)^2+\delta(x)^2}).
\]
\end{remark}

With the characteristics of $t(\cdot)$, we can conclude that when the seller is embedded with the ability of perfect customization, she can obtain the first-best revenue and leave zero consumer surplus to the buyer, say \Cref{thm:optimalmech}. It reflects the price discrimination scenario in the data market and shows corresponding unfairness which is quite different than the situation in other markets.
\subsection{Omitted Details in \Cref{sec:subopt}}\label{app:res_part}
Let us recall what data the seller has. The seller produces $n$ data points 
$\{(x_i,y_i)\}_{i=1}^n$ consisting dataset (DPP resp.) $D$ ($\cD^X$ resp.). Here, 
$y_i=x_i^T\beta+\epsilon$, where $\epsilon$ follows a normal distribution with mean 0 and variance 
$\sigma(x_i)^2$. For simplicity, let 
$\DM\in\RR_{n\times d}$ denote the design matrix of the whole dataset and $
Y$ the corresponding responses. We use 
$\Sigma(\DM)$ to represent the noise covariance matrix, i.e.
$\Sigma(\DM) = \diag\{\sigma(x_1)^2,...,\sigma(x_n)^2\}$.



In this particular type of information flow, the buyer will not experience as much unfairness as the one in the situation described in \Cref{sec:optimal}. This implies that offering customization options may have negative consequences for society, and raises questions about how to regulate market power in the data market. However, it is important to note that the buyer's surplus will be bounded from above by a constant {unrelated to the buyer's type $x$}, which means that in a monopolized market, buyers will likely find it difficult to attain high surplus.


Similar to \Cref{sec:optimal}, we only consider mechanisms that satisfy both IC and IR constraints, {that is, 
\begin{align*}
    &\textrm{IR}  &\IG(x, x) &\geq t(x)\textrm{ for all }x,\\
    \label{eq:defn_ic}
    &\textrm{IC}  &\IG(x, x) - t({x}) &\geq\IG(x, \hat x) - t(\hat x) \textrm{ for all }x, \hat{x}.
\end{align*}
}{As we discussed before, if it's impossible to achieve the first-best revenue, i.e., 0-regret, in this subclass of mechanisms, there won't exist a mechanism achieving the first-best revenue among all unconstrained mechanisms.}


However, as demonstrated by \Cref{thm:surplusucb}, even though the seller cannot achieve the first-best revenue, the buyer's surplus is also upper bounded by a constant relative only to the pre-produced 
$\DM$, rather than his personal type 
$x$. It is important to note that the buyer may have varying consumer surpluses for different personal types, while the union of upper bounds holds for all types 
$x$.{The proof of} \Cref{thm:surplusucb} leads directly to the following corollary about the seller that she can achieve within an additive factor of $\log(\kappa)$ of the first-best revenue.

\begin{corollary}\label{coroll:surplus}
    For any pre-produced dataset $\DM$, there exists a mechanism in which the expectation of the seller's revenue is at most $\log(\kappa((\sqrt{\Sigma(\DM)})^{-1}\DM))$ less than the first-best revenue.
\end{corollary}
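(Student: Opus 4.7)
The corollary is an immediate consequence of \Cref{thm:surplusucb}: the SVD Mechanism of \Cref{algo:suboptimal} is by construction an IC, IR mechanism whose regret---the worst-case additive gap between the first-best benchmark revenue $\IG(\cD^X;x)$ and the expected revenue $t(x)$ at truthful reports---is at most $\log(\kappa((\sqrt{\Sigma(\DM)})^{-1}\DM))$. So the substantive work lies in proving the theorem, which I outline below.

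First I would normalize to reduce to isotropic noise: rescaling rows of $X$ and $Y$ by $(\sqrt{\Sigma(X)})^{-1}$ is information-preserving, so we may assume $\Sigma(X) = I$. Using the SVD $X = U\begin{bmatrix}S\\0\end{bmatrix}V$ with $S = \diag\{\lambda_1,\ldots,\lambda_d\}$, the matrix $L$ of the algorithm satisfies $L^T X = I_d$, yielding the identity $b(\hat x)^T X = \hat x^T/\|L\hat x\|$. Hence the information added by the allocated single data point is the rank-one matrix $\hat x\hat x^T/\|L\hat x\|^2$, and Sherman--Morrison collapses the allocated value and payment to
\[
\IG(\cD[\hat x];x) = \tfrac12\log(x^Tx) - \tfrac12\log\!\left(x^Tx - \tfrac{(x^T\hat x)^2}{\|L\hat x\|^2+\hat x^T\hat x}\right), \quad t(\hat x) = \tfrac12\log\!\tfrac{\|L\hat x\|^2+\hat x^T\hat x}{\|L\hat x\|^2}.
\]

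From these closed forms I would verify IC and IR directly. Both $\cD[\cdot]$ and $t(\cdot)$ are invariant under positive scaling of $\hat x$, so the buyer's problem reduces to picking a direction. Plugging $\hat x = x$ gives $\IG(\cD[x];x) = t(x)$, so IR binds and the buyer's utility at truth is $0$. The IC inequality $\IG(\cD[\hat x];x) - t(\hat x) \leq 0$ simplifies algebraically to $(x^T\hat x)^2 \leq (x^T x)(\hat x^T \hat x)$, which is precisely the Cauchy--Schwarz inequality.

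The main obstacle is the regret bound. Writing $z = Vx$, \Cref{lem:valueBLR} and direct SVD computation yield $x^Tx = \sum_i z_i^2$, $x^T(I+X^TX)^{-1}x = \sum_i z_i^2/(1+\lambda_i^2)$, and $\|Lx\|^2 = \sum_i z_i^2/\lambda_i^2$. Substituting gives
\[
\IG(\cD^X;x) - t(x) = \tfrac12\log\!\left(\frac{\bigl(\sum_i z_i^2\bigr)\bigl(\sum_i z_i^2/\lambda_i^2\bigr)}{\bigl(\sum_i z_i^2/(1+\lambda_i^2)\bigr)\bigl(\sum_i z_i^2(1+\lambda_i^2)/\lambda_i^2\bigr)}\right).
\]
Setting $w_i = z_i^2$, Cauchy--Schwarz applied with $a_i = \sqrt{w_i/(1+\lambda_i^2)}$ and $b_i = \sqrt{w_i(1+\lambda_i^2)/\lambda_i^2}$ shows the denominator is at least $(\sum_i w_i/\lambda_i)^2$, reducing the task to bounding $(\sum_i w_i)(\sum_i w_i/\lambda_i^2)/(\sum_i w_i/\lambda_i)^2$. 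This is the ratio of the second moment to the squared first moment of a bounded random variable valued in $[1/\lambda_{\max},1/\lambda_{\min}]$; by a Kantorovich-style extremal argument (the worst case concentrates on two extreme values), this ratio is at most $(\lambda_{\max}+\lambda_{\min})^2/(4\lambda_{\min}\lambda_{\max}) \leq \kappa^2$, which delivers the $\log\kappa$ bound after the $\tfrac12$ factor. Verifying the extremal reduction rigorously---ideally by a convexity argument in the weights $w_i$---is the main technical step.
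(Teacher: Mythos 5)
Your proposal is correct and, at the level of overall architecture, matches the paper: the corollary is indeed an immediate consequence of the regret guarantee for the SVD mechanism (the paper's \Cref{thm:surplusucb}, proved via \Cref{thm:suboptimal}), and your outline reproduces the same normalization by $(\sqrt{\Sigma(\DM)})^{-1}$, the same allocation $b(\hat x)=L\hat x/\|L\hat x\|$ with $L^T\DM=I$, the same payment rule, and the same key Cauchy--Schwarz step that reduces the regret ratio to $\bigl(\sum_i w_i\bigr)\bigl(\sum_i w_i/\lambda_i^2\bigr)/\bigl(\sum_i w_i/\lambda_i\bigr)^2$. Two steps genuinely differ, both in your favor on elegance. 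For incentive compatibility, the paper (\Cref{lem:IC}) argues via a first-order condition: it shows $\hat x$ minimizes the strongly convex quadratic $x\mapsto x^T(I+\DM^Tb(\hat x)b(\hat x)^T\DM)^{-1}x$ over the sphere $\{\|x\|=\|\hat x\|\}$ by checking the gradient is orthogonal to the tangent space. Your route instead uses Sherman--Morrison to collapse the IC inequality to $(x^T\hat x)^2\le (x^Tx)(\hat x^T\hat x)$ outright; I verified this algebra and it is exactly right, holds for arbitrary (not just equal-norm) $x,\hat x$, and is cleaner than the paper's argument. For the final inequality, the paper crudely bounds the numerator by $1/\lambda_{\min}^2$ and the denominator by $1/\lambda_{\max}^2$ to get $\kappa^2$; your Kantorovich-style two-point extremal argument gives the sharper constant $(\lambda_{\max}+\lambda_{\min})^2/(4\lambda_{\max}\lambda_{\min})$, which is at most $\kappa$ (not merely $\kappa^2$), so if carried through it would actually improve the regret bound to $\tfrac12\log\kappa$ --- the paper itself remarks that a sharper but less interpretable bound exists. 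The extremal reduction you flag as the remaining technical step is standard (for fixed mean, $\EE[T^2]$ is maximized by a two-point law on the endpoints since $t\mapsto t^2$ is convex), but note it is not actually needed: the paper's cruder term-by-term bound already suffices for the stated $\log\kappa$.
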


Now, let's give the definition of multi-armed bandits setting in our data pricing scenario.
{
\begin{definition}
    We call a setting having a form of multi-armed bandits if every type vector $x$ is a member of the standard basis of a $d$-dimensional Euclidean space $\RR^{d}$. In other words, only one component of $x$ is one and the others are all zeros. 
\end{definition}
It is closely related to multi-armed bandits, the famous machine learning model. In multi-armed bandits, observations of rewards of one arm contain no information about rewards of other arms. In the data pricing, the multi-armed bandits setting means that data points associated with type $x$ involve no information about other types. Using a geometric perspective, it illustrates that the information corresponding to different types is orthogonal. Now, we have a corollary in which the seller can obtain the first-best revenue.}
\begin{corollary}\label{thm:mab}
            In the multi-armed bandits (MAB) setting, there exists a mechanism that satisfies both IC and IR, and achieves the first-best revenue, i.e., 0-regret.
\end{corollary}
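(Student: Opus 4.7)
The plan is to construct a simple direct mechanism that exploits the orthogonal structure of the MAB setting. Because every buyer's type lies in $\{e_1,\ldots,e_d\}$ and, correspondingly, every data point $x_i$ in the seller's dataset lies along some $e_k$, the sufficient-statistics matrix is diagonal: writing $c_k = \sum_{i:\,x_i=e_k}\sigma(x_i)^{-2}$, we have $\DM^T\Sigma(\DM)^{-1}\DM = \sum_k c_k\, e_k e_k^T$. Partition the seller's records by direction into sub-datasets with design matrices $\DM_k$ and label vectors $Y_k$, and let $\cD^{\DM_k}$ denote the sub-DPP consisting of the labels along direction $e_k$. The proposed mechanism is: on report $\hat x = e_k$, allocate $\cD^{\DM_k}$ and charge $t(e_k) := \IG(\cD^{\DM}; e_k)$, the first-best value for type $e_k$.

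The analysis reduces to two closed-form identities obtained by plugging the diagonal $\DM^T\Sigma(\DM)^{-1}\DM$ into the valuation formula in \Cref{lem:valueBLR}. Since $(I + \DM^T\Sigma(\DM)^{-1}\DM)^{-1} = \sum_k (1+c_k)^{-1} e_k e_k^T$, I obtain: (i) $\IG(\cD^{\DM}; e_j) = \IG(\cD^{\DM_j}; e_j) = \tfrac{1}{2}\log(1+c_j)$, i.e., data from orthogonal directions contribute nothing to the valuation of a buyer of type $e_j$; and (ii) $\IG(\cD^{\DM_k}; e_j) = 0$ whenever $k \neq j$, i.e., data collected along a different standard-basis direction has zero value to a type-$e_j$ buyer. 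Identity (ii) is a one-line Sherman--Morrison computation showing $(I + c_k e_k e_k^T)^{-1} e_j = e_j$, so both log-terms in the valuation formula cancel.

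From these identities the verification is immediate. IR holds because truthful reporting yields utility $\IG(\cD^{\DM_j}; e_j) - \IG(\cD^{\DM}; e_j) = 0$ by (i); IC holds because any misreport $e_k \neq e_j$ yields $\IG(\cD^{\DM_k}; e_j) - \IG(\cD^{\DM}; e_k) = -\IG(\cD^{\DM}; e_k) \leq 0$ by (ii). Since the seller collects $\IG(\cD^{\DM}; e_j)$ from each truthful type $e_j$, the mechanism's revenue matches the first-best benchmark on every realization, giving $0$-regret. The main obstacle is really only identity (i) --- that the ``off-direction'' data do not improve the posterior along $e_j$ --- and this is precisely what the MAB orthogonality assumption buys us. Without it, one obtains the $\log\kappa$ gap of \Cref{thm:surplusucb}; with it, the proof of \Cref{thm:mab} collapses to an algebraic diagonal-matrix inversion, with no envelope or path-integral argument of the type used in \Cref{thm:optimalmech} needed.
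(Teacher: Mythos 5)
Your proof is correct, and the mechanism you construct is informationally equivalent to the paper's: the paper specializes the general SVD mechanism, whose allocation $b(\hat x)^T\DM$ in the MAB case is exactly the normalized aggregate of the seller's observations on the reported arm, while you hand over the full sub-dataset along $e_k$ --- both induce the same posterior precision $I + c_k e_k e_k^T$. Where you genuinely diverge is in the verification. The paper inherits IC and IR wholesale from the general SVD lemmas (\Cref{lem:IR}, \Cref{lem:IC}) and establishes first-best revenue via the chain rule of conditional entropy, arguing that data on independent arms contributes zero conditional information gain. You instead verify everything from scratch using two closed-form diagonal-matrix identities: that $(I+\sum_k c_k e_ke_k^T)^{-1}$ is diagonal (so $\IG(\cD^{\DM};e_j)=\tfrac{1}{2}\log(1+c_j)$ depends only on the $j$-th arm's data) and that $(I+c_k e_ke_k^T)^{-1}e_j=e_j$ for $k\neq j$ (so off-arm data is worthless). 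This buys a fully self-contained, elementary argument that never touches the SVD machinery, the path-integral payment characterization, or conditional entropy; what it gives up is the structural insight the paper emphasizes --- that the \emph{general} limited-customization mechanism automatically collapses to first-best here, i.e., that the MAB setting is the $\kappa=1$-like endpoint of \Cref{thm:surplusucb} rather than an ad hoc special case. Both routes are sound; yours is the shorter standalone proof, the paper's is the one that explains why the corollary sits where it does in the broader theory.
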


Together with \Cref{thm:surplusucb}, \Cref{thm:isotropic,thm:mab} show that the quantity of the consumer surplus not only depends on the concrete properties of the dataset $\DM$, but probably related information contained in the structure of the buyer's private types as well.
In light of the universal \Cref{thm:surplusucb}, insofar as the diverse directions of data hold distinct values for the buyer, the seller harbors a motivation to induce the buyer, possessing a greater willingness to pay, to disclose his authentic type. To devise a truthful mechanism, the seller must relinquish some charge in specific instances to attain the maximum expected revenue.
Notwithstanding, as per \Cref{thm:isotropic}, if each direction is isotropic, then it will confer equal value to the buyer. Hence, it discloses the buyer's willingness to pay, which is pivotal for the seller to implement price discrimination. It illuminates the {crucial point} of mechanism design in the data market, where the seller's focus is not on the buyers' individual type, but rather on their willingness to pay.
\Cref{thm:mab} however highlights the role of information leakage. In the context of multi-armed bandits, data from other arms provides no information then no benefit to the buyer, dissuading the buyer from untruthful reports, creating a natural case of price discrimination.


\section{Omitted Proof in Section~\ref{sec:valueofdata}}

\label{sec:omitted-proof-valueofdata}
The proofs for \Cref{thm:val-of-data}, \Cref{thm:cost-of-uncertainty}, and \Cref{thm:bregman-divergence} are closely related.
For the only if direction, i.e. showing that the desiderata imply the existence of some corresponding decision-making problem, all three results share the same underlying constructed instance. As such, we organize our proof by first introducing a series of propositions and lemmas that will be essential for the proof, and then justifying the results by combining these results.

We introduce two concepts, measure of information and measure of uncertainty, which serve as the statistical analog to \Cref{defn:val-info-cost-uncertainty} and are crucial to formally establishing the results in \Cref{sec:valueofdata}.
\begin{definition}
    \label{defn:measure-of-info-uncertainty}
    Any functional $d(\mu, \nu): \Delta(\cY) \times \Delta(\cY) \to \RR$ is a \textbf{measure of information} and any $H(\mu): \Delta(\cY) \to \RR$ is a \textbf{measure of uncertainty}. We say $d$ and $H$ are \textbf{coupled} if for any signaling scheme $s$, we have $\EE[d(\mu(s), \mu)] = \EE[H(\mu) - H(\mu(s))]$, where $\mu(s)$ denotes the posterior distribution given by Bayes rule after observing the signal $s$.
\end{definition}
The conditions imposed in Propositions~\ref{thm:val-of-data} and~\ref{thm:cost-of-uncertainty} can then be similarly defined as follows.
\begin{definition}
    \label{defn:properties-measures-information}
    We say a measure of information $d$ satisfies \textbf{no value for the null data} dubbed \textbf{null-information} if $d(\mu, \mu) = 0$ for all $\mu \in \Delta(\cY)$, \textbf{positivity} if $d(\mu, \nu) \geq 0$ for all $\mu, \nu \in \Delta(\cY)$, and \textbf{invariance to data acquisition} dubbed \textbf{order-invariance} if for any pair of signaling schemes $\pi_{s_1}$ and $\pi_{s_2}$, the expected information gain is independent of the order in which the signals $s_1$ and $s_2$ are observed, i.e.
    \[
        \EE_{s_1, s_2}[d(\mu(s_1), \mu) + d(\mu(s_1 \cap s_2), \mu(s_1))] = \EE_{s_1, s_2}[d(\mu(s_2), \mu) + d(\mu(s_2 \cap s_1), \mu(s_2))].
    \]
\end{definition}

\begin{definition}
    \label{defn:properties-measures-uncertainty}
    We say a measure of uncertainty $H$ satisfies \textbf{0 cost under certainty} dubbed \textbf{null-uncertainty} if $H(g_\beta) = 0$ for any $\beta$ and \textbf{concavity} if $H$ is a concave functional over $\Delta(\cY)$.
\end{definition}

For the rest of the proof, we heavily make use of the notions $d$ and $H$, for they directly measure the distance between distributions or the uncertainty within some distribution. Compared with the economic formulation in Section~\ref{sec:valueofdata}, these notions are easier to manipulate mathematically. We begin by first showing $d$ should be the Bregman divergence of $H$.

We use $\mu, \nu$ in Lemma~\ref{prop:bregman} to emphasize that the result holds for generic distributions $\mu, \nu \in \Delta(\cY)$, not necessarily those induced by some belief over the parameter space and a suitable model $g$.
\begin{lemma}\label{prop:bregman}
Let $d(\mu, \nu)$ denote some measure of distance between distributions $\mu, \nu \in \Delta(\cY)$. Let $H(\nu)$ denote a measure of information that satisfies null-uncertainty and concavity. Then the following two statements are equivalent
\begin{enumerate}
    \item $d$ satisfies null-information, positivity, order-invariance, and is coupled with $H$.
    \item $d$ is a Bregman divergence of $H$.
\end{enumerate}
\end{lemma}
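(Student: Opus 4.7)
The plan is to prove the equivalence by verifying both implications separately: the direction $(2) \Rightarrow (1)$ is largely a direct computation, while $(1) \Rightarrow (2)$ requires recovering the Bregman form from the coupling relation by means of a carefully engineered Bayes-plausible signal. I expect the second direction to carry essentially all the technical weight.

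For $(2) \Rightarrow (1)$, assume $d(\mu,\nu) = H(\nu) - H(\mu) + \langle \nabla H(\nu), \mu - \nu \rangle$. Null-information is obtained by setting $\mu = \nu$. Positivity is exactly the supergradient inequality that characterizes concavity of $H$. For the coupling relation, take the expectation of the Bregman form along the random posterior $\mu(s)$ produced by a signal: the martingale property $\EE[\mu(s)] = \mu$ kills the linear term $\EE[\langle \nabla H(\mu), \mu(s) - \mu\rangle] = 0$, leaving $\EE[d(\mu(s), \mu)] = \EE[H(\mu) - H(\mu(s))]$. Order-invariance then follows for free: applying coupling twice, both sides of its defining equation telescope into $\EE[H(\mu) - H(\mu(s_1 \cap s_2))]$, which is manifestly symmetric in the two signals.

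For $(1) \Rightarrow (2)$, the strategy is to read off the value $d(\nu,\mu)$ directly from the coupling identity by choosing a Bayes-plausible binary signal that concentrates most of its mass on $\mu$ and only a vanishing sliver on $\nu$. Concretely, for small $\epsilon \in (0,1)$ take the perturbed prior $\mu_\epsilon := (1-\epsilon)\mu + \epsilon\nu$ and the signal whose posteriors are $\mu$ with probability $1-\epsilon$ and $\nu$ with probability $\epsilon$. The coupling identity reads
\[
(1-\epsilon)\, d(\mu, \mu_\epsilon) + \epsilon\, d(\nu, \mu_\epsilon) = H(\mu_\epsilon) - (1-\epsilon)H(\mu) - \epsilon H(\nu).
\]
A first-order expansion of $H$ along any supergradient $\nabla H(\mu)$ identifies the right-hand side with $\epsilon \bigl[H(\mu) - H(\nu) + \langle \nabla H(\mu), \nu - \mu\rangle\bigr] + o(\epsilon)$, i.e.\ $\epsilon$ times the candidate Bregman divergence $D_H(\nu,\mu)$. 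Dividing by $\epsilon$ and taking $\epsilon\downarrow 0$ then isolates $d(\nu,\mu) = D_H(\nu,\mu)$, provided I can show that the ``undesired'' term $(1-\epsilon)\,d(\mu,\mu_\epsilon)$ is $o(\epsilon)$.

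The hard part will be controlling this residual. Positivity alone gives only the one-sided bound $\epsilon\,d(\nu,\mu_\epsilon) \le \epsilon D_H(\nu,\mu) + o(\epsilon)$. The matching lower bound requires ruling out the possibility that a non-negligible fraction of the information content is absorbed into $d(\mu,\mu_\epsilon)$. I plan to obtain it by exploiting order-invariance: compare the two orderings of (i) the $\epsilon$-splitting signal at prior $\mu_\epsilon$ and (ii) a further refining signal issued at the posterior $\mu$. Order-invariance forces an auxiliary identity which, together with positivity of every $d$-term and the $O(\epsilon)$ estimate $H(\mu_\epsilon) - H(\mu) = \epsilon \langle \nabla H(\mu), \nu-\mu\rangle + o(\epsilon)$, pins $d(\mu,\mu_\epsilon)$ at $o(\epsilon)$ along the family $\{\mu_\epsilon\}$. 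Running the symmetric argument with the roles of $\mu$ and $\nu$ swapped, and invoking continuity of $H$ on the segment connecting them (which follows from concavity of $H$ and the null-uncertainty normalization), then extends the identity $d(\nu,\mu) = D_H(\nu,\mu)$ to arbitrary pairs of distributions, completing the reverse implication and hence the lemma.
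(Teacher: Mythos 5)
Your direction $(2)\Rightarrow(1)$ is correct and is essentially the paper's argument verbatim: positivity from the supergradient inequality, coupling from the martingale property of posteriors killing the linear term, and order-invariance by telescoping the coupling identity. No issues there.

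Your direction $(1)\Rightarrow(2)$, however, takes a perturbative route that has genuine gaps, and the places where it strains are exactly the places the paper's argument is designed to avoid. First, the expansion $H(\mu_\epsilon)-H(\mu)=\epsilon\langle\nabla H(\mu),\nu-\mu\rangle+o(\epsilon)$ is not justified for an \emph{arbitrary} superdifferential: for a concave functional whose superdifferential at $\mu$ is not a singleton, the one-sided directional derivative in the direction $\nu-\mu$ is an infimum over supergradients and need not coincide with the pairing against your chosen $\nabla H(\mu)$. The supergradient inequality only gives you $H(\mu_\epsilon)-H(\mu)\le\epsilon\langle\nabla H(\mu),\nu-\mu\rangle$, which (combined with positivity of both $d$-terms) yields the one-sided bound $d(\nu,\mu_\epsilon)\le D_H(\nu,\mu)$ and $d(\mu,\mu_\epsilon)=O(\epsilon)$ — but not the matching lower bound, and not $o(\epsilon)$ for the residual. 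Second, even granting $d(\nu,\mu_\epsilon)\to D_H(\nu,\mu)$, concluding $d(\nu,\mu)=D_H(\nu,\mu)$ requires continuity of $d$ in its second argument, which is nowhere among the hypotheses; continuity of $H$ (itself not automatic from concavity in this infinite-dimensional setting) would not supply it. Your closing appeal to order-invariance plus "continuity on the segment" is where the proof would actually have to live, and as sketched it does not close.

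The paper sidesteps all limits by using the coupling identity over \emph{all} Bayes-plausible distributions of posteriors at once, not just the binary $\epsilon$-split. Setting $\gamma(\mu;\nu)=d(\mu,\nu)-H(\nu)+H(\mu)$, coupling forces $\EE_{\mu\sim\tau}[\gamma(\mu;\nu)]=0$ for every $\tau$ with barycenter $\nu$; testing this against the two-point splits $\{\nu+\Delta,\nu-\Delta\}$ and $\{\nu+C\Delta,\nu-\Delta\}$ shows $\gamma(\cdot;\nu)$ is exactly linear, hence $\gamma(\mu;\nu)=\langle f(\nu),\mu-\nu\rangle$ for some functional $f(\nu)$, and \emph{then} positivity of $d$ identifies $f(\nu)$ as a superdifferential of $H$ a posteriori. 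This is the step your approach is missing: rather than fixing a supergradient in advance and trying to verify the formula by a limit, the identity is derived exactly at every pair and the supergradient is extracted from it. I would restructure your argument along these lines, or else add explicit continuity and differentiability hypotheses — at which point you would be proving a weaker statement than the lemma claims.
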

We can further refine the results by focusing on distributions that can be expressed as some $g_q$ which is the convolution of the prior $q$ and DPP $g_\beta$. In other words, $q$ is some belief over $\beta$ and $g_\beta$ the model itself.
\begin{proposition}\label{prop:defineH}
    \label{prop:measure-uncertainty-couple-null-uncertainty}
    For any $l$ and measure of information $d$ that satisfies null-information and order-invariance, let $H(g_q) = \int {q}(\beta)d(g_{\beta}, g_q)d\beta$. Then $H$ is coupled with $d$ and satisfies null-uncertainty.
\end{proposition}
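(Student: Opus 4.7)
The plan is to handle the two claims separately. The first claim (null-uncertainty) will follow essentially by inspection from null-information, while the second claim (coupling) will be derived from order-invariance applied to a very specific pair of signaling schemes.

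For null-uncertainty, I would take $q = \delta_{\beta_0}$ to be a Dirac mass at an arbitrary parameter $\beta_0$, so that $g_q = g_{\beta_0}$. Substituting into the definition gives
\[
H(g_{\beta_0}) = \int \delta_{\beta_0}(\beta)\, d(g_\beta, g_{\beta_0})\,d\beta = d(g_{\beta_0}, g_{\beta_0}) = 0,
\]
by null-information. This is the easy half.

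The core of the argument is the coupling property. The plan is to apply order-invariance to the given signaling scheme $\pi_s$ and a carefully chosen auxiliary scheme $\pi_{s^\star}$ that deterministically reveals the true parameter $\beta$. Under $\pi_{s^\star}$, the posterior collapses to $\mu(s^\star) = g_\beta$, and crucially the joint posterior satisfies $\mu(s^\star \cap s) = g_\beta$ as well, since once $\beta$ is known $s$ carries no additional information. Writing out order-invariance with $(s^\star, s)$:
\[
\EE_\beta\bigl[d(g_\beta, g_q)\bigr] + \EE_\beta\bigl[d(g_\beta, g_\beta)\bigr] = \EE_s\bigl[d(g_{q(s)}, g_q)\bigr] + \EE_{s}\EE_{\beta \mid s}\bigl[d(g_\beta, g_{q(s)})\bigr].
\]
The left-hand side equals $\int q(\beta)\,d(g_\beta, g_q)\,d\beta + 0 = H(g_q)$ by the definition of $H$ and null-information. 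For the right-hand side, the inner expectation $\EE_{\beta \mid s}[d(g_\beta, g_{q(s)})] = \int q(s)(\beta)\, d(g_\beta, g_{q(s)})\,d\beta$ is exactly $H(g_{q(s)})$ by the definition of $H$ applied to the posterior belief $q(s)$. Rearranging,
\[
\EE_s\bigl[d(g_{q(s)}, g_q)\bigr] = H(g_q) - \EE_s\bigl[H(g_{q(s)})\bigr],
\]
which is precisely the coupling identity in \Cref{defn:measure-of-info-uncertainty}.

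The main obstacle I anticipate is justifying that the ``fully revealing'' auxiliary scheme $\pi_{s^\star}$ is a legitimate signaling scheme to which order-invariance applies — for continuous $\beta$, this requires a measure-theoretic handling of signals whose induced posteriors are Dirac masses, and a careful verification that $\mu(s^\star \cap s) = g_\beta$ almost surely (which uses that $s$ is conditionally independent of itself given $\beta$ through the model $g_\beta$). If this direct use of a degenerate signal is objectionable, a backup strategy is to approximate $s^\star$ by a sequence of increasingly informative signals and pass to the limit, invoking continuity of $d$ and $H$ under weak convergence. Once this technicality is dispensed with, the identity falls out cleanly from the two-way accounting that order-invariance encodes.
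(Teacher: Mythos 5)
Your proposal is correct and follows essentially the same route as the paper's proof: both pair the given signaling scheme with a fully informative signal, apply order-invariance to the two orderings, use null-information to annihilate the term where the second signal is uninformative after $\beta$ is revealed, and recognize the remaining inner expectation as $H$ evaluated at the posterior. The measure-theoretic caveat you raise about degenerate posteriors is not addressed in the paper either (it invokes the fully informative scheme with a "slight abuse of notation"), so your argument matches the paper's level of rigor.
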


Before discussing the implications, we strengthen Proposition~\ref{prop:measure-uncertainty-couple-null-uncertainty} via the following corollary.
\begin{corollary}\label{coro:Hcouple}
    \label{coro:measure-uncertainty-unique-null-uncertainty-concave-couple}
    For any $d$ that satisfies null-information, positivity, and order-invariance, there is a unique measure of uncertainty $H(g_q) = \int q(\beta)d(g_{\beta}, g_q) d\beta$ that satisfies null-uncertainty, concavity, and is coupled with $d$.
\end{corollary}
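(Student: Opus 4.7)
The plan is to split the corollary into three parts handled in turn: existence, concavity, and uniqueness. Existence is immediate from Proposition~\ref{prop:measure-uncertainty-couple-null-uncertainty}, which already exhibits $H(g_q) = \int q(\beta)\, d(g_\beta, g_q)\, d\beta$ as a measure of uncertainty coupled with $d$ and satisfying null-uncertainty. So the genuinely new content of the corollary is concavity of this particular $H$ and uniqueness of $H$ among candidates sharing the three listed properties.

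For concavity, I would combine the coupling identity with positivity of $d$. Given any decomposition $q = \lambda q_1 + (1-\lambda) q_2$, linearity of the convolution $q \mapsto g_q$ gives $g_q = \lambda g_{q_1} + (1-\lambda) g_{q_2}$. I would then construct the two-outcome signaling scheme that reveals whether the underlying prior is $q_1$ or $q_2$ with probabilities $\lambda$ and $1-\lambda$, respectively. This scheme is Bayes-plausible, so the coupling identity reads
\[
    H(g_q) - \lambda H(g_{q_1}) - (1-\lambda) H(g_{q_2}) \;=\; \lambda\, d(g_{q_1}, g_q) + (1-\lambda)\, d(g_{q_2}, g_q) \;\geq\; 0,
\]
using positivity in the last step. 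This is exactly concavity of $H$ on the image $\{g_q : q \in \Delta(\beta)\}$.

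For uniqueness, let $H'$ be any second measure of uncertainty with the three required properties. Subtracting the coupling identity written for $H'$ from that written for $H$ cancels $d$ and yields the martingale identity $\EE[(H-H')(\mu(s))] = (H-H')(\mu)$ for every Bayes-plausible signaling scheme. Applying this to the signaling scheme that fully reveals $\beta$ (whose posteriors over $\cY$ are the Dirac-indexed kernels $g_\beta$ with weights $q(\beta)$) gives
\[
    (H-H')(g_q) \;=\; \int q(\beta)\,(H-H')(g_\beta)\, d\beta,
\]
and the integrand vanishes pointwise by null-uncertainty. Hence $H = H'$ on the entire domain $\{g_q\}$.

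The step I expect to be the main obstacle is justifying that the "reveal $\beta$" experiment is an admissible signaling scheme in whatever formalism underlies Definition~\ref{defn:measure-of-info-uncertainty}, particularly when $\beta$ ranges over a continuum. If the definition only accommodates finitely supported signals, my backup is to first deduce affinity of $H - H'$ over all \emph{finite} Bayes-plausible splittings (by the same two-outcome construction used for concavity), then extend to integrals either by a continuity/approximation argument or, more cleanly, by invoking the explicit formula $H(g_q) = \int q(\beta)\, d(g_\beta, g_q)\, d\beta$ from Proposition~\ref{prop:measure-uncertainty-couple-null-uncertainty}: any $H'$ coupled with $d$ must satisfy the same derivation, forcing $H' = H$.
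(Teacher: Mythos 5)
Your proposal is correct and follows essentially the same route as the paper: concavity via the two-outcome splitting scheme combined with coupling and positivity of $d$, and uniqueness via the fully informative signal together with null-uncertainty. The caveat you raise about admissibility of the fully revealing experiment is moot here, since the paper's own proofs (e.g., of Proposition~\ref{prop:defineH}) already treat that scheme as admissible in this formalism.
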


These results show that once the measure of information $d$ is determined, not only can the parametric form of $H$ can be expressed explicitly, but the form itself is further unique, even in Bayesian regression settings where the response is continuous.
With the properties on $d$ and $H$ laid out, we take a step back and demonstrate that the value of information $\val$ and the cost of uncertainty $\cost$ share similar properties, despite their original motivations as decision-making problems.

\begin{proposition}\label{prop:valandcost}
    Let $\val$ be the value of information and $\cost$ the cost of uncertainty of a contextual decision-making problem. We know (1) $\val$ satisfies null-information, positivity, and order-invariance, (2) $\cost$ satisfies null-uncertainty and concavity, and (3) $\val$ and $\cost$ are coupled.
\end{proposition}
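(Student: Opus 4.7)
The plan is to work directly from the definition $\val(D; q, x) = \EE_{\beta \sim p,\, y \sim f_\beta(x)+\epsilon}[u(a^*(p); y) - u(a^*(q); y)]$ with $p = p(\cdot \given q, D)$, and to reduce every claim to one clean decomposition of $\cost$. I would introduce two auxiliary functionals: $U(q) := \max_a \EE_{\beta \sim q,\, y}[u(a; y)]$, the ``uninformed'' value (equal to $\EE_{\beta \sim q, y} u(a^*(q); y)$ by the definition of $a^*(q)$), and $W(q) := \EE_{\beta \sim q}[\max_a \EE_y u(a; y \given \beta)]$, the ``fully informed'' value. A short computation from $\cost(q, x) = \EE_{D=\beta \sim q}[\val(D; q, x)]$ with posterior $p = \delta_\beta$ yields the identity $\cost(q, x) = W(q) - U(q)$, which is the engine of the whole argument.

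For part (1), \textbf{null-information} is immediate: if $D = \emptyset$ then $p = q$, so $a^*(p) = a^*(q)$ and the difference vanishes. \textbf{Positivity} follows because the expectation inside $\val$ is taken under the posterior $p$, and $a^*(p)$ is by construction the maximizer of $\EE_{\beta \sim p, y} u(\,\cdot\,; y)$; substituting any other action, in particular $a^*(q)$, can only decrease it. For \textbf{order-invariance}, the plan is to show both sides telescope to the common quantity $V_{12} - U(q)$, where $V_{12} := \EE[u(a^*(p_{12}); y)]$ under the joint law of $(D_1, D_2, \beta, y)$ and $p_{12} = p(\cdot \given q, D_1, D_2)$. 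The key step is that $a^*(p_1)$ depends only on $D_1$, so the tower property collapses $\EE_{\beta \sim p_{12}, y} u(a^*(p_1); y)$ back to $\EE_{\beta \sim p_1, y} u(a^*(p_1); y) = U(p_1)$ when integrating out $D_2$; by symmetry the other order yields the same expression.

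For part (2), \textbf{null-uncertainty} is direct: a Dirac prior $\delta_\beta$ already pins down the parameter, so observing the fully revealing $\vee$ does not change the action and $\cost(\delta_\beta, x) = 0$. \textbf{Concavity} becomes mechanical via $\cost = W - U$: the functional $W$ is \emph{linear} in $q$, since the bracketed integrand depends only on $\beta$, whereas $U$ is the pointwise maximum over $a$ of functionals linear in $q$ and is therefore \emph{convex}; hence $\cost$ is linear minus convex, i.e., concave.

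For part (3), coupling, I would verify $\EE_{D \sim \cD \circ q}[\val(D; q, x)] = \cost(q, x) - \EE_D[\cost(p, x)]$ using the same split. Since $W$ averages a function of $\beta$ alone, the tower property gives $\EE_D[W(p)] = W(q)$, so the right-hand side collapses to $\EE_D[U(p)] - U(q)$. The left-hand side equals the same quantity by the positivity calculation: $\EE_{\beta \sim p, y} u(a^*(p); y) = U(p)$ by the definition of $a^*(p)$, while the $a^*(q)$ term integrates via tower to $U(q)$. The main obstacle throughout is keeping the nested expectations over $(D, \beta, y)$ transparent (particularly when $a^*$ depends measurably on the data), but once the $(W, U)$ decomposition and the tower identity $\EE_D[W(p)] = W(q)$ are in hand every remaining step is a routine rearrangement.
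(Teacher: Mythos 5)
Your proposal is correct and follows essentially the same route as the paper's proof: the paper likewise writes $\cost(q,x)=\EE_{g_q}[u(a^*(\delta_\beta);y)]-\EE_{g_q}[u(a^*(q);y)]$ (your $W(q)-U(q)$), obtains concavity from linearity of the fully-informed term plus convexity of the max-of-linear-functionals term, and proves coupling via the martingale property $\EE_D[p]=q$ of Bayesian posteriors. The only cosmetic difference is that you establish order-invariance directly by telescoping to $\EE[V_{12}]-U(q)$, whereas the paper derives it as a consequence of coupling — the underlying cancellation is identical.
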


Proposition~\ref{prop:valandcost} combined with Lemma~\ref{prop:bregman} show that the value of information is the Bregman divergence defined with respect to the cost of uncertainty in any contextual decision-making problem associated with $\cA$, $u$, and $x$, showing that Bregman divergences are intrinsically linked to decision-making.

\begin{proposition}\label{prop:arbitrary}
    For any model $g$ and measure of uncertainty $H$ that satisfies null-uncertainty, and concavity, letting $d$ be the Bregman divergence $H$ induces, there is some decision-making problem in the direction $x$ where the cost of uncertainty $\cost(q,x) = H(g_q)$ and the value of information $\val(D;q,x) = d(g_p, g_q)$, where $p$ is the posterior induced by $D$.
\end{proposition}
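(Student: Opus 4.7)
The plan is to explicitly construct a contextual Bayesian decision-making problem $\langle u, y[x] = f_\beta(x) + \epsilon \rangle$, with a suitably chosen action space $\cA$ and utility $u$, so that the resulting $\cost$ and $\val$ functions coincide with $H(g_q)$ and $d(g_p,g_q)$ respectively. The construction is a standard ``proper scoring rule'' type reduction, adapted to the concave functional $H$.

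First, I would take the action space to be $\cA = \Delta(\cY)$, i.e.\ the buyer's action is to report a distribution $a$ over the outcome $y$. Using an arbitrary superdifferential $\nabla H(\cdot)$ guaranteed by concavity of $H$ (\Cref{defn:bregman-divergence}), I would set
\[
u(a, y) \;=\; -H(a) \;-\; \bigl\langle \nabla H(a),\; \delta_y - a \bigr\rangle,
\]
where $\delta_y$ denotes the Dirac mass at $y$ and $\langle \nabla H(a), \cdot\rangle$ is interpreted as the linear functional on signed measures given by the superdifferential. For any belief $\mu\in\Delta(\cY)$ over $y$, linearity gives $\E_{y\sim\mu}[u(a,y)] = -H(a) - \langle \nabla H(a), \mu - a\rangle$, and the superdifferential inequality in \Cref{defn:bregman-divergence} then yields $\E_{y\sim\mu}[u(a,y)] \le -H(\mu)$ with equality at $a=\mu$. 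Hence the optimal action at belief $\mu$ over $y$ is $a^*=\mu$, and the value function is $V(\mu)\;:=\;\max_a \E_{y\sim\mu}[u(a,y)] \;=\; -H(\mu)$.

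Next I would translate this back to beliefs over the parameter. For a prior $q$ on $\beta$, the induced belief over $y$ under context $x$ is exactly $g_q = \int q(\beta)\, g_\beta\, d\beta$, so by the tower property $a^*(q) = g_q$ and the attained expected utility is $-H(g_q)$. Applying the definition of $\cost$ in \Cref{def:values} together with \Cref{defn:val-info-cost-uncertainty} (where $\vee$ reveals $\beta$, so the posterior is $\delta_\beta$ and the induced belief over $y$ is $g_\beta$),
\[
\cost(q,x) \;=\; \E_{\beta\sim q}\bigl[V(g_\beta)\bigr] - V(g_q) \;=\; -\E_{\beta\sim q}[H(g_\beta)] + H(g_q) \;=\; H(g_q),
\]
where the last equality uses null-uncertainty $H(g_\beta)=0$. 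An identical computation at the posterior $p = p(\cdot\mid q, D)$, using $a^*(p) = g_p$, gives
\[
\val(D;q,x) \;=\; \E_{y\sim g_p}\bigl[u(g_p, y) - u(g_q, y)\bigr] \;=\; -H(g_p) - \Bigl(-H(g_q) - \langle \nabla H(g_q), g_p - g_q\rangle\Bigr),
\]
which is precisely $H(g_q) - H(g_p) + \langle \nabla H(g_q), g_p - g_q\rangle$, i.e.\ the Bregman divergence $d(g_p, g_q)$ as defined in \Cref{defn:bregman-divergence}.

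The main obstacle is ensuring the construction is rigorous at the functional level: $\cA=\Delta(\cY)$ is infinite dimensional, and $\langle \nabla H(a), \delta_y - a\rangle$ must be interpreted as the action of a superdifferential on a signed measure. I would address this by invoking the same functional superdifferential machinery used to state \Cref{defn:bregman-divergence}/\Cref{app:def:breg}, and note that the utility $u(a,y)$ need only satisfy measurability in $y$ for the expectations to be well defined, which holds whenever the superdifferential can be represented by an integrable function against $\mu\in\Delta(\cY)$. Aside from this regularity caveat, the construction is explicit and the verifications above are one-line consequences of the superdifferential inequality, so no further machinery should be needed.
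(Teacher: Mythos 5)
Your proposal is correct and follows essentially the same route as the paper: both construct a proper-scoring-rule utility over the action space $\cA=\Delta(\cY)$ using a superdifferential of $H$, show the optimal action at belief $\mu$ over $y$ is $\mu$ itself, and then read off $\cost(q,x)=H(g_q)$ via null-uncertainty and $\val(D;q,x)=d(g_p,g_q)$ as the Bregman divergence. The only differences are cosmetic: your sign convention for $u$ makes the optimality of $a^*=\mu$ a direct consequence of the superdifferential inequality (the paper's displayed $u$ has the opposite sign but its subsequent computation matches yours), and your closing regularity caveat plays the role of the paper's $-\infty$ branch for actions not absolutely continuous with respect to the belief.
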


Proposition~\ref{prop:arbitrary} stipulates that as long as the concave functional $H$ satisfies the conditions in Definition~\ref{defn:properties-measures-uncertainty}, then the functional and its Bregman divergence correspond to the value of information and the cost of uncertainty in a decision-making problem.

We now discuss the proofs for \Cref{thm:val-of-data}, \Cref{thm:cost-of-uncertainty}, \Cref{thm:bregman-divergence}.

\paragraph{Proof of  \Cref{thm:val-of-data}.} The if direction, that is, any $\val$ must satisfy null-information, positivity, and order-invariance is proven by \Cref{prop:valandcost}. We then consider when $\val$ satisfies the properties. Since the definition does not limit the choice of possible data $D$, for any $p$ that can be induced by $q$, we can devise a corresponding dataset that induces $p$. Consequently, we can construct some measure of information $d$ via value of data $\val$. By \Cref{coro:Hcouple}, we can explicitly write out a corresponding $H$ coupled with $d$ and, by extension, a corresponding $\cost$ that is coupled with $\val$. By \Cref{prop:arbitrary}, $\cost$ and $\val$ are then the cost of information and value of uncertainty for some decision-making problems.

\paragraph{Proof of  \Cref{thm:cost-of-uncertainty}.} The if direction, that is, any $\cost$ must satisfy null-uncertainty and concavity is proven by \Cref{prop:valandcost}. We then consider when $\cost$ satisfies both properties. Note that $\cost$ can be written as some measure of uncertainty $H$ and, by \Cref{prop:arbitrary}, there exists a corresponding $d$ and, consequently, a corresponding $\val$. 

\paragraph{Proof of  \Cref{thm:bregman-divergence}.} The only if direction, that is, $\val$ must be a Bregman divergence of $H$, is implied by \Cref{prop:bregman} and the fact that $\val$ and $\cost$ are coupled by \Cref{prop:valandcost}. The if direction is established by \Cref{coro:Hcouple}, which shows the uniqueness of $\cost$ that couples with $\val$, and \Cref{prop:arbitrary}, which constructs a suitable decision-making problem for which $\val$ and $\cost$ are valid.

\subsection{Proof of \Cref{prop:bregman}}
    We begin by showing the first statement implies the second.

    Let $\Omega$ denote the space of probability distributions and $\tau \in \Delta(\Omega).$ Well-known result in Bayesian persuasion tells us that if $\EE_{\mu \sim \tau}[\mu] = \nu$, then there must be a signaling scheme $\pi$ that induces the distribution $\tau$\footnote{Since $\EE_{\mu \sim \tau}[\mu] = \nu$ and probability measures are non-negative, $\mu$ must be absolutely continuous w.r.t. $\nu$. We then take $\pi = \frac{\mu}{\nu}$, which is well defined by absolute continuity and integrates to 1 by the fact that $\EE_{\mu \sim \tau}[\mu] = \nu$.}. 
    
    Since $\EE_s[d(\nu(s), \nu)] = \EE_s[H(\nu(s)) - H(\nu)]$, $\EE_{\mu \sim \tau}[d(\mu, \nu) - H(\nu) + H(\mu)] = 0$ for all distributions $\tau$ such that $\EE_{\mu \sim \tau}[\mu] = \nu.$

    We now show $d(\mu, \nu) - H(\nu) + H(\mu)$ is linear in $\mu$ for any $\nu$. For convenience, we use the following shorthand notation
    \[
        \gamma(\mu; \nu) = d(\mu, \nu) - H(\nu) + H(\mu).
    \]
    Let $\mu$ be an arbitrary distribution that can be induced from $\nu$ by some signaling scheme and assume there exists some constant $C$ and measure $\Delta$ such that $\mu = \nu + C\Delta$, where $C$ is an arbitrary yet sufficiently large constant that ensures $\nu \pm \Delta$ are both valid distributions. Note that $C$ must exist for any $\Delta$ as $\mu$ can be written as the posterior of $\nu$ under some signaling scheme, implying that $\mu$ must be absolutely continuous with respect to $\nu$. Consider then two possible posteriors. Consider first $\tau_{1}$, which assigns probability mass $\frac{1}{2}$ to both $\nu + \Delta$ and $\nu - \Delta$. Since $\EE_{\mu \sim \tau_{1}}[\mu] = \nu$, we have
    \[
        \gamma(\nu + \Delta; \nu) = -\gamma(\nu - \Delta; \nu).
    \]
    Then consider $\tau_{2}$, which assigns probability $\frac{1}{1 + C}$ and $\frac{C}{1 + C}$ to $\nu + C \Delta$ and $\nu - \Delta$, respectively. Since the expectation of the distribution is again $\nu$, We have
    \[
        C \gamma(\nu - \Delta; \nu) + \gamma(\nu + C\Delta; \nu) = 0 \implies \gamma(\nu + C\Delta; \nu) = - C\gamma(\nu - \Delta; \nu) = C\gamma(\nu + \Delta; \nu).
    \]
    Take $\nbr{\Delta} \to 0$ and we know $\gamma(\nu + C\Delta; \nu)= C\gamma(\nu + \Delta;\nu)$ for all $C > 0$ and all directions $\Delta$ may take. Therefore, $\gamma(\mu; \nu)$ must be linear in $\mu$ and thus can be written as an affine function. Moreover, note that $\gamma(\nu; \nu) = 0$ by direct calculation. There then must exist a functional $f(\nu)$ such that
    \[
        d(\mu, \nu) - H(\nu) + H(\mu) = \gamma(\mu; \nu) = \langle f(\nu), \mu \rangle \implies d(\mu, \nu) = H(\nu) - H(\mu) + \langle f(\nu), \mu - \nu\rangle.
    \]
    Moreover, by positivity, we know that
    \[
        d(\mu, \nu) = H(\nu) - H(\mu) + \langle f(\nu), \mu - \nu\rangle \geq 0
    \]
    for all pairs of distributions. By definition of superdifferentials, the bound necessarily implies $f(\nu)$ must be a superdifferential of $H$, and hence $d(\mu, \nu)$ is the Bregman divergence of $H$. In other words, the first statement implies the second.

    We now show the second statement implies the first. Let $\nabla H$ denote an arbitrary and fixed superdifferential of $H$ and $d(\mu, \nu) = H(\nu) - H(\mu) + \langle \nabla H(\nu), \mu - \nu \rangle$. By the property of superdifferential and the fact that $H$ is concave, $d$ satisfies positivity. Note that for any signaling scheme $\pi_s: \cY \to \cS$, we have $\EE_s[\nu(s)] = \nu$, and thus
    \[
        \EE_s[d(\nu(s), \nu)] = \EE_s[H(\nu) - H(\nu(s))] + \langle \nabla H(\nu), \EE_s[\nu(s) - \nu] \rangle = \EE_s[H(\nu) - H(\nu(s))],
    \]
    showing $d$ and $H$ are coupled. Additionally, $d$ easily satisfies null-information, as $d(\nu, \nu) = H(\nu) - H(\nu) + \langle \nabla H(\nu), \nu - \nu \rangle = 0$. Thus, all that remains is to show $d$ satisfies order invariance. Recalling we have already shown $d$ and $H$ are coupled, for any signals $s$ and $s'$, we have
    \begin{equation}
        \label{eq:coupled-implies-order-invariance}
        \begin{aligned}
        &\EE_{s, s'}[d(\nu(s), \nu) + d(\nu(s \cup s'), \nu(s))]\\
        &\qquad = \EE_{s, s'}[H(\nu) - H(\nu(s)) + H(\nu(s)) - H(\nu(s \cup s'))]\\
        &\qquad = \EE_{s, s'}[d(\nu(s'), \nu) + d(\nu(s \cup s'), \nu(s'))],
        \end{aligned}
    \end{equation}
    thereby completing the proof.  {\hfill $\square$\par}

\subsection{Proof of \Cref{prop:defineH}}
    Consider two direct signals $\pi_{s_1}$ and $\pi_{s_2}$ where with a slight abuse of notation we let $\pi_{s_2}$ denote the fully informative signaling scheme, that is,
    \[
        \pi_{s_2}(\beta') = \ind\{\beta = \beta'\},
    \]
    which directly tells the buyer the underlying parameter $\beta$. We thus have
    \[
        \EE_{s_2}[d(g_{q(s_2)} , g_{q})] = \EE[d(g_{\beta}, g_{q})] = \int q(\beta) d(g_{\beta}, g_{q})d\beta = H(g_q).
    \]

    Consider first the case where $s_1$ is observed \emph{after} $s_2$.
    As $\pi_{s_1}$ cannot further alter the probability distribution, due to $\pi_{s_2}$ being fully informative, we know $q(s_1 \cap s_2) = q(s_2)$. Using the fact that $d$ satisfies null-information, we have $\EE_{q}[d(g_{q(s_1 \cap s_2)}, g_{q(s_2)})] = 0$ for all possible realizations of the signals $s_1, s_2$, which in turn implies
    \[
        \EE_{q}[d(g_{q(s_2)} , g_{q}) + d(g_{q(s_1 \cap s_2)} , g_{q(s_2)})] = H(g_q).
    \]
    We now consider the case where the fully informative $s_2$ is observed \emph{after} $s_1$. Again exploiting the fact that $q(s_1 \cap s_2) = q(s_2)$, for any realized signal $s_1$
    \[
        \EE_{q}[d(g_{q(s_1 \cap s_2)}, g_{q(s_1)})] = \EE_{q}[d(g_{\beta}, g_{q(s_1)})] = H(g_{q(s_1)}).
    \]
    As $d$ satisfies order-invariance, we have
    \begin{align*}
        &\EE_{s_1, s_2, \beta}[d(g_{q(s_1)}, g_q) + d(g_{q(s_1 \cap s_2)}, g_{q(s_1)})] = \EE_{s_1, s_2}[\EE_{q}[d(g_{q(s_2)}, g_q) + d(g_{q(s_1 \cap s_2)}, g_{q(s_2)})]] \\
        &\qquad = \EE_{s_1, s_2}[H(g_q)] = H(g_q).
    \end{align*}
    By direct expansion and linearity of expectation, we also know that
    \begin{align*}
        &\EE_{s_1, s_2, \beta}[d(g_{q(s_1)}, g_{q}) + d(g_{q(s_1 \cap s_2)}, g_{q(s_1)})] = \EE_{s_1}[d(g_{q(s_1)}, g_{q})] + \EE_{s_1}[H(g_{q(s_1)})].
    \end{align*}

    In other words, for any signaling scheme $\pi_{s_1}$ and signal $s_1$, we have
    \[
        \EE_{s_1}[d(g_{q(s_1)}, g_{q})] = \EE_{s_1}[H(g_q) - H(g_{q(s_1)})],
    \]
    which shows that $d$ and $H$ are coupled.
{\hfill $\square$\par}

\subsection{Proof of \Cref{coro:Hcouple}}
    We first show that the proposed $H$ satisfies concavity. Note that for any parameter $\beta$ we have
    \[
        H(g_{\beta}) = d(g_{\beta}, g_{\beta}) = 0,
    \]
    since $d$ satisfies null-information. For concavity, let $q_1$ and $q_2$ be two arbitrary and fixed measures over the distribution of $\beta$ and let $c \in (0, 1)$ be an arbitrary scalar. Let $q = cq_1 + (1 - c)q_2$ be the convex combination of $q_1$ and $q_2$. It is easy to see that there exists a signaling scheme $\pi_{s_1}$ that induces $q_1$ with probability $c$ and $q_2$ with probability $1 - c$. Applying Proposition~\ref{prop:measure-uncertainty-couple-null-uncertainty}, we know $H$ is coupled with $d$, and thus
    \[
        \EE_{s_1}[H(g_q) - H(g_{q(s_1)})] = \EE_{s_1}[d(g_{q(s_1)}, g_q)] \geq 0
    \]
    by positivity of $d$. We then know $H$ is concave by noting $\EE_{s_1}[H(g_{q(s_1)})] = c H(g_{q_1}) + (1 - c)H(g_{q_2})$. Applying Proposition~\ref{prop:measure-uncertainty-couple-null-uncertainty} again, we know that $H$ further satisfies null-uncertainty and is coupled with $d$.

    All that remains is to show that $H$ is unique. Again consider some fully informative signal $\pi_{s_2}$. Let $\hat{H}$ be an arbitrary measure of information that satisfies null-uncertainty, concavity, and is coupled with $d$. By null-uncertainty of $\hat{H}$, for any distribution $q$
    \[
        \hat{H}(g_q) - \EE_{s_2}[\hat{H}(g_{q(s_2)})] = \hat{H}(g_q).
    \]
    Moreover, as $\hat{H}$ and $d$ are coupled
    \[
        \hat{H}(g_q) - \EE_{s_2}[\hat{H}(g_{q(s_2)})] = \EE_{s_2}[d(g_{q(s_2)}, g_{q})] = \int q(\beta) d(g_\beta, g_q) d\beta = H(g_q).
    \]
    In other words, the proposed $H$ is unique as $\hat{H}(g_q) = H(g_q)$ for arbitrary $q$, completing the proof.
{\hfill $\square$\par}

\subsection{Proof of \Cref{prop:valandcost}}
    We begin by proving the second statement. Showing $\cost$ satisfies null-uncertainty is straightforward as we directly plug in the definition. Let $q_1$ and $q_2$ denote two arbitrary and fixed distributions over $\beta$ and let $q = cq_1 + (1 - c) q_2$ be an arbitrary convex combination of the two. Directly writing out the expectations, we know $\EE_{g_q}[u(a;y)] = c\EE_{g_{q_1}}[u(a;y)] + (1 - c)\EE_{g_{q_2}}[u(a;y)]$ for any action $a$ and thus
    \begin{align*}
        &c \EE_{g_{q_1}}[u(a^*(q_1);y)] + (1 - c)\EE_{g_{q_2}}[u(a^*(q_2);y)] \\
        \geq& c \EE_{g_{q_1}}[u(a^*(q);y)] + (1 - c) \EE_{g_{q_2}}[u(a^*(q);y)]\\
        =& \EE_{g_q}[u(a^*(q);y)].
    \end{align*}
    Moreover, also by directly writing out the expectations,
    \[
        c \EE_{g_{q_1}}[u(a^*(\delta_\beta);y)] + (1 - c) \EE_{g_{q_1}}[u(a^*(\delta_\beta);y)] = \EE_{g_q}[u(a^*(\delta_\beta);y)].
    \]
    Combining the two shows that $\cost(q)$ is concave in $q$, completing the proof for the second property.

    We then show the third property holds. Let $q$ be some arbitrary belief over $\beta$ and, for convenience, {$D$ an arbitrary dataset given. Let $p$ be the posterior induced by the dataset.   }  
    We have
    \begin{align*}
        &\EE_{D}[\cost(q,x) - \cost(p,x)]\\
        &\qquad = \EE_{D}\Bigl[\EE_{g_q}[u(a^*(\delta_\beta);y) - u(a^*(q);y)] - \EE_{g_p}[u(a^*(\delta_\beta);y) - u(a^*(p);y)]\Bigr].
    \end{align*}
    Notice that $\EE_{D}[p] = q$ by Bayes' theorem. Therefore
    \[
        \EE_{D}\sbr{ \EE_{g_q}[u(a^*(\delta_\beta);y)] - \EE_{g_{p}}[u(a^*(\delta_\beta);y)] } = 0,
    \]
    and thus
    \[
        \EE_{D}[\cost(q,x) - \cost(p,x)] = \EE_{D}\sbr{\EE_{g_{p}}[ u(a^*(p);y)] - \EE_{g_q}[u(a^*(q);y)]}.
    \]
    We now turn our attention back to $\val$. Notice that
    \[
        \EE_{D}[\val(D; q,x)] = \EE_{D}[\EE_{g_p}[u(a^*(p);y) - u(a^*(q);y)]].
    \]
    Focus on the term $u(a^*(q);y)$. As the optimal action is taken over the distribution $q$, its value for any specific $y$ is independent of the realized value of $D$. Therefore, again using Bayes' rule to derive, we have 
    \[
        \EE_{D}[\EE_{g_p}[u(a^*(q);y)]] = \EE_{q}[u(a^*(q);y)],
    \]
    recalling that $q$ is the prior distribution. Noting that taking the expectation of the term above over $D$ does not affect its value. Therefore
    \begin{align*}
        \EE_{D}[\val(D; q,x)] &= \EE_{D}[\EE_{g_p}[u(a^*(p);y) - u(a^*(q);y)]]\\
        &= \EE_{D}\sbr{\EE_{g_{p}}[ u(a^*(p);y)] - \EE_{g_q}[u(a^*(q);y)]}\\
        &= \EE_{D}[\cost(q,x) - \cost(p,x)]
    \end{align*}
    completing our proof of the third property, as we have now shown $\val$ and $\cost$ are coupled.

    Finally we focus on the first property. Null-information and positivity are straightforward by definition of $\val$. Furthermore, as $\val$ and $\cost$ are coupled, similar to Equation~\eqref{eq:coupled-implies-order-invariance}, we know that $\val$ also satisfies order invariance, completing the proof of the first property.
{\hfill $\square$\par}

\subsection{Proof of \Cref{prop:arbitrary}}
    Let the action set $\cA$ be a subset of all possible distributions over $\cY$ and consider the following utility function
    \[
        u(a;y) = \begin{cases}
         H(a) - \langle \nabla H(a), a \rangle + \langle \nabla H(a), \delta_y\rangle\qquad &\textrm{if $a$ is strictly positive at $y$}\\
        -\infty \qquad &\textrm{otherwise,}
        \end{cases}
    \]
    where $\delta_y$ is the Dirac measure defined at $y$ and $\nabla H$ follows its definition in Definition~\ref{defn:bregman-divergence}.
    An immediate consequence of the construction is that for any model $l$ and parameter $\beta$
    \begin{align*}
        \EE_{y \sim g_{\beta}}[u(a;y)] &= \int u(a;y) g_\beta(y) dy\\
        &= (H(a) - \langle \nabla H(a), a\rangle)\int g_\beta(y)dy + \int \langle \nabla H(a), \delta_y\rangle \cdot g_\beta(y) dy\\
        &= H(a) - \langle \nabla H(a), a\rangle + \langle \nabla H(a), g_\beta(y)\rangle\\
        &= H(a) + \langle \nabla H(a), g_\beta - a \rangle + H(g_\beta)\\
        &= d(g_{\beta}, a),
    \end{align*}
    where for the second to last equality we use the fact that $H$ satisfies null-uncertainty and the last equality leverages the fact that $d$ is the Bregman divergence with respect to $H$. We then know that for any $q$
    \begin{align*}
        \EE_{y \sim g_q}[u(a;y)] &= \int u(a;y) g_q(y) dy \\
        &= \int \rbr{\int q(\beta) g_\beta(y) d\beta} u(a;y) dy\\
        &= \int q(\beta) \rbr{\int g_\beta(y) u(a; y) dy} d\beta\\
        &= \int q(\beta) d(g_{\beta}, a) d\beta \\
        &= \EE_{ q}[d(g_{\beta}, a)].
    \end{align*}
    Consider then two beliefs over $\cY$, induced by beliefs over $\beta$, denoted $p$ and $q$, such that $q$ is absolutely continuous with respect to $p$. The actions $p$ and $q$ then satisfy
    \begin{align*}
        &\EE_{g_p}[u(p;y) - u(q;y)] = \EE_{p}[d(g_{\beta}, p) - d(g_{\beta}, q)]\\
        &\qquad = \EE_{p}[H(g_q) - H(g_{\beta}) + \langle \nabla H(g_q), g_{\beta} - q\rangle \\
        &\hspace{6em}- H(g_p) + H(g_{\beta}) - \langle \nabla H(g_p), g_{\beta} - p\rangle]\\
        &\qquad = \EE_{p}[H(g_q) - H(g_p) + \langle \nabla H(g_q), p - q \rangle]\\
        &\qquad = H(g_q) - H(g_p) + \langle \nabla H(g_q), p - q \rangle = d(p, q) \geq 0,
    \end{align*}
    where for the third equality we use the linearity of $\langle \nabla H(g_p), g_{\beta} - p\rangle$ in $g_{\beta}$ and recall that $p = \EE_{p}[g_{\beta}]$ by definition. In other words, for any belief $p$, the buyer's utility can only be harmed by moving from $p$ to some belief $q$ that is absolutely continuous with respect to $p$. Moreover, by the construction of $u(a;y)$, we know that it is impossible for any $q$ that is not absolutely continuous with respect to $p$ to be near optimal, as otherwise there exists some $y$ for which $u(q;y) = -\infty$. 
    
    Consequently, for any belief $p$, the action $p$ maximizes the expected utility $\EE_{g_p}[u(a;y)]$.
    Recalling the definition of the value of information, we thus have
    \[
        \val(D; q,x) = \EE_{g_p}[u(a^*(p);y) - u(a^*(q);y)] = d(p, q),
    \]
    where $p$ is the posterior induced by $D$.
    In other words, $d(p, q)$ is a value of information for the contextual decision-making problem associated with $\cA$, $x$, and $u$. By Proposition~\ref{prop:valandcost}, it is then coupled with the cost of uncertainty of the same decision-making problem. Since $H$ is coupled with $d$ and satisfies null-uncertainty and concavity, by Corollary~\ref{coro:measure-uncertainty-unique-null-uncertainty-concave-couple} it is unique, and thus $H$ is the cost of uncertainty for the problem with $\cA$, $x$, and $u$. 
{\hfill $\square$\par}

\section{Omitted Proof in \Cref{BLR}}
\subsection{Proof of \Cref{lem:infogain}}\label{sec:proof_lem_infogain}
We first state that entropy leads to a qualified candidate of the value function, and then we derive its closed form. 

We focus on $\E[y[x]]=\langle x,\beta\rangle$. There are two reasons we use it rather than $y$. First, the composition of $u(\cdot;\cdot)$ and the expectation operator won't influence the existence of $u$ and then the existence of $\IG(\cdot;\cdot,\cdot)$ as $\epsilon$ is white noise. Second, in reality, people always focus on $f_\beta(x)$ rather than the one with uninformative noise $\epsilon$. So, the entropy of predicting $f_\beta(x)$, namely, $\langle x,\beta\rangle$ makes more sense in practice. 

We are now going to show that entropy can generate a valid value function. Note that we can view $\cD$ as a signal transforming our belief over $\beta$. The model $l$ for $\EE[y[x]]$ is then $\EE[y [x]] = \langle \beta, x \rangle$. Consider the measure of information $d(p, q) = D_{\rm KL}(p || q)$.

By~\Cref{coro:measure-uncertainty-unique-null-uncertainty-concave-couple}, the unique measure of uncertainty corresponding to $d$ is 
\begin{align*}
    H(g_q) &= \int q(\beta) D_{\rm KL}(g_\beta || g_q) d\beta.
\end{align*}

In this case, letting $g_q$ denote the prior belief over $\E[y[x]]=\langle x,\beta\rangle$ and associating with the posterior $p = p(\cdot\given q,D)$, by definition of coupling between $H$ and $d$, we have
\begin{align*}
    &\EE_{D \sim g_q}[d(g_p, g_q)] = \EE_{D \sim g_q}[H(g_q) - H(g_p)]\\
    &\qquad = \EE_{D \sim g_q}\sbr{\int q(\beta) D_{\rm KL}(g_{\beta}, g_q)d\beta}- \EE_{D \sim g_q}\sbr{\int p(\beta) D_{\rm KL}(g_{\beta}, g_p)d\beta}.
\end{align*}
We focus on the terms involving $q$ and $g_q$, as the case for $p$ and $g_p$ is similar. Notice that
\begin{align*}
    &\int q(\beta) D_{\rm KL}(g_{\beta}, g_q)d\beta\\
    &\qquad = -\int \int q(\beta) l(\EE[y[x]]; x, \beta) \log (g_q(\EE[y[x]])) d(\EE[y[x]]) d\beta\\
    &\hspace{8em}+ \int \int q(\beta)  l(\EE[y[x]]; x, \beta) \log (l(\EE[y[x]]; x, \beta)) d(\EE[y[x]]) d\beta\\
    &\qquad = h(f_q(x)) + \EE_{\beta \sim q}\sbr{ \int l(\EE[y[x]]; x, \beta) \log (l(\EE[y[x]]; x, \beta)) d(\EE[y[x]]) d\beta},
\end{align*}
where the last line comes from the direct observation that $$h(f_q(x)) = -\int \int q(\beta) l(\EE[y[x]]; x, \beta) \log (g_q(\EE[y[x]])) d(\EE[y[x]]) d\beta.$$ Similarly, recalling the definition of $h(f_p(x))$, we know that 
\begin{align*}
    \EE_{D \sim g_q}[d(g_p, g_q)] =& \EE_{D \sim g_q}[h(f_q(x)) - h(f_p(x))]\\
    &+\EE_{\beta \sim q}\sbr{ \int l(\EE[y[x]]; x, \beta) \log (l(\EE[y[x]]; x, \beta)) d(\EE[y[x]]) d\beta}\\
    &- \EE_{ \beta \sim p}\sbr{ \int l(\EE[y[x]]; x, \beta) \log (l(\EE[y[x]]; x, \beta)) d(\EE[y[x]]) d\beta}.
\end{align*}
Observe that in the model we defined, $\int l(\EE[y[x]]; x, \beta) \log (l(\EE[y[x]]; x, \beta)) d(\EE[y[x]])$ is independent of the value of $\beta$, and thus the latter two terms cancel out. Moreover, as we will soon discuss in the proof of~\Cref{lem:infogain}, in the special case of Gaussian distribution, $h(f_p(x))$ is affected by only the feature matrix, but not necessarily the realized labels. {In other words, when $D=\{(x_i,y_i)\}_{i=1}^{n}$, $\EE_{D \sim g_q}h(f_p(x))$ only depends on $\{x_1,...,x_{n}\}$} and we have that
\[
    h(f_q(x)) - h(f_p(x)) = d(g_p, g_q),
\]
where $d(g_p, g_q)$ is the measure of information defined in ~\Cref{defn:measure-of-info-uncertainty} that serves as the value of information of some decision-making problem.
Therefore, we can have the following formal definition which coincides with information gain in Bayesian regression.

\begin{definition}[Information Gain]
    \label{defn:info_gain}
    For a buyer type $x$, a valid value, i.e., the information gain from $\cD$ is
    \begin{equation*}
        \label{eq:def-val-ig}
        \IG(\cD;q,x) := \E_{D}[h(f_q(x)) - h(f_p(x))],
    \end{equation*}
    where $h(f_q(x))$ and $h(f_p(x))$ are the entropy of his prior and posterior distributions over $\EE[y[x]]$ and $p$ depends on both $q$ and $D$. 
\end{definition}
Note that $\IG(\cdot;\cdot,\cdot)$ is exactly the expected value of $D$ to the buyer when he makes the purchase. The fact that $\IG(\cdot;\cdot,\cdot)$ is the buyer's ex-ante value of data is crucial, as it is impossible for the buyer to ascertain exactly the data's value to him prior to observing $D$ and updating his belief. {It justifies the choice of information gain and we list its detailed concept and closed form under Bayesian linear regression which provides some convenience for optimal mechanism design in \Cref{sec:optimal,sec:subopt}.}

 Information gain is a concept commonly used in statistical learning literature, whose applications range from experiment design~\citep{beck2018fast}, to bandits~\citep{vakili2021information}, and to feature selection~\citep{azhagusundari2013feature}. As we show in the sequel, it also has the desirable property that it is invariant to scaling of the feature vector $x$, ensuring that the value of data depends on $x$ only through its direction.

As we have shown differential entropy yields a valid value function, it's time to derive the concrete form in \Cref{lem:infogain}.
We first state the following lemma.
\begin{lemma}[Label's Prior Entropy]\label{entropy:prior}
    The entropy of the buyer's prior distribution over $\EE[y[x]]$ is
    \[
        h(f_q(x))=\frac{1}{2}\log(| x \Sigma_qx^T|)+\frac{d}{2}(1+\log(2\pi)),
    \]
    where we use $\Sigma_q$ to denote the covariance of the prior distribution over $\beta$.
\end{lemma}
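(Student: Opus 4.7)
The plan is to compute the differential entropy of the scalar random variable $\langle x,\beta\rangle$ directly, exploiting the fact that Gaussianity of $\beta$ propagates through linear functionals.

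First, I would record that under the Gaussian prior $q = \cN(\mu_q, \Sigma_q)$ on $\beta \in \RR^d$, the scalar $\langle x,\beta\rangle$ is a linear combination of jointly Gaussian coordinates and is therefore itself Gaussian. Its first two moments can be computed immediately as $\EE_{\beta\sim q}[\langle x,\beta\rangle] = \langle x,\mu_q\rangle$ and $\Var_{\beta\sim q}(\langle x,\beta\rangle) = x^T\Sigma_q x$. Since $f_\beta(x) = \langle x,\beta\rangle$ in the linear regression setup, this pins down the prior distribution of $f_q(x)$ as the univariate Gaussian $\cN(\langle x,\mu_q\rangle, \, x^T\Sigma_q x)$.

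Second, I would invoke the standard closed-form expression for differential entropy of a Gaussian from \Cref{defn:differential_entropy}. For a univariate Gaussian with variance $\sigma^2$, one has $h = \frac{1}{2}\log(2\pi e\sigma^2)$. Substituting $\sigma^2 = x^T\Sigma_q x$ and splitting the logarithm yields $\frac{1}{2}\log(x^T\Sigma_q x) + \frac{1}{2}(1 + \log(2\pi))$, which matches the claimed expression once the outer-product scalar $x^T\Sigma_q x$ is interpreted as the paper's $|x\Sigma_q x^T|$.

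There is no genuine mathematical obstacle here; the lemma is essentially a bookkeeping invocation of the Gaussian entropy formula. The only subtleties worth flagging during the write-up are (i) the mean $\langle x,\mu_q\rangle$ drops out because differential entropy is translation invariant, so the value does not depend on $\mu_q$; and (ii) the dimensional factor $\frac{d}{2}(1+\log(2\pi))$ stated in the lemma must be reconciled with the fact that $\langle x, \beta \rangle$ is a scalar (for which the natural constant is $\frac{1}{2}(1+\log(2\pi))$), so the computation will also double as a normalization check.
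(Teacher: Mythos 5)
Your proof is correct and follows essentially the same route as the paper's: identify $\langle x,\beta\rangle$ as Gaussian with variance $x^T\Sigma_q x$ and invoke the standard Gaussian differential-entropy formula (the paper cites it as Fact~\ref{entropy:gaussian}). Your observation in point (ii) is well taken --- since $\langle x,\beta\rangle$ is a scalar the additive constant should be $\tfrac{1}{2}(1+\log(2\pi))$ rather than $\tfrac{d}{2}(1+\log(2\pi))$ as stated in the lemma, but this constant cancels in the entropy difference defining $\IG$, so nothing downstream is affected.
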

\begin{proof}
    We begin by stating the following basic facts on the entropy of Gaussian random vectors.
    \begin{fact}[Entropy of Gaussian Random Vectors (Example 8.1.2~\citet{thomas2006elements})]\label{entropy:gaussian}
        For any Gaussian random vector $Z\in\RR^d$ where $Z\sim \cN(m,\Sigma)$, it holds that
        \[
            h(Z)=\frac{1}{2}\log(|\Sigma|)+\frac{d}{2}(1+\log(2\pi)).
        \]
    \end{fact}
    We know that $\EE[y[x]] = \langle x,\beta\rangle$ where $\beta$ enjoys prior variance $\Sigma_q$. As we recall that $\EE[y[x]]$ integrates over a zero-mean random noise $\cN(0,\sigma^2)$, the prior belief that the buyer holds over $\EE[y[x]]$ is that it has variance $x\Sigma_qx^T$. Applying Fact~\ref{entropy:gaussian} completes the proof.
\end{proof}

With \Cref{entropy:prior} and \Cref{entropy:gaussian} in hand, we can similarly derive $h(f_p(x))$ {by substituting the prior variance with posterior variance.} Then \Cref{defn:differential_entropy,defn:info_gain} lead to \Cref{lem:infogain} directly. 
{\hfill $\square$\par}

\subsection{Proof of \Cref{lem:valueBLR}}
Since the prior is $\cN(\mu_q,\Sigma_q)$, after observing $\{(x_i,y_i)\}_{i=1}^n$, with calculations using \Cref{eq:posterior-update}, we know the variance matrix of the posterior of $\beta$ is $(\Sigma_q^{-1}+X^T\Sigma^{-1}X)^{-1}$. To be specific,
\[
p(\beta\given q,D)\propto \exp(-\frac{1}{2}\beta^T\Sigma_q^{-1}\beta)*\exp(-\frac{1}{2}(Y-X\beta)^T\Sigma^{-1}(Y-X\beta)).
\]
Note that it's independent of $\{y_1,...,y_n\}$ so we can get rid of the expectation over them. Then, \Cref{eq:value-linear-dpp} yields \Cref{lem:valueBLR} immediately.
{\hfill $\square$\par}
Note that $\val(\cdot;\cdot,\cdot)$ here only depends on the design matrix $X$ but not the realized $y$. Hence, we can get rid of $\E_{D\sim\cD \circ q}$ in such a case.

\section{Omitted Proof in Section~\ref{sec:mechanism}}
\subsection{Omitted Proof in Section~\ref{sec:optimal}}
Readers may wonder why we use $\IG(g^n[x];q,x)$ as the benchmark revenue. 
We assume without loss of generality, the buyer with type $x$ prefers the associated data-generating process, ceteris paribus. It equals to $\frac{n+\sigma(\hat x)^2}{n+\sigma(x)^2}\ge \langle x,\hat x\rangle^2$. Taking union bound over $n$, a sufficient condition is $\frac{\sigma(\hat x)}{\sigma(x)}\ge|\langle x,\hat x\rangle|$. 
Besides, in practice, buyers may lack information on $\sigma$ and it's only revealed after data production processes. Therefore, there is no motivation for buyers to misreport based on $\sigma$. 
Note that this assumption is not necessary but only for a concise presentation. Otherwise, there exists a mapping $\varphi:\RR^d\rightarrow\RR^d$ such that $\varphi(x)$ is type-$x$ buyer's favorite direction. Composing $\varphi(\cdot)$ with the instrumental value function $\IG(\cdot,\cdot)$ yields an equivalent mechanism design problem. 

Additionally, from \Cref{lem:valueBLR}, we know that only the design matrix will influence $\IG$ rather than response variable values.

\subsubsection{Proof of \Cref{lem:concrete_value}}
Since we have \Cref{lem:infogain}, we only need to calculate the posterior distribution of $\beta$. For data $g[\hat x]=(\hat x,\{\hat y_i\}_{i=1}^n)$, using Bayesian Formula, it holds that 
\begin{align*}
    \PP(\beta_p\given \cN(0,I_d),( \hat x,\{\hat y_i\}_{i=1}^n))&\propto \cN(0,I_d)* \PP(\hat x,\{\hat y_i\}_{i=1}^n\given \beta)\\
    &\propto \exp(-\frac{1}{2}\|\beta\|^2)*\exp(-\sum_{i=1}^n\frac{1}{2}\frac{(\hat y_i-\hat x^T\beta)^2}{\sigma(\hat x)^2+\delta(\hat x)^2})\\
    &\propto \exp(-\frac{1}{2}(\beta^T(I_d+\frac{n\hat x\hat x^T}{\sigma(\hat x)^2+\delta(\hat x)^2})\beta)),
\end{align*}{considering only the quadratic terms of $\beta$. }
It is because that $\epsilon$ and $\epsilon'$ are independent and $\Var(\epsilon+\epsilon')=\sigma^2+\delta^2$. From now on, we omit the prior $q=\cN(0,I_d)$ when easy to infer from the context for simplicity.

Therefore, we have that the variance of the posterior distribution of $\beta$ is 
\[
\Var(\beta_p\given \cD[\hat x])=\Sigma_p=(I_d+\frac{n\hat x\hat x^T}{\sigma(\hat x)^2+\delta(\hat x)^2})^{-1},
\]
and \Cref{lem:infogain} leads to what we need.{\hfill $\square$\par}

\subsubsection{Proof of \Cref{lem:paymentgrad}}
Now, we are going to prove \Cref{lem:paymentgrad}. With IC constraint, it holds that
\[
\IG(x,  x) - t(x) \geq \IG(x, \hat{x}) - t(\hat{x})\textrm{ for all }x, \hat{x}.
\]
Rearranging the algebra items leads to $t(\hat x)-t(x)\ge \IG(x,\hat x)-\IG(x,x)$. Now, we change the placement of $x$ and $\hat x$, it holds that $t(\hat x)-t(x)\le \IG(\hat x ,\hat x)-\IG(\hat x,x)$. Combining the two parts, we have 
\[
\IG(x,\hat x)-\IG(x,x)\le t(\hat x)-t(x)\le \IG(\hat x ,\hat x)-\IG(\hat x,x).
\]
Then, dividing by $\hat x-x$ and letting $\hat x$ converge to $x$ results in $\nabla t(x)=\nabla_y \IG(x,y)\given_{y=x}$.

Now, we calculate the gradient of $\IG(\cdot,\cdot)$. We have the following equations.
\begin{align*}
    \nabla_y \IG(x,y)\given_{y=x}&=-\frac{1}{2}\nabla_y \log(x^T(I+\frac{nyy^T}{\sigma(y)^2+\delta(y)^2})^{-1}x)\\
    &=-\frac{1}{2}\nabla_y\log(1-\frac{x^Tyy^Tx}{(\sigma(y)^2+\delta(y)^2)/n+1})\given_{y=x}\\
    &=-\frac{1}{2}(1-\frac{x^Tyy^Tx}{\frac{\sigma(y)^2+\delta(y)^2}{n}+1})^{-1}[\frac{2x^Tyx}{\frac{\sigma(y)^2+\delta(y)^2}{n}+1}-\frac{2\frac{\sigma(y)}{n}\nabla\sigma(y)+2\frac{\delta(y)}{n}\nabla\delta(y)}{(\frac{\sigma(y)^2+\delta(y)^2}{n}+1)^2}]\given_{y=x}\\
    &=-\frac{1}{2}\frac{\frac{\sigma(x)^2+\delta(x)^2}{n}+1}{\frac{\sigma(x)^2+\delta(x)^2}{n}}(\frac{2x}{\frac{\sigma(x)^2+\delta(x)^2}{n}+1}+\frac{2\sigma(x)/n\nabla\sigma(x)+2\delta(x)/n\nabla\delta(x)}{((\sigma(x)^2+\delta(x)^2)/n+1)^2}).
\end{align*}
The first equation comes from the definition $\IG(\cdot,\cdot)$ while the third equation comes from the derivation of multi-variant functions. When we replace $y$ with $x$, we have the last equation since $\|x\|^2=x^Tx=1$. As for the second equation, we need the following lemma.
\begin{lemma}[Sherman-Morrison-Woodbury Formula \citep{sherman1950adjustment}]\label{inverse}
        Let $A$ and $B$ be nonsingular $m\times m$ and $n\times n$ matrices, respectively, and let $U$ be $m \times n$ and $V$ be $n\times m$. Then
        \[
        (A+UBV)^{-1}=A^{-1}-A^{-1}UB(B+BVA^{-1}UB)^{-1}BVA^{-1}.
        \]
    \end{lemma}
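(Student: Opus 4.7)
The plan is to verify the identity by direct right-multiplication, i.e., to show that if $M$ denotes the claimed right-hand side
\[
M := A^{-1}-A^{-1}UB(B+BVA^{-1}UB)^{-1}BVA^{-1},
\]
then $(A+UBV)M = I_m$. Since $A+UBV$ is assumed nonsingular and square, this single identity is enough to conclude $M = (A+UBV)^{-1}$; one does not need to check left-multiplication separately.

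Concretely, I would split $M$ into its two summands and compute $(A+UBV)M$ term by term. The first term gives $(A+UBV)A^{-1} = I_m + UBVA^{-1}$ immediately. The key step is the second term
\[
(A+UBV)\cdot A^{-1}UB(B+BVA^{-1}UB)^{-1}BVA^{-1}.
\]
Here the crucial algebraic observation is that the left factor collapses as
\[
(A+UBV)A^{-1}UB \;=\; UB + UBVA^{-1}UB \;=\; U\bigl(B+BVA^{-1}UB\bigr),
\]
so the inner inverse cancels and what remains is exactly $UBVA^{-1}$. Subtracting the two contributions yields $I_m + UBVA^{-1} - UBVA^{-1} = I_m$, as desired.

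The only real subtlety, and arguably the main obstacle, is implicit well-definedness of the stated formula: the middle inverse $(B+BVA^{-1}UB)^{-1}$ must exist. Factoring $B+BVA^{-1}UB = B(I_n+VA^{-1}UB)$ and using the nonsingularity of $B$, this reduces to invertibility of $I_n + VA^{-1}UB$, which is the standard regularity condition under which Sherman--Morrison--Woodbury applies. With this condition in force, each step above is justified and the verification is a short, essentially mechanical computation.
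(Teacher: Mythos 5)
Your verification is correct: the identity $(A+UBV)A^{-1}UB = U\bigl(B+BVA^{-1}UB\bigr)$ does make the inner inverse cancel, the remaining terms telescope to $I_m$, and for square matrices a one-sided inverse of a nonsingular matrix is the inverse, so checking right-multiplication alone suffices. You also correctly isolate the only genuine subtlety, namely that $(B+BVA^{-1}UB)^{-1} = \bigl(B(I_n+VA^{-1}UB)\bigr)^{-1}$ exists precisely under the standard regularity condition (which is equivalent, via Sylvester's determinant identity, to the assumed nonsingularity of $A+UBV$). The paper itself offers no proof of this lemma --- it is quoted as a classical result with a citation --- so there is nothing to compare against; your argument is the standard direct verification and is complete.
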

Taking $A=I$, $U=y$, $V=y^T$ and $B=\frac{n}{\sigma(y)^2+\delta(y)^2}$ in \Cref{inverse} leads to the second equation.

Since $x\in\{x:\|x\|=1\}$, it holds that $x\cdot dx=0$. Then, it holds that
\begin{align*}
    t(x)&=C+\int_{x_0}^x \nabla_y \IG(s,y)\given_{y=s}\cdot ds\\
    &=C+\int_{x_0}^x -\frac{1}{2}\frac{(\sigma(s)^2+\delta(s)^2)/n+1}{(\sigma(s)^2+\delta(s)^2)/n}\frac{2\sigma(s)/n\nabla\sigma(s)+2\delta(s)/n\nabla\delta(s)}{((\sigma(s)^2+\delta(s)^2)/n+1)^2}\cdot ds\\
    &=C+\frac{1}{2}\log(\frac{(\sigma(x)^2+\delta(x)^2)/n+1}{(\sigma(x)^2+\delta(x)^2)/n}),
\end{align*}
with a little abuse of the constant $C$.

The first equation is straightforward however we need to check if it's independent of the path of integration. The second and third equations are some calculations. It's easy to see that the construction of $t(x)$ satisfies the condition in \citep{jehiel1999multidimensional} that the field is conservative. Therefore, the value is irrelevant to the path of integration and the payment rule is well-defined.
The proof suggests that our method has greater verifiability compared to other methods, such as cycle monotonicity~\citep{lavi2007truthful}, as well.
{\hfill $\square$\par}

\subsubsection{Proof of \Cref{thm:optimalmech}}
From \Cref{lem:paymentgrad}, we know that the form of optimal payment rule is $t(x)=C+\frac{1}{2}\log(\frac{\sigma(x)^2+\delta(x)^2+n}{\sigma(x)^2+\delta(x)^2})$. 

Now, we need to calculate the indicator function of the IR constraint that
\[
\ind(C+\frac{1}{2}\log(\frac{\sigma(x)^2+\delta(x)^2+n}{\sigma(x)^2+\delta(x)^2})+\frac{1}{2}\log(x^T(I+\frac{nxx^T}{\sigma(x)^2+\delta(x)^2})^{-1}x)\le 0)=\ind(C\le 0).
\]
It is because that 
\begin{align*}
    x^T(I+\frac{nxx^T}{\sigma(x)^2+\delta(x)^2})^{-1}x&=x^T(I-\frac{n}{\sigma(x)^2+\delta(x)^2+n}xx^T)x\\
    &=1-\frac{n}{\sigma(x)^2+\delta(x)^2+n}\\
    &=\frac{\sigma(x)^2+\delta(x)^2}{\sigma(x)^2+\delta(x)^2+n}.
\end{align*}
The first equation comes from \Cref{inverse} with $A=I$, $U=x$, $V=x^T$ and $B=\frac{n}{\sigma(x)^2+\delta(x)^2}$. The second equation comes from $\|x\|=1$ while the last equation is from simple algebra calculation.

Therefore, the optimal $C$ is zero for any $x$ and the payment rule is exactly \[t(x)=\frac{1}{2}\log(\frac{\sigma(x)^2+\delta(x)^2+n}{\sigma(x)^2+\delta(x)^2}).\]

As for $\delta(\cdot)$, we will show that the optimal solution is 0 too. Since $\log(\frac{x+n}{x})$ is a decreasing function with respect to $x$ for any fixed $n$, taking $\delta(\cdot)=0$ leads to the optimality.

Then, for any type $x$, adding no noise and charging $\frac{1}{2}\log(\frac{n+\sigma(x)^2}{\sigma(x)^2})$ are the optimal solution to the optimization problem \ref{optimization_problem} if the distribution of type is ${\delta}_x$ where ${\delta}_{(\cdot)}$ is the Dirac delta function. However, since the solution is optimal to any type $x$, it's also the optimal solution to the general optimization problem \ref{optimization_problem} as long as satisfying the IC constraint which is guaranteed by \Cref{lem:paymentgrad}.

Combining the above parts leads to our theorem and finishes our proof.{\hfill $\square$\par}

\subsection{Omitted Proof in Section~\ref{sec:subopt}}
\subsubsection{Proof of \Cref{thm:imposs}}
Recall that the value of the data $\cD[\hat x]$ is 
\[\IG(x, \hat x)=\frac{1}{2}\log(| x\Var(\beta_q)x^T|)-\frac{1}{2}\log\rbr{\abr{x\Var(\beta_p\given \cD[\hat x])x^T}}.
\]

Similar to the argument presented in \Cref{sec:optimal}, we make the assumption that the variance of the prior belief regarding 
$\beta$ is 
$I_d$, without loss of generality. We also assume that the seller can attain the first-best revenue and we will demonstrate a contradiction in due course. To achieve the first-best revenue, since the buyer reveals their report after the seller produces the data, the seller must disclose all data in some condition for any report 
$\hat x$
  in order to prevent loss of information and charge the corresponding information gain. For instance, let us assume that the type is represented by a two-dimensional vector denoted by 
$x=(a,b)^T$
 , where 
$a,b\in[1,2]$. Hence, each data point assists in predicting each component of 
$\beta$. The first-best revenue implies that the payment rule is precisely 
$t(\hat x)=\IG(\hat x,\hat x)$ and the buyer will derive no consumer surplus.

Now, we are ready to calculate the exact form of $t(\cdot)$. We first calculate the posterior distribution of $\beta$. It holds that
\begin{align*}
    \PP(\beta\given\cN(0,I_d), \cD^X)&\propto \cN(0,I_d)*\PP(\DM,Y\given \beta)\\
    &\propto\exp(-\frac{1}{2}\|\beta\|^2)*\exp(-\frac{1}{2}(Y-\DM\beta)^T\Sigma^{-1}(Y-\DM\beta))\\
    &\propto\exp(-\frac{1}{2}\|\beta\|^2)*\exp(-\frac{1}{2}\beta^T\DM^T\Sigma^{-1}\DM\beta)\\
    &\propto\exp(-\frac{1}{2}\beta^T(I+\DM^T\Sigma^{-1}\DM)\beta)
\end{align*}
The first line comes from the Bayesian formula and the following lines are simple algebra calculations. Recall that $\Sigma$ is a mapping from $\DM$. Therefore, it holds that $\Var(\beta_p\given \cD^X)$ is \[\Var(\beta_p\given \cD^x)=(I+\DM^T\Sigma^{-1}\DM)^{-1}.\] 

With the posterior distribution in hand, it holds that 
\[
t(\hat x)=\frac{1}{2}\log(\hat x^T\hat x)-\frac{1}{2}\log(\hat x^T(I+\DM^T\Sigma^{-1}\DM)^{-1}\hat x).
\]

Besides, the value of data is 
\[
\IG(x,\hat x)=\frac{1}{2}\log(x^Tx)-\frac{1}{2}\log( x^T(I+\DM^T\Sigma^{-1}\DM)^{-1} x).
\]

Therefore, the extra surplus of reporting $\hat x$ is 
\[
\IG(x,\hat x)-t(\hat x)=\frac{1}{2}\left[\log(x^Tx)-\log( x^T(I+\DM^T\Sigma^{-1}\DM)^{-1} x)-\log(\hat x^T\hat x)+\log(\hat x^T(I+\DM^T\Sigma^{-1}\DM)^{-1}\hat x)\right].
\]

In order to satisfy the IC constraint, we need to prove that $t(\hat x)$ is a constant for any $\hat x$. However, since the type space is continuous, it holds that $\DM^T\Sigma^{-1}\DM$ should be a constant times identity matrix. However, $\DM$ can be determined by the following optimization problem 
\[
\DM=\argmax_\DM \EE_x t(x)=\argmax_\DM \EE_x \frac{1}{2}\log( x^Tx)-\frac{1}{2}\log(x^T(I+\DM^T\Sigma^{-1}\DM)^{-1}x).
\]

Therefore, there exists some priori distribution for type $x$, Dirac delta distribution for example, that $\DM^T\Sigma^{-1}\DM$ isn't a constant times identity matrix which leads to a contradiction. With the continuity of $\IG(\cdot,\cdot)$, there exists a continuous distribution leading to contradiction as well.

Therefore, it means that our assumption that the seller can achieve the first-best revenue is wrong and it ends our proof. {\hfill $\square$\par}

\subsubsection{Proof of \Cref{thm:surplusucb}}\label{proof:surplus}
For any design matrix $\DM$, as the proof of \Cref{thm:imposs}, the total surplus of the buyer and the seller is no larger than $\IG(x,x)=\frac{1}{2}\log(x^Tx)-\frac{1}{2}\log(x^T(I+\DM^T\Sigma^{-1}\DM)^{-1}x)$. Therefore, we only need to construct a mechanism satisfying IC and IR constraints that for any type $x$, the seller's revenue is no smaller than $\IG(x,x)-\log(\kappa((\sqrt{\Sigma(\DM)})^{-1}\DM))$. More generally, we allow the seller to manipulate her data by adding some noise as in \Cref{sec:optimal} or doing linear combination as in \Cref{algo:suboptimal}. However, the data-generating process must be pellucid and known by both the seller and the buyer which is demanded by mechanism design problems. The result can directly lead to \Cref{thm:surplusucb}.

Since $\kappa((\sqrt{\Sigma(\DM)})^{-1}\DM)$ is the square root of the condition number or the ratio between the largest and the smallest singular values of $\DM^T\Sigma(\DM)^{-1}\DM$, if $\DM^T\Sigma(\DM)^{-1}\DM$ is not full-rank, \Cref{thm:surplusucb} is right trivially. So, we only need to consider the situation that $\DM^T\Sigma(\DM)^{-1}\DM$ is full-rank. As $\DM$ is a $n\times d$ matrix, it means that $n\ge d$ for sure. It means that in order to separate buyers with different personal types, the seller needs at least some amount of data to achieve price discrimination. 
 
For simplicity, we use $\DM$ to replace $(\sqrt{\Sigma(\DM)})^{-1}\DM$. For any dataset, we can transform $(\DM,Y)$ to $((\sqrt{\Sigma(\DM)})^{-1}\DM,(\sqrt{\Sigma(\DM)})^{-1}Y)$ so that the noise distribution is indeed $\cN(0,I)$. It's just for readers' understanding and transforming back leads to the original \Cref{thm:surplusucb}.


The key challenge in our setting is caused by the fact that $x$, the buyer's private type, is continuous.
A difficulty caused by continuity is ensuring incentive compatibility. While it is easy to deter buyers from scaling their feature vector when reporting (e.g. rescale all reported features on the seller's end prior to allocation and pricing), stopping buyers from \emph{rotating} the reported feature vector is much harder. As the data design matrix $\DM$ is not equally informative in all directions in the feature space, by rotating the reported feature vector, the buyer could act as if he did not gain much information from purchasing the data. At the same time, slightly rotating the reported direction could still lead to significant information gain from the buyer due to the small angle between his true private type and his reported type. We visualize the impact of rotating the reported type in Figure~\ref{fig:visualization_of_the_effect_of_rotation}.

\begin{figure}[h]
    \centering
    \includegraphics[width=\textwidth]{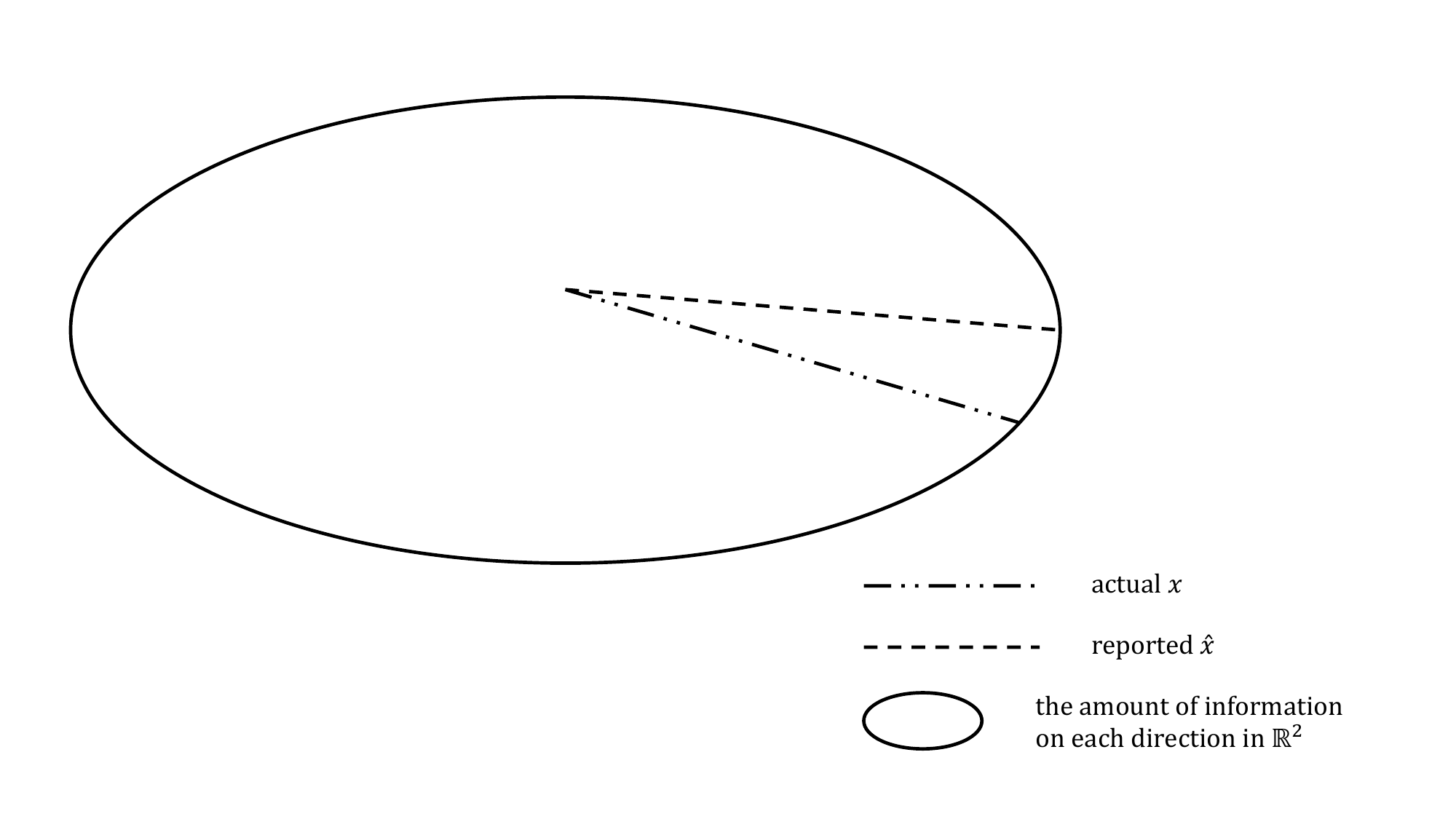}
    \caption{Visualization of the effect of reporting the rotated private type in $\RR^2$. The ellipse represents the amount of information that $\DM$ contains on each direction in $\RR^2$. Long-dash-dot-dot line denotes the buyer's private type $x$ and dash line the buyer's report $\hat{x}$.}
    \label{fig:visualization_of_the_effect_of_rotation}
\end{figure}

Intuitively, the radius of the ellipse in Figure~\ref{fig:visualization_of_the_effect_of_rotation} roughly measures the amount of information that $\DM$ contains on each direction in the parameter space $\RR^2$. While the buyer's true type, depicted in long-dash-dot-dot line in Figure~\ref{fig:visualization_of_the_effect_of_rotation}, benefits the most if he were to report truthfully, by reporting $\hat{x}$, visualized by the dash line, the buyer could lead the seller to believe that he gains less information from the data purchased, thereby potentially reducing the price the seller charges. Moreover, when the angle between $\hat{x}$ and $x$ is sufficiently small, the buyer would still gain sufficient information from untruthful reporting, while potentially paying less.

To deter buyers from rotating their report, we propose the SVD mechanism, whose hallmark is performing Singular Value Decomposition (SVD) over the buyer's feature matrix $\DM$ to obtain 
\[
    \DM=U\begin{bmatrix}S \\ 0_{(n-d)\times d}\end{bmatrix}V,
\]
where $S=\diag\{\lambda_1,\ldots,\lambda_d\}$, $0_{(n - d) \times d} \in \RR^{(n - d) \times d}$ is a matrix filled with zeros, and $U \in \RR^{n \times n}, V \in \RR^{d \times d}$ are unitary matrices. In particular, $\{\lambda_1,..,\lambda_d\}$ are singular values of $\DM$, we define $\DM$'s condition number as $\kappa(\DM)=\frac{\max_i \lambda_i}{\min_i \lambda_i}$. We also obtain the left-inverse of $\DM$, $L = U\begin{bmatrix}S^{-1} \\ 0\end{bmatrix}V$, such that
\[
    L^T\DM = V^T\begin{bmatrix}S^{-1} & 0\end{bmatrix}U^TU\begin{bmatrix}S^{-1} \\ 0\end{bmatrix}V = I.
\]

Geometrically speaking, $U$ and $V$ are rotation matrices that help orthogonalize different components. The entries in the diagonal matrix $S$, $\lambda_i$, are the singular values that dictate the amount of information contained in a specific direction in $\RR^d$. Particularly, for the multi-armed bandits setting we deferred to \Cref{thm:mab}, $\lambda_i$ equals to $\sqrt{n_i}$ where $n_i$ is the number of observations at the $i$-th component. Intuitively speaking, the larger $\lambda_i$ is, the more information $\DM$ contains in the particular direction.

Recall that we have the SVD mechanism in \Cref{algo:suboptimal} and let's provide a high-level description of the algorithm. Performing SVD over $\DM$ recovers the amount of information the seller's data contains on directions in $\RR^d$. In particular, these directions are given by the row vectors of $V$, and form an orthonormal basis of the $\RR^d$ parameter space. Using the results of SVD, the mechanism constructs projection operator $L$ and normalizes the projection of $\hat{x}$, $b(\hat{x})$. The allocation and payment rules are constructed using the projection $b(\hat{x})$ according to the penultimate line in~\Cref{algo:suboptimal}.

As the payment rule never charges more than the buyer's information gain whenever he is truthful, the mechanism is individually rational, which we formalize in the following lemma.
\begin{lemma}\label{lem:IR}
The mechanism satisfies individual rationality (IR).
\end{lemma}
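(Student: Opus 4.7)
The plan is to show something stronger than IR, namely that the buyer's net utility is \emph{exactly zero} whenever he reports truthfully. Since IR requires only $\IG(x,x) - t(x) \geq 0$, this equality suffices. The entire proof boils down to applying the closed-form expression of Corollary~\ref{lem:valueBLR} to the specific DPP produced by Mechanism~\ref{algo:suboptimal} and checking that it coincides term by term with the payment rule.

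First, I would identify the effective DPP that the mechanism reveals under truthful report $\hat x = x$. After the normalization step, we may assume $\Sigma(X) = I$ so that $Y = X\beta + \epsilon$ with $\epsilon \sim \cN(0, I_n)$. The mechanism outputs the single-row linear functional $(b(x)^T X,\, b(x)^T Y)$, which behaves as one data point with feature vector $X^T b(x) \in \RR^d$ and noisy label $b(x)^T Y = (X^T b(x))^T \beta + b(x)^T \epsilon$. By construction in line~\ref{algo:projection}, $b(x) = L x / \|L x\|$ is a unit vector, so the scalar noise $b(x)^T \epsilon$ is Gaussian with variance $\|b(x)\|^2 = 1$.

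Second, I would invoke Corollary~\ref{lem:valueBLR} with prior covariance $\Sigma_q = I_d$, design ``matrix'' consisting of the single row $(X^T b(x))^T$, and noise variance $1$. This immediately yields
\[
\IG(\cD[x]; x) \;=\; \frac{1}{2}\log(x^T x) \;-\; \frac{1}{2}\log\!\bigl(x^T (I + X^T b(x) b(x)^T X)^{-1} x\bigr),
\]
which by inspection is exactly the payment $t(x)$ defined in line~\ref{algo:payment} of the mechanism. Thus $\IG(x,x) - t(x) = 0$, and IR holds (with equality) for every buyer type $x$.

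The argument does not have a genuinely hard step; the only subtlety worth flagging is the normalization bookkeeping. Before normalization, noise variances differ across rows and one has to keep track of the scaling $(\sqrt{\Sigma(X)})^{-1}$ applied to both $X$ and $Y$ in the very first line of the algorithm; after this transformation the per-row noise is standard Gaussian, which is precisely what allows a single unit-norm linear combination to have unit-variance noise and hence match the formula in Corollary~\ref{lem:valueBLR} without additional scaling factors. Once this normalization is made explicit, the verification is a one-line comparison between the value formula and the posted price.
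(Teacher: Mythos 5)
Your proof is correct and follows the same route as the paper: the paper's own argument is simply the observation that under truthful reporting the payment $t(x)$ equals the buyer's value $\IG(x,x)$ of the received data, so the net utility is exactly zero and IR holds. You merely make explicit the verification (via Corollary~\ref{lem:valueBLR} and the unit-norm of $b(x)$ after normalization) that the paper leaves as an assertion.
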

\begin{proof}
   For any truthful buyer, the payment $t(\hat{x}) = t(x) = \IG(x, x)$ is exactly his value of the received data $\cD[x]$.
\end{proof}

Proving that the mechanism is incentive-compatible is more involved.
Intuitively, projecting the reported type $\hat{x}$ along $L$ as opposed to $\DM$ \emph{normalizes} the amount of information gained in each direction. Recall that performing SVD over $\DM$ yields $U, S^{-1}, V$, from which $L$ is constructed. Previously, we showcased that the diagonal entries in $S$ correspond to the amount of information $\DM$ has in a particular direction. Conversely, the entries in $S^{-1}$ \emph{normalizes} the amount of information $\DM$ has along the direction. Projecting $\hat{x}$ over $L$ then uses the values in $S^{-1}$ to normalize the information the buyer may gain in a particular direction. As opposed to the ellipse we observe in Figure~\ref{fig:visualization_of_the_effect_of_rotation}, the information $\cD[\cdot]$ provides for each direction now better resembles a circle, and the buyer is thus discouraged from untruthful reports, as doing so would not alter his payment significantly. We formalize the intuition in the following lemma.
\begin{lemma}\label{lem:IC}
    The mechanism satisfies incentive compatibility (IC).
\end{lemma}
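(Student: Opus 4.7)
The plan is to compute the buyer's utility $u(x,\hat{x}) = \IG(x,\hat{x}) - t(\hat{x})$ in closed form, and then reduce the incentive-compatibility requirement $u(x,\hat{x}) \leq u(x,x) = 0$ to the Cauchy--Schwarz inequality. The key observation that drives the whole argument is that the identity $L^T X = I$ (which is what motivated the SVD construction in the first place) forces the ``virtual'' feature vector $b(\hat{x})^T X$ delivered by the mechanism to be \emph{parallel to $\hat{x}$}: indeed, writing $L\hat{x} = U\begin{bmatrix}S^{-1}V\hat{x} \\ 0\end{bmatrix}$ one finds $b(\hat{x})^T X = \hat{x}^T / \|L\hat{x}\|$, with $\|L\hat{x}\|^2 = \hat{x}^T V^T S^{-2} V \hat{x}$. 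Thus the buyer effectively receives one synthetic datum with feature $\alpha(\hat{x})\hat{x}$ and unit-variance noise, where $\alpha(\hat{x}) = 1/\|L\hat{x}\|$.

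Given this reduction, I would apply \Cref{lem:valueBLR} to the single-datum DPP to obtain
\[
\IG(x,\hat{x}) = \tfrac{1}{2}\log(x^T x) - \tfrac{1}{2}\log\bigl(x^T (I + \alpha(\hat{x})^2 \hat{x}\hat{x}^T)^{-1} x\bigr),
\]
and similarly rewrite $t(\hat{x})$. Applying the Sherman--Morrison formula (\Cref{inverse}) to invert the rank-one perturbation, the payment simplifies to $t(\hat{x}) = \tfrac{1}{2}\log(1 + \alpha(\hat{x})^2 \|\hat{x}\|^2)$, and analogously $x^T(I+\alpha^2\hat{x}\hat{x}^T)^{-1}x = \|x\|^2 - \alpha^2(x^T\hat{x})^2/(1+\alpha^2\|\hat{x}\|^2)$. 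Combining the two and cancelling gives
\[
u(x,\hat{x}) = \tfrac{1}{2}\log\frac{\|x\|^2}{\|x\|^2\bigl(1+\alpha(\hat{x})^2\|\hat{x}\|^2\bigr) - \alpha(\hat{x})^2(x^T\hat{x})^2}.
\]

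The final step is to notice that $u(x,\hat{x}) \leq 0$ is, after cancelling the common factor $\alpha(\hat{x})^2$, equivalent to $\|x\|^2\|\hat{x}\|^2 \geq (x^T\hat{x})^2$, which is exactly the Cauchy--Schwarz inequality, with equality iff $\hat{x}$ is parallel to $x$. Hence any truthful report (and any positive scaling thereof, which the allocation already recognizes as equivalent) maximizes the buyer's utility. The main obstacle I anticipate is identifying this right algebraic simplification; the reason it succeeds is conceptual and deserves highlighting in the proof: projecting $\hat{x}$ through $L$ rather than directly along $X$ exactly offsets the anisotropy of $X^T X = V^T S^2 V$, so that the payment term and the value term produce \emph{the same} $\alpha(\hat{x})^2$ prefactor in front of $\|\hat{x}\|^2$ and $(x^T\hat{x})^2$ respectively, leaving only a Cauchy--Schwarz residual. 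This is the formal manifestation of the ``normalize information across directions'' intuition around \Cref{fig:visualization_of_the_effect_of_rotation} that motivates the SVD-based design.
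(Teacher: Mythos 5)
Your proposal is correct, and it reaches the conclusion by a somewhat different route than the paper. Both arguments pivot on the same key identity, $L^T X = I$, which forces $X^T b(\hat x) b(\hat x)^T X = \hat x \hat x^T / \|L\hat x\|^2$ — i.e., the allocated synthetic datum has feature proportional to $\hat x$ with the anisotropy of $X$ exactly cancelled. From there the paper proceeds variationally: it first shows $t(\cdot)$ and $b(\cdot)$ are scale-invariant so that one may restrict to $\|x\|=\|\hat x\|$, and then argues that $\hat x$ minimizes the strongly convex quadratic $x \mapsto x^T(I + X^T b(\hat x) b(\hat x)^T X)^{-1} x$ over that sphere by verifying the first-order condition $\hat x^T (I + X^T b(\hat x) b(\hat x)^T X)^{-1} y = 0$ for all $y \perp \hat x$. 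You instead compute the utility $u(x,\hat x) = \IG(x,\hat x) - t(\hat x)$ fully in closed form via Sherman--Morrison and observe that $u(x,\hat x) \le 0$ is literally Cauchy--Schwarz, $\|x\|^2\|\hat x\|^2 \ge (x^T\hat x)^2$. Your computations check out (in particular $t(\hat x) = \tfrac12\log(1+\alpha(\hat x)^2\|\hat x\|^2)$ and the displayed expression for $u(x,\hat x)$ are correct). What your route buys: it needs no reduction to the sphere, it exhibits the exact utility deficit from misreporting together with the equality case $\hat x \parallel x$, and it sidesteps the slightly delicate step in the paper of concluding that a first-order stationary point of a convex quadratic restricted to a (non-convex) sphere is a global minimizer — a claim that is true here because the quadratic on the sphere reduces to $\|x\|^2 - c(\hat x^T x)^2$, but which the paper justifies only tersely. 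The paper's route, in exchange, makes the geometric ``$\hat x$ is the best direction for the data $\cD[\hat x]$'' statement explicit, which it reuses rhetorically elsewhere.
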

\begin{proof}
    We need to prove that the buyer's utility is maximized when he reports truthfully, namely for all $\hat{x} \in \cX$
    \[
        \IG(x, \hat x)-t(\hat x)\leq \IG(x,  x)-t( x).
    \]
    
    Recall by construction of $t(\cdot)$ that $\IG(x,  x)=t( x) $ and $t(\hat{x}) = \IG(\hat{x},\hat x)$. Thus, we only need to show that $\IG(\hat x,x)\leq \IG(\hat x,\hat x)$. We first note that the payment rule $t(\cdot)$ is scale-invariant. Indeed, for any $\hat{x} \in \cX$ and $c \geq 0$, we have
    \[
        t(c\hat{x}) = \frac{1}{2}\log(\hat{x}^T\hat{x}) + c - \frac{1}{2}\log\rbr{\hat{x}^T\rbr{I^{-1} + \DM^Tb(\hat{x})b(\hat{x})^T\DM}^{-1}\hat{x}} - c = t(\hat{x}),
    \]
    as we note that $b(\hat{x})$ is also scale invariant.
    It is then without loss of generality to assume that $\|x\|=\|\hat x\| $. 
    
    Expanding both $\IG(\hat x,x)$ and $\IG(\hat x,\hat x)$, we know that proving IC is equivalent to showing
    \[
        x^T( I +\sellX^Tb(\hat x)b(\hat x)^T\sellX)^{-1}x\ge {\hat x}^T(I+\sellX^Tb(\hat x)b(\hat x)^T\sellX)^{-1}\hat x,
    \]
    as we recall the logarithmic function $\log(\cdot)$ is monotonic and note that the matrix $ I +\sellX^Tb(\hat x)b(\hat x)^T\sellX$ is positive definite.

    Define the shorthand $f(x,\hat x)=x^T( I+\sellX^Tb(\hat x)b(\hat x)^T\sellX)^{-1}x$, which is a strongly convex quadratic function. Observe that proving the inequality holds for any pair $x, \hat{x}$ can be reduced to showing that for any $\hat{x}$, the induced quadratic function $f(x, \hat{x})$ is minimized by $\hat{x}$ over the set $\{x \in \RR^d: \|x\| = \|\hat{x}\|\}$, as we recall it is without loss of generality to focus the set of vectors with the same length as $\hat{x}$. 
    
    By first order condition of strongly convex functions, we need to show that $\frac{\partial f(x,\hat x)}{\partial x}\given_{x=\hat x}=0$. Since we restrict ourselves to the set $\{x \in \RR^d: \|x\|  = \|\hat{x}\| \}$, we only need to prove that $\hat x^T(I+\sellX^T\hat x\hat x^T\sellX)^{-1}y=0$ for any $y$ such that $\hat x^Ty=0$. By Sherman-Morrison (\Cref{inverse}), we have
    \[
        ( I+\sellX^Tb(x)b(x)^T\sellX)^{-1}=I-\frac{\sellX^Tb(x)b(x)^T\sellX}{1+b(x)^T\sellX\sellX^Tb(x)}.
    \]
    Clearly ${\hat{x}}^TIy = \hat{x}^T y = 0$. Additionally, we know that
    \[
        \hat x^T\sellX^Tb(x)b(x)^T\sellX y = \frac{1}{\|L\hat{x}\| ^2}\hat x\sellX^TL\hat x\hat x^TL^T\sellX y= \frac{1}{\|L\hat{x}\| ^2}\hat x\sellX^TL\hat x\hat x^Ty=0,
    \]
    as $\hat x^Ty=0$ and $L^T\DM = I$. We then know that 
    \[
        \hat{x} = \argmin_{x \in \{x \in \RR^d: \|x\|  = \|\hat{x}\| \}}f(x,\hat x)
    \]
    for any $\hat{x}$ and consequently, $\IG(\hat x,x)\leq \IG(\hat x,\hat x)$, completing the proof.
\end{proof}

Finally, we show that the mechanism achieves near-optimal revenue with at most $\log(\kappa(\DM))$ loss. Note that the maximum revenue is no greater than the social welfare for any IR mechanism. Moreover, selling all data maximizes information gain, and by extension, social welfare. Then, for any private type $x$, a trivial upper bound over the revenue is $\IG( g^\DM;q,x)$, which holds for all possible choices of $\cD[\cdot]$. Recall that we dub this revenue, $\IG( g^\DM;q,x)$, the first-best revenue, as it is the maximum revenue that any potential non-IC mechanism can extract from a buyer with private type $x$.

We now show that the gap between the revenue achieved by our mechanism and this theoretical upper bound is always no greater than a problem-dependent constant.
\begin{theorem}\label{thm:suboptimal}
    The SVD mechanism achieves revenue no less than $\IG( g^\DM;q,x)-\log(\kappa(\sellX))$ for any buyer with private feature $x$, where we recall $\kappa(\DM) = \lambda_{\max}(\DM) / \lambda_{\min}(\DM)$ is the ratio between the largest and the smallest singular values of $\DM$. 
\end{theorem}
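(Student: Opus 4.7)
The plan is to reduce the regret to a single scalar inequality in the eigenvalues of $\DM^T\DM$ and then dispatch it by elementary factoring. Since Lemmas~\ref{lem:IR} and~\ref{lem:IC} already give IR and IC, the revenue at truthful reporting is $t(x)$, and by construction of the payment $t(x)=\IG(\cD[x];q,x)$. Thus the regret is $\IG(g^{\DM};q,x)-\IG(\cD[x];q,x)$. Working with the already-normalized data so that the noise covariance is $I$, Corollary~\ref{lem:valueBLR} applied to the single effective observation $(b(x)^T\DM,\, b(x)^T Y)$ of variance $\|b(x)\|^2=1$ yields
\[
    \IG(g^{\DM};q,x) - t(x) \;=\; \tfrac{1}{2}\log\frac{x^T\bigl(I+\DM^T b(x)b(x)^T \DM\bigr)^{-1}x}{x^T\bigl(I+\DM^T\DM\bigr)^{-1}x}.
\]

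Next I would exploit the SVD. Using the identity $L^T\DM = I$ (which follows directly from the construction $L=U[S^{-1};0]V$) I get $\DM^T b(x)=x/\|Lx\|$, so the rank-one allocation term is $\DM^T b(x)b(x)^T\DM = xx^T/\|Lx\|^2$; Sherman--Morrison (Lemma~\ref{inverse}) then reduces the numerator's quadratic form to $\tfrac{x^Tx\cdot\|Lx\|^2}{x^Tx+\|Lx\|^2}$. Diagonalizing by setting $z=Vx$, $\mu_i=\lambda_i^2$, $a_i=z_i^2$, and writing $A=\sum_i a_i$, $B=\sum_i a_i/\mu_i$, $C=\sum_i a_i/(1+\mu_i)$, the regret collapses to $\tfrac{1}{2}\log R$ with $R = AB/[(A+B)C]$. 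Hence it suffices to prove $R\le \mu_{\max}/\mu_{\min}=\kappa(\DM)^2$.

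The core estimate rests on the identity $1/R = C/A + C/B$. Since $1/(1+\mu)$ is decreasing and $\mu/(1+\mu)$ is increasing in $\mu$, termwise monotonicity yields $C/A \ge 1/(1+\mu_{\max})$ and $C/B \ge \mu_{\min}/(1+\mu_{\min})$. It then suffices to verify
\[
    \frac{1}{1+\mu_{\max}} + \frac{\mu_{\min}}{1+\mu_{\min}} \;\ge\; \frac{\mu_{\min}}{\mu_{\max}},
\]
which after clearing denominators reduces to $(\mu_{\max}-\mu_{\min})\bigl(1+\mu_{\min}+\mu_{\min}\mu_{\max}\bigr)\ge 0$, evidently true. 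Therefore $R\le\kappa(\DM)^2$ and the regret is at most $\log\kappa(\DM)$.

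The step I expect to be the main obstacle is spotting this reduction. Direct Cauchy--Schwarz or Kantorovich-style bounds on $AB/[(A+B)C]$ do not land on the condition number. What works is the split $1/R = C/A + C/B$ followed by the two monotone termwise bounds that cleanly separate the roles of $\mu_{\min}$ and $\mu_{\max}$; everything else (the SVD bookkeeping, the rank-one inversion, and the final factoring) is routine.
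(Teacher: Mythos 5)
Your proposal is correct, and the skeleton is the same as the paper's: reduce the regret to $\tfrac12\log R$ with $R$ the ratio of the two quadratic forms, use $L^T\DM=I$ and Sherman--Morrison to collapse the allocated rank-one term, and diagonalize via the SVD so that $R=AB/[(A+B)C]$ in your notation (the paper writes the same quantity with the normalization $A=\sum_i (Vx)_i^2=1$). Where you genuinely diverge is the final scalar inequality. The paper lower-bounds the denominator by Cauchy--Schwarz, $(A+B)C=\bigl(\sum_i\tfrac{1+\lambda_i^2}{\lambda_i^2}a_i\bigr)\bigl(\sum_i\tfrac{a_i}{1+\lambda_i^2}\bigr)\ge\bigl(\sum_i\tfrac{a_i}{\lambda_i}\bigr)^2$, and then applies crude $\min/\max$ bounds to numerator and denominator, using $\sum_i a_i=1$; this does land on $\kappa(\DM)^2$ (the paper's displayed middle term has a typo, $\lambda_i^2$ where $\lambda_i$ is meant, so your remark that Cauchy--Schwarz ``does not land on the condition number'' is not quite right). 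You instead split $1/R=C/A+C/B$, bound each ratio termwise by monotonicity of $1/(1+\mu)$ and $\mu/(1+\mu)$, and close with the factorization $(\mu_{\max}-\mu_{\min})(1+\mu_{\min}+\mu_{\min}\mu_{\max})\ge0$; I checked the algebra and it is correct. Your route buys a fully elementary, scale-invariant argument (no normalization of $x$ is needed and Cauchy--Schwarz is avoided), and it cleanly isolates where $\mu_{\min}$ and $\mu_{\max}$ enter; the paper's route is shorter on the page but leans on a normalization and a slightly miswritten intermediate display. Either way the conclusion $R\le\kappa(\DM)^2$, hence regret at most $\log\kappa(\DM)$, follows.
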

\begin{proof}
Recall that we use $\DM$ to replace $(\sqrt{\Sigma(\DM)})^{-1}\DM$ for simplicity.
By \Cref{lem:IC}, our mechanism is truthful, and any buyer with private type $x$ would report truthfully. Recalling the construction of $t(\cdot)$, our goal is to then bound the difference between $\IG(x, x)$ and $\IG( g^\DM;q,x)$ which, by definition of $\IG(\cdot,\cdot)$, is equivalent to showing
\begin{equation}
    \label{eq:proof_suboptimal_key_inequality}
    \frac{1}{2}\log(x^T(I+\sellX^Tb(x)b(x)^T\sellX)^{-1}x)-\frac{1}{2}\log(x^T(I+\sellX^T\sellX)^{-1}x)\le\log(\kappa(\sellX)).
\end{equation}
 Moreover, examining Inequality~\eqref{eq:proof_suboptimal_key_inequality}, we know it is further without loss of generality to assume that $\|x\|  = 1$. By construction, $b(x)$ is invariant to the scaling of $x$. The effect of scaling $x$ then cancels out in the left-hand side.

Let us begin with the first term on the left-hand side of Inequality~\eqref{eq:proof_suboptimal_key_inequality}. Expanding $(I+\sellX^Tb(x)b(x)^T\sellX)^{-1}$ by Sherman-Morrison, we have 
\[
    x^T(I+\sellX^Tb(x)b(x)^T\sellX)^{-1}x= 1 - \frac{x^T \sellX^T b(x) b(x)^T \sellX x}{1+b(x)^T\sellX\sellX^Tb(x)}.
\] 
Focus on the enumerator first. Recalling $b(x) = \frac{Lx}{\|Lx\| }$ where $L$ is the left inverse of $\DM$, for the enumerator we have
\[
    x^T \sellX^T b(x) b(x)^T \sellX x = \frac{1}{\|Lx\|^2 }x^T \DM^T L x x^T L^T \DM x = \frac{1}{\|Lx\| ^2} x^Txx^Tx = \frac{1}{\|Lx\| ^2}.
\]
As for the denominator, again recalling $b(x) = \frac{Lx}{\|Lx\| }$, we have
\begin{align*}
    1 + b(x)^T\DM\DM^Tb(x) &= 1 + \frac{1}{\|Lx\| ^2} x^TL^T\DM\DM^TLx = \frac{1}{\|Lx\| ^2}(\|Lx\| ^2 + x^TL^T\DM\DM^TLx) \\
    & = \frac{1}{\|Lx\| ^2}x^T(L^TL + I)x.
\end{align*}
Additionally, note that $L = U\begin{bmatrix}S^{-1} \\ 0\end{bmatrix}V$ with $U$ and $V$ being orthonormal matrices, we have
\begin{align*}
    L^TL &= V^T\begin{bmatrix}S^{-1} & 0\end{bmatrix}U^TU\begin{bmatrix}S^{-1} \\ 0\end{bmatrix}V = V^T(S^{-1}S^{-1})V = V^T\diag\rbr{\frac{1}{\lambda_1^2}, \ldots, \frac{1}{\lambda_d^2}}V,
\end{align*}
as we recall $S = \diag(\lambda_1, \ldots, \lambda_d)$ and is diagonal. We then easily know 
\[
    x^T(L^TL + I)x = x^TV^T\diag\rbr{\frac{1 + \lambda_1^2}{\lambda_1^2}, \ldots, \frac{1 + \lambda_d^2}{\lambda_d^2}}Vx
\]
and therefore
\begin{equation}
    \label{eq:proof_suboptimal_first_term_log_term}
    x^T(I+\sellX^Tb(x)b(x)^T\sellX)^{-1}x = {1 - \frac{1}{x^TV^T\diag\rbr{\frac{1 + \lambda_1^2}{\lambda_1^2}, \ldots, \frac{1 + \lambda_d^2}{\lambda_d^2}}Vx}}.
\end{equation}

We now turn our attention to $x^T(I+\sellX^T\sellX)^{-1}x$. Applying Woodbury matrix identity (i.e. \Cref{inverse}), we know $(I+\sellX^T\sellX)^{-1}=I-\sellX^T(I+\sellX\sellX^T)^{-1}\sellX$. Since $\sellX^T(I+\sellX\sellX^T)^{-1}\sellX=V^T\diag\rbr{\frac{\lambda_1^2}{1+\lambda_1^2},\ldots,\frac{\lambda_d^2}{1+\lambda_d^2}}V$ as we recall the SVD for $\DM$, we have
\begin{equation}
    \label{eq:proof_suboptimal_second_term_log_term}
    x^T(I+\sellX^T\sellX)^{-1}x = 1 - x^TV^T\diag\rbr{\frac{\lambda_1^2}{1+\lambda_1^2},\ldots,\frac{\lambda_d^2}{1+\lambda_d^2}}Vx.
\end{equation}

Combining Equation~\eqref{eq:proof_suboptimal_first_term_log_term} and Equation~\eqref{eq:proof_suboptimal_second_term_log_term}, we have
\begin{equation} \label{tight}
    \begin{split}
    \frac{x^T(I+\sellX^Tb(x)b(x)^T\sellX)^{-1}x}{x^T(I+\sellX^T\sellX)^{-1}x}&=\frac{1}{1-\sum_{i = 1}^d\frac{\lambda_i^2}{1+\lambda_i^2}(Vx)_i^2}\rbr{1-\frac{1}{\sum_{i = 1}^d \frac{1+\lambda_i^2}{\lambda_i^2}(Vx)_i^2}}\\
    &=\frac{\sum_{i = 1}^d\frac{1}{\lambda_i^2}(Vx)_i^2}{(\sum_{i = 1}^d\frac{(1+\lambda_i^2)}{\lambda_i^2}(Vx)_i^2)(\sum_{i = 1}^d\frac{1}{1+\lambda_i^2}(Vx)_i^2)}\\
    &\le\frac{\sum_{i = 1}^d\frac{1}{\lambda_i^2}(Vx)_i^2}{(\sum_{i = 1}^d\frac{1}{\lambda_i^2}(Vx)_i^2)^2} \le \kappa(\sellX)^2,
    \end{split}
\end{equation}
where $(Vx)_i$ denotes the $i$-th component of the $d$-dimensional vector $Vx$. Here, the first inequality holds due to the Cauchy-Schwarz inequality while the second inequality holds because the numerator is no larger than $\frac{1}{\min_i \lambda_i^2}$ and the denominator is no less than $\frac{1}{\max_i \lambda_i^2}$. Plugging Equation~\eqref{tight} back into Inequality~\eqref{eq:proof_suboptimal_key_inequality} completes the proof.
\end{proof}

With the argument that $\DM$ is full column-rank, it holds that $\kappa(\DM) = \lambda_{\max}(\DM) / \lambda_{\min}(\DM)$ is as same as the square root of the condition number or the ratio between the largest and the smallest singular values of $\DM^T\DM$. Therefore, \Cref{thm:suboptimal} leads to the following results that there exists a mechanism that the seller can achieve at least $\IG(g^\DM;q,x)-\log(\kappa(\sellX))$ for any type $x$ which leads to what we need directly. Transforming $\DM$ back to $\sqrt{\Sigma(\DM)})^{-1}\DM$ ends our proof. {\hfill $\square$\par}

It means that for any mechanism if the seller's exceptional revenue is less than $\EE_x \IG(g^\DM;q,x)-\log(\kappa(\sqrt{\Sigma(\DM)})^{-1}\DM))$ or the seller's surplus is larger than $\log(\kappa(\sqrt{\Sigma(\DM)})^{-1}\DM))$, the seller will have the motivation to change to the SVD mechanism in order to gain more revenue. Therefore, since the seller is rational, \Cref{thm:surplusucb} holds.

We conclude \Cref{proof:surplus} with the following remarks. Firstly, we note that
there exists a sharper, yet harder to interpret bound, for \Cref{thm:suboptimal}. If we define \[f(\lambda_1,...,\lambda_d)=\max_y \frac{\sum_{i=1}^d \frac{1}{\lambda_i^2}y_i^2}{(\sum_{i=1}^d \frac{1+\lambda_i^2}{\lambda_i^2}y_i^2)(\sum_{i=1}^d \frac{1}{1+\lambda_i^2}y_i^2)},\]
then the SVD mechanism achieves revenue no less than $\IG(g^\DM;q,x)-\frac{1}{2}\log(f(\lambda_1,...,\lambda_d))$ for any buyer with private feature $x$, as Equation~\eqref{tight} is exactly bounded by the function. On the other hand, it means that the buyer can enjoy only $\frac{1}{2}\log(f(\lambda_1,...,\lambda_d))$ surplus at most. Besides, the condition number of the seller's feature matrix is commonly found in statistical learning literature and is closely related to concentration bounds in the Bayesian linear regression setting~\citep{wasserman2004all}, thus we believe the unfairness between seller and buyers characterized in~\Cref{thm:surplusucb} is easier to interpret.

\subsubsection{Proof of \Cref{thm:isotropic}}
Now, we are ready to show that the seller can achieve the first-best revenue even if the buyer reports his type later when the data is isotropic.

Since every singular value of $\DM^T\Sigma(\DM)^{-1}\DM$ is the same, it holds that $\kappa(\DM^T\Sigma(\DM)^{-1}\DM)=1$ with the definition of $\kappa(\cdot)$. Therefore, \Cref{thm:surplusucb} tells us that the buyer can only enjoy no more than $\log(\kappa(\DM^T\Sigma(\DM)^{-1}\DM))$ surplus which is 0 under this situation.{\hfill $\square$\par}


This once again demonstrates that the crucial factor is the buyers' willingness to pay, rather than their personal types. Given isotropic data, the seller possesses precise information regarding every buyer's willingness to pay. Accordingly, it is straightforward for her to implement price discrimination, which is manifested by the first-best revenue.

\subsubsection{Proof of \Cref{thm:mab}}
To better interpret the proposed SVD mechanism, we show that in the Mult-Armed Bandit (MAB) setting, it reduces to an intuitive mechanism with the first-best revenue. Consider a $d$-armed bandit setting where for any $j \in [d]$, arm $j$'s reward is a random variable independent of other arms' rewards.

The seller's data consists of measurements on each of the $d$-arms and the buyer's private type $x \in \RR^{d}$ is a member of the standard basis of the $d$-dimensional Euclidean space $\RR^{d}$, namely $x \in \cX = \{e_{1}, \ldots, e_{d}\}$. Similarly, the seller's dataset's features are all members of $d$-dimensional standard basis.

We can thus demonstrate that the SVD mechanism can be simplified to a straightforward and comprehensible mechanism, wherein the seller conveys to the buyer the average reward for the arm in question. It is not arduous to establish that this mechanism attains the first-best revenue, as the average reward constitutes the most accurate estimate for the reward of the buyer's queried arm, given the seller's dataset. We provide a formal articulation of this proposition in the ensuing lemma.

\begin{lemma}\label{thm:multiarm}
    In the MAB setting, the SVD mechanism reduces to the following 
    \begin{equation}    \label{eq:svd_mechanism_for_mab}
        \DM[\hat{x}] = \frac{1}{\sqrt{|\{x_{i}: x_{i} = \hat{x}\}|}}\sum_{d \in \{x_{i}: x_{i} = \hat{x}\}}d, \qquad t(\hat{x}) = \IG(g^{\DM[\hat{x}]};q,\hat x)=\IG(\hat x,\hat x),
    \end{equation}where we use $X[\cdot]$ to denote the part of the design matrix the seller will keep and the seller forms $\cD[\cdot]$ correspondingly.
    Moreover, the mechanism is IC, IR, and achieves the first-best revenue.
\end{lemma}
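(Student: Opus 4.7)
The plan is to compute the SVD of $\DM$ in closed form under the MAB assumption, substitute it into Algorithm~\ref{algo:suboptimal}, and verify that the resulting allocation and payment simplify exactly to~\eqref{eq:svd_mechanism_for_mab}. Once this reduction is established, IC and IR follow immediately from Lemmas~\ref{lem:IC} and~\ref{lem:IR}, and first-best revenue follows by matching the simplified payment against the closed form of $\IG(g^\DM; q, \hat x)$ from Corollary~\ref{lem:valueBLR}.

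After the algorithm's initial noise-normalization step, one can group the rows of $\DM$ by arm, letting $n_j = |\{i : x_i = e_j\}|$, so that $\DM^T\DM$ is diagonal. This immediately yields an SVD $\DM = U \begin{bmatrix} S \\ 0 \end{bmatrix} V$ with $V = I$, $S = \diag(\sqrt{n_1},\ldots,\sqrt{n_d})$, and the first $d$ columns of $U$ equal to the normalized arm-indicator vectors (the $j$-th such column has entry $1/\sqrt{n_j}$ in each row where $x_i = e_j$ and zero elsewhere). Plugging into Step~\ref{algo:inverse} gives a left-inverse $L$ whose $j$-th column has entries $1/n_j$ on arm $j$'s rows and zero elsewhere, so for $\hat x = e_j$ the normalized projection $b(e_j) = Le_j / \|Le_j\|$ has entries $1/\sqrt{n_j}$ in exactly those positions. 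A direct computation then gives $b(e_j)^T \DM = \sqrt{n_j}\, e_j^T$, matching $\DM[\hat x]$ in~\eqref{eq:svd_mechanism_for_mab}, while $b(e_j)^T Y = \tfrac{1}{\sqrt{n_j}} \sum_{i: x_i = e_j} y_i$ is, up to the known scalar $\sqrt{n_j}$, the sample mean of arm $j$'s observations---i.e., exactly the sufficient statistic an arm-$j$ buyer would want.

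For the payment, $\DM^T b(e_j) b(e_j)^T \DM = n_j\, e_j e_j^T$ is rank one, so $(I + n_j e_j e_j^T)^{-1}$ is diagonal with $\tfrac{1}{1+n_j}$ in position $j$ and $1$ elsewhere, giving $t(e_j) = \tfrac{1}{2}\log(1+n_j)$. On the other hand, Corollary~\ref{lem:valueBLR} applied directly to the full dataset $\DM$ (still using $\DM^T\DM = \diag(n_1,\ldots,n_d)$) gives $\IG(g^\DM; q, e_j) = \tfrac{1}{2}\log(1+n_j)$, so $t(\hat x) = \IG(g^\DM; q, \hat x)$ and first-best is attained; the same quantity equals $\IG(g^{\DM[\hat x]}; q, \hat x)$ because the single aggregated virtual observation $(\sqrt{n_j}e_j, b(e_j)^T Y)$ carries exactly the same information about $e_j^T \beta$ as the $n_j$ raw observations of arm $j$. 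The step I expect to need the most care is the bookkeeping around the paper's nonstandard SVD convention and the noise-normalization pre-step of Algorithm~\ref{algo:suboptimal}: I will need to verify that neither the zero block in $\begin{bmatrix} S \\ 0 \end{bmatrix}$ nor the freedom in the last $n-d$ columns of $U$ perturbs $b(\hat x)$, and that under per-arm homoscedastic noise the normalization acts arm-wise and hence preserves the orthogonal column structure exploited above.
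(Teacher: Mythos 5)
Your proposal is correct and follows essentially the same route as the paper: compute the SVD explicitly in the MAB setting ($V=I$, $S=\diag(\sqrt{n_1},\ldots,\sqrt{n_d})$, $U$'s leading columns the normalized arm indicators), read off $b(e_j)$ and the allocation/payment from Algorithm~\ref{algo:suboptimal}, and inherit IC and IR from Lemmas~\ref{lem:IC} and~\ref{lem:IR}. The only cosmetic difference is in the first-best step, where you verify $t(e_j)=\IG(g^\DM;q,e_j)=\tfrac{1}{2}\log(1+n_j)$ by direct diagonal computation, whereas the paper argues it via the chain rule of conditional entropy and the independence of the other arms' observations; both are valid and your closed-form check is, if anything, more concrete.
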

\begin{proof}
    We can manually perform SVD over $\DM$ to show that in this case, a suitable choice for $V$ is the $d$-dimensional identity matrix $I_d$ and $S = \diag\{\sqrt{|\{x_{i}: x_{i} = e_1\}|}, \ldots, \sqrt{|\{x_{i}: x_{i} = e_d\}|}\}$, namely, $S$ is the diagonal matrix where each diagonal entry corresponds the number of measurements on that particular arm. Here $U$ is a conforming matrix ensuring that $\DM = USV$, and the SVD mechanism reduces to Equation~\eqref{eq:svd_mechanism_for_mab}.
    
	We divide the rest of our proof into three parts, beginning with IR, then IC, and ending with showing that the mechanism achieves the first-best revenue.
    \paragraph{IR} The individual rationality is directly implied by \Cref{lem:IR}.

    \paragraph{IC} The incentive compatibility is directly implied by \Cref{lem:IC}.

    \paragraph{First-best revenue} We note that
    \[
        \IG(g^\DM;q,x) = h(f_q(x)) - h(f_p(x) | g^\DM) = h(f_q(x)) - h(f_p(x) | \cD[x]) = \IG( g^{\DM[x]};q,x)
    \]
    by chain rule of conditional entropy and the fact that any data point different from $x$ is a measurement of a random variable independent of the buyer's private $y$ which is the label corresponding to $x$, and hence leads to an information gain of zero. {Note that the posterior $f_p(x)$ may come from different datasets and we explicitly give the conditions $\DM$ and $\DM[x]$ to make it clarified.} From the inequality, we know that for any possible private type $x$, the information gain that the buyer obtains from receiving $\cD[x]$ equals the information gain that he obtains from observing the entire dataset, $\cD^X$. Moreover, when the buyer reports his type truthfully, he is charged exactly that amount. Thus, the revenue achieved by the mechanism is the highest among all IR (and potentially non-IC) mechanisms.
\end{proof}

As \Cref{thm:multiarm} shows, in the multi-armed bandits (MAB) setting, the SVD mechanism can achieve the first-best revenue which means that the consumer surplus is 0. Therefore, we finish our proof.  {\hfill $\square$\par}


This highlights a fundamental distinction between the data market and traditional markets for tangible goods. In the data market, the value of a dataset varies significantly among buyers. As in the multi-armed bandit setting, buyers derive no utility from obtaining data about a different arm. Consequently, buyers have no incentive to misreport their valuations, which enables the seller to effortlessly attain the highest possible revenue.

\end{document}